\newtheorem{theorem}{Theorem}
\newtheorem{lemma}[theorem]{Lemma}
\newtheorem{proposition}[theorem]{Proposition}
\numberwithin{equation}{section}
\numberwithin{theorem}{section}
\newcommand{\rr}{{\mathbb{R}}}
\newcommand{\nn}{{\mathbb{N}}}
\newcommand{\tr}{{\operatorname{Tr}\,}}
\newcommand{\beq}[1]{\begin{equation} \label{#1}}
	\newcommand{\eeq}{\end{equation}}
\renewcommand{\epsilon}{\varepsilon}
\newcommand{\spa}{\operatorname{span}}
\def\be{\begin{equation}}
	\def\ee{\end{equation}} 
\DeclareMathOperator{\spec}{spec}
\DeclareMathOperator{\dist}{dist}
\DeclareMathOperator{\gap}{gap}
\DeclareMathOperator{\ran}{ran}
\begin{document}
	
\title{ \raggedright  The Spectral Gap and Low-Energy Spectrum\\ in 
	Mean-Field Quantum Spin Systems}

	\author{Chokri Manai and Simone Warzel\\
	{\small Department of Mathematics}\\[-1ex]
	{\small Technische Universit\"at M\"unchen, Germany}}
	\date{\vspace{-.3in}}
	
	\maketitle

	\minisec{Abstract}
	A semiclassical analysis based on spin-coherent states is used to establish a classification and novel simple formulae for the spectral gap of mean-field spin Hamiltonians. For gapped systems we provide a full description of the low-energy spectra based on a  second-order approximation  to the semiclassical Hamiltonian hence justifying fluctuation theory at zero temperature for this case.  
	We also point out a shift caused by the spherical geometry in these second-order approximations. 
	
	\bigskip
	\noindent 
	\textbf{Keywords:}\quad semiclassical analysis, spin coherent states, spectral analysis, mean-field models\\[.5ex]
	\textbf{Subject Classification:}\quad 81Q10, 81Q20, 81R30
	
	\bigskip
	
\tableofcontents
\section{Introduction}

Mean-field quantum spin systems are ubiquitous in effective descriptions of a variety of phenomena. A popular example is the family of  Lipkin-Meshkov-Glick Hamiltonians, which were originally conceived in~\cite{LMG65-1,LMG65-2,LMG65-3}  to explain shape transitions in nuclei, but also feature in descriptions of Bosons in a double well and quantum-spin tunnelling in molecular magnets~\cite{Cirac:1998aa,MMBook14}. 
This family includes the 
quantum Curie-Weiss Hamiltonian, whose simplicity continues to draw the attention of many communities~\cite{BMT66,Ribeiro:2008aa,Chayes:2008aa,Landsman:2020aa,VGR20,Bjornberg:2020aa,Bulekov:2021aa}. In particular, in~\cite{Bapst_2012} such models were used to test conjectures related to quantum annealing for which information about the spectral gap is crucial. 

Most mean-field spin Hamiltonians in the literature are defined in terms of a non-commuting self-adjoint polynomial of the three components of the total spin-vector
$	 \mathbf{S} = \sum_{n=1}^N  \mathbf{S}(n) $.  
For a system of $ N $ interacting qubits,  the Hilbert space on which these operators act is the tensor product $ \mathcal{H}_N = \bigotimes_{n=1}^N \mathbb{C}^2 $.  
The vectors $ \mathbf{S}(n) = \mathbbm{1} \otimes \dots \otimes \ \mathbf{s} \ \otimes \dots \otimes \mathbbm{1}  $ stand for the natural lift of 
the spin vectors $  \mathbf{s}  = (s_x, s_y, s_y ) $ to the $ n $-th component of the tensor product. On each copy of $ \mathbb{C}^2 $, the spin vector coincides with the three generators  of $ SU(2) $:
\begin{equation*}
s_x = \frac{1}{2} \left( \begin{matrix} 0 & 1 \\ 1 & 0 \end{matrix}\right) , \quad s_y = \frac{1}{2} \left( \begin{matrix} 0 & -i \\ i & 0 \end{matrix}\right) , \quad s_z = \frac{1}{2} \left( \begin{matrix} 1 & 0 \\ 0 & -1 \end{matrix}\right) . 
\end{equation*}
By definition, a non-commuting self-adjoint polynomial $  \textrm{P}\Big(\tfrac{2}{N}   \mathbf{S} \Big) $ is a  finite linear combination of products of the  three rescaled components of the total spin-vector in which each product is averaged (corresponding to Weyl ordering) under the three components so that  it becomes a self-adjoint operator. This renders the associated mean-field Hamiltonian
\begin{equation}\label{eq:Ham}
	H = N \ \textrm{P}\Big(\tfrac{2}{N}   \mathbf{S} \Big) .
\end{equation}
self-adjoint on $ \mathcal{H}_N $. The dependence on the particle number is two-fold and quintessential for the mean-field nature. 
The scaling of the spin ensures that the operator is norm-bounded by one, $ \left\| \tfrac{2}{N}  S_\xi \right\| \leq 1 $ for all $ \xi= x,y, z $. Moreover, the prefactor $ N $ forces the energy $ H $  to be extensive.  

For example, the Lipkin-Meshkov-Glick  model is given by
$ P( \mathbf{m}) = -\alpha m_y^2 - \beta m_z^2 -\gamma m_x $ with $ \alpha,\beta , \gamma \in \mathbb{R} $. The special case $ \alpha = 0 $, $ \beta= 1 $ corresponds to the quantum Curie-Weiss model with $ \gamma $ playing the role of the transversal,  external magnetic field. In~\cite{Bapst_2012}, the $p$-spin generalization of the Curie-Weiss model has been considered, for which  $  P( \mathbf{m}) = - \beta m_z^p -\gamma m_x $ with $ p \in \mathbb{N} $. In these examples, the monomials do not involve mixed products, which makes the Weyl-ordering obsolete. To illustrate this issue, consider $ P( \mathbf{m}) = m_x m_y $, which then results in $ \textrm{P}\big(2   \mathbf{S} /N \big) =2 \big(  \mathbf{S}_x  \mathbf{S}_y +  \mathbf{S}_y  \mathbf{S}_x \big) / N^2$.  

Since $ H $ in~\eqref{eq:Ham} is a function of the total spin $  \mathbf{S}  $, it is block diagonal with respect to the decomposition of the  tensor-product Hilbert space according to the irreducible representations of the total spin corresponding to the eigenspaces of  $ \mathbf{S}^2 $ with eigenvalues $ J(J+1) $, i.e.
\begin{equation}\label{def:blockH}
\mathcal{H}_N \equiv  \bigoplus_{J=\frac{N}{2} - \lfloor \frac{N}{2}\rfloor}^{N/2} \bigoplus_{\alpha=1}^{M_{N,J}}  \; \mathbb{C}^{2J+1} , \qquad M_{N,J}  = \frac{2J+1}{N+1} \binom{N+1}{\frac{N}{2} +J + 1 } . 
\end{equation}
The total spin $ J $ of $ N $ qubits can take any value from $ N/2 $ down in integers to either $ 1/2 $ if $ N $ is odd or $ 0 $ if $ N $ is even. The degeneracy of the representation of spin $ J $ in this decomposition is $ M_{N,J}  $ \cite{Mihailov:1977aa}. 
On each block $ (J, \alpha) $, the Hamiltonian~\eqref{eq:Ham} then acts as the given polynomial of the generators of the irreducible representation of $ SU(2) $ on $ \mathbb{C}^{2J+1} $.

The analysis of such systems in the limit of large spin quantum number $ J$ is known \cite{Lieb73,HeppLieb73,MN04,MN05,Biskup:2007aa} to be facilitated by 
Bloch coherent states on the Hilbert space $  \mathbb{C}^{2J+1}  $. They are parametrized by an angle $\Omega =(\theta, \varphi) $ on the unit sphere $ S^2 $ with $ 0\leq \theta \leq \pi $,\; $ 0 \leq \varphi \leq 2 \pi $. In bra-ket-notation, which we will use in this paper, the Bloch-coherent states are given by
\begin{equation}\label{def:cs}
\big| \Omega, J \rangle := U(\theta,\varphi) \  \big| J \rangle , \qquad U(\varphi,\theta) :=  \exp\left( \frac{\theta}{2} \left( e^{i\varphi} S_- - e^{-i\varphi} S_+ \right) \right)  
\end{equation} 
The reference vector  $  \big| J \rangle \in  \mathbb{C}^{2J+1}  $ is the normalized eigenstate of the $ z $-component of the spin corresponding to (maximal) eigenvalue $ J $ on the Hilbert space $ \mathbb{C}^{2J+1} $. The operators $ S_\pm = S_x \pm i S_y $ are the spin raising and lowering operators of the irreducible representation of $ SU(2) $ on $  \mathbb{C}^{2J+1} $. 

Bloch coherent states have many remarkable features. First and foremost, they form an overcomplete set of vectors as expressed through the resolution of unity on  $  \mathbb{C}^{2J+1} $:
\begin{equation}\label{eq:completeness}
 \frac{2J+1}{4\pi} \int \big| \Omega, J \rangle \langle \Omega, J \big|  \ d\Omega  = \mathbbm{1}_{  \mathbb{C}^{2J+1}} .
\end{equation}
Every linear operator $  G $ on $  \mathbb{C}^{2J+1} $ is associated with a lower and upper symbol. The lower symbol is  $ G(\Omega,J) :=  \langle \Omega , J \big|  G  \big| \Omega, J \rangle $, and the upper symbol is characterized  through the property
\begin{equation}
 G = \frac{2J+1}{4\pi} \int  g(\Omega,J) \,  \big| \Omega, J \rangle \langle \Omega, J \big| \, d\Omega  .
 \end{equation}
The choice of $ g $ is not unique. E.g.\ through an explicit expression~\cite{Kutzner:1973rz}, one sees that there is always an arbitrarily often differentiable choice,  $ g(\cdot, J) \in C^\infty(S^2)  $.  
More properties of coherent states are collected in Appendix~\ref{app:cs}; see also~\cite{ACGT72,Per86,Gaz09,CR12}.

\subsection{Semiclassics for the free energy} 
The lower and upper symbol feature prominently in Berezin and Lieb's semiclassical bounds~\cite{Ber72,Lieb73,Simon:1980} on the partition function associated with a self-adjoint Hamiltonian $ G $ on $ \mathbb{C}^{2J+1} $:
\begin{equation}\label{eq:BerezinLieb}
 \frac{2J+1}{4\pi} \int e^{-\beta G(\Omega,J)} d\Omega \leq  \tr_{\mathbb{C}^{2J+1}}  e^{-\beta G } \leq \frac{2J+1}{4\pi} \int e^{-\beta g(\Omega,J)} d\Omega  . 
 \end{equation}
 In the semiclassical limit of large spin quantum number $ J $, these bounds are known to asymptotically coincide \cite{Lieb73,Simon:1980,Duffield:1990aa}. 
In the same spirit, for any polynomial of the spin operator as in~\eqref{eq:Ham} restricted to $  \mathbb{C}^{2J+1} $ both the upper and lower symbols agree to leading order in $ N $ with the corresponding classical polynomial function on the unit ball $ B_1 $, which parametrises the Hilbert space~\eqref{def:blockH} semiclassically. Using spherical coordinates $ \mathbf{e}(\Omega) = \left(\sin\theta \cos\varphi ,  \sin\theta \sin\varphi , \cos\theta \right) \in S^2 $, one has 
\begin{equation}\label{eq:uppersymbclasss}
 \sup_{0 \leq J \leq N/2} \left\|  \textrm{P}\Big(\tfrac{2}{N}  \mathbf{S} \Big)\Big|_{  \mathbb{C}^{2J+1}} -  \frac{2J+1}{4\pi} \int    \textrm{P}\Big(\tfrac{2J}{N} \mathbf{e}(\Omega)  \Big)  \big| \Omega, J \rangle \langle \Omega, J \big| \, d\Omega  \right\| \leq \mathcal{O}(N^{-1}) 
\end{equation}
for the operator norm $ \| \cdot \| $ on $  \mathbb{C}^{2J+1} $. We use the Landau $ \mathcal{O} $-notation, i.e., the error on the right is bounded by $ C N^{-1} $ with a constant $ C $ which only depends on the coefficients of the polynomial. This statement is a quantitative version of Duffield's theorem~\cite{Duffield:1990aa}. Since it is hard to locate general quantitative error estimates, we include a proof as Proposition~\ref{prop:Druff1} in the Appendix. 
As is recalled in Proposition~\ref{prop:Druff2},  the lower symbol then shares the same classical asymptotics
$$  \sup_{0 \leq J \leq N/2} \sup_{\Omega} \left| \langle \Omega , J \big|   \textrm{P}\big(\tfrac{2}{N}  \mathbf{S} \big)\Big|_{  \mathbb{C}^{2J+1}}  \big| \Omega, J \rangle -    \textrm{P}\Big(\tfrac{2J}{N} \mathbf{e}(\Omega)  \Big)  \right| \leq  \mathcal{O}(N^{-1})  .
$$
The Berezin-Lieb inequalities~\eqref{eq:BerezinLieb} immediately imply that the free energy of~\eqref{eq:Ham} is determined by minimizing a variational functional involving the classical energy $  \textrm{P} $ on the unit ball and the (shifted) binary entropy 
$$ I(r)  := \begin{cases} - \frac{1+r}{2} \ln\frac{1+r}{2} - \frac{1-r}{2} \ln  \frac{1-r}{2} , & \qquad r \in (0,1) .\\
\qquad\qquad 0 , & \qquad r = 1, \\
\qquad\quad\ln 2 , &  \qquad r = 0 . 
\end{cases} $$
A straightforward, rigorous saddle-point evaluation, which we spell out in the proof of  Proposition~\ref{prop:free2} -- a generalisation to mean-field models with regular symbols  -- yields the pressure as a min-max variational principle on $ B_1 $. 
\begin{proposition}
For a mean-field Hamiltonian $ H = N \ \textrm{P}\Big(\tfrac{2}{N}   \mathbf{S} \Big) $ with a non-commuting self-adjoint polynomial $ \textrm{P} $, the pressure for any $ \beta > 0 $ is given by:
\begin{equation}\label{eq:freee}
p(\beta) := \lim_{N\to \infty} N^{-1} \ln \tr \exp\left( -\beta N  \textrm{P}\Big(\tfrac{2}{N}  \mathbf{S} \Big) \right) = \max_{r \in [0,1]} \left\{ I(r) - \beta \min_{\Omega \in S^2}   \textrm{P}\left(r\mathbf{e}(\Omega) \right)  \right\} .
\end{equation}
\end{proposition}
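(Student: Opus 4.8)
The plan is to derive the pressure formula from the Berezin--Lieb inequalities \eqref{eq:BerezinLieb} applied block-by-block in the decomposition \eqref{def:blockH}, combined with the classical asymptotics \eqref{eq:uppersymbclasss} and its lower-symbol counterpart, and a saddle-point analysis of the resulting finite-dimensional integrals. First I would write the trace as a sum over the spin sectors,
\[
\tr \exp\!\left(-\beta N \textrm{P}\big(\tfrac{2}{N}\mathbf{S}\big)\right) = \sum_{J} M_{N,J} \, \tr_{\mathbb{C}^{2J+1}} \exp\!\left(-\beta H_J\right), \qquad H_J := N\,\textrm{P}\big(\tfrac{2}{N}\mathbf{S}\big)\big|_{\mathbb{C}^{2J+1}},
\]
so that $N^{-1}\ln\tr e^{-\beta H}$ is, up to an $O(N^{-1}\ln N)$ error from the number of sectors, the maximum over $J$ of $N^{-1}\ln M_{N,J} + N^{-1}\ln\tr_{\mathbb{C}^{2J+1}}e^{-\beta H_J}$. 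The combinatorial prefactor is controlled by Stirling's formula: writing $J = rN/2$ with $r\in[0,1]$ one gets $N^{-1}\ln M_{N,J} \to I(r)$ uniformly, which is exactly where the shifted binary entropy enters.

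Next I would handle the sector partition function $\tr_{\mathbb{C}^{2J+1}}e^{-\beta H_J}$ via \eqref{eq:BerezinLieb}. The upper bound uses the $C^\infty$ upper symbol $g_J$ of $H_J$ and the lower bound uses the lower symbol $G_J(\Omega)=\langle\Omega,J|H_J|\Omega,J\rangle$; by \eqref{eq:uppersymbclasss} and Proposition~\ref{prop:Druff2} both equal $N\,\textrm{P}\big(\tfrac{2J}{N}\mathbf{e}(\Omega)\big)$ up to an $O(1)$ error uniformly in $\Omega$ and $J$. Hence
\[
\tr_{\mathbb{C}^{2J+1}} e^{-\beta H_J} = \frac{2J+1}{4\pi}\int_{S^2} \exp\!\left(-\beta N\,\textrm{P}\big(\tfrac{2J}{N}\mathbf{e}(\Omega)\big) + O(1)\right) d\Omega,
\]
and Laplace's method on the sphere gives $N^{-1}\ln\tr_{\mathbb{C}^{2J+1}}e^{-\beta H_J} \to -\beta \min_{\Omega\in S^2}\textrm{P}\big(r\mathbf{e}(\Omega)\big)$ when $J=rN/2$; the polynomial factor $2J+1$ and the $O(1)$ errors only contribute $O(N^{-1}\ln N)$. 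Combining the two steps, $N^{-1}\ln\tr e^{-\beta H}$ is squeezed between sequences both converging to $\max_{r\in[0,1]}\{I(r) - \beta\min_{\Omega}\textrm{P}(r\mathbf{e}(\Omega))\}$, which is \eqref{eq:freee}.

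The main technical obstacle is making the interchange of the $\max_J$ with the limit fully rigorous and uniform: one must check that the error terms $O(1)$ in the symbol estimates and the Stirling corrections are genuinely uniform in $J$ (including the boundary cases $J$ small, where $r\to 0$ and $I(r)\to\ln 2$, and $J=N/2$, where $r=1$ and $I(1)=0$), and that the function $r\mapsto \min_\Omega \textrm{P}(r\mathbf{e}(\Omega))$ is continuous so that the discretization $r\in\{2J/N\}$ converges to the continuum maximum. Continuity of $r\mapsto\min_\Omega\textrm{P}(r\mathbf{e}(\Omega))$ follows from joint continuity of $\textrm{P}$ and compactness of $S^2$; the uniformity of the semiclassical errors is exactly the content of \eqref{eq:uppersymbclasss} and Proposition~\ref{prop:Druff2}, which are quantitative in $J$. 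Since the excerpt advertises this as a generalization to regular symbols carried out in the proof of Proposition~\ref{prop:free2}, I would phrase the argument so that only boundedness and a fixed number of derivatives of the symbol are used, and note that the polynomial case is then immediate. Everything else is a routine saddle-point computation.
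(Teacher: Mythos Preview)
Your proposal is correct and follows essentially the same route as the paper: block decomposition~\eqref{def:blockH}, Berezin--Lieb~\eqref{eq:BerezinLieb} together with the symbol asymptotics from~\eqref{eq:uppersymbclasss} and Proposition~\ref{prop:Druff2} to reduce each sector trace to a spherical integral, Laplace evaluation of that integral, and Stirling for $M_{N,J}$. The paper's only cosmetic difference is that it first replaces $H_{J,\alpha}$ by the operator with upper symbol $N\,\textrm{P}\big(\tfrac{2J}{N}\mathbf{e}(\Omega)\big)$ via~\eqref{eq:uppersymbclasss} and then applies Berezin--Lieb to that operator, rather than applying Berezin--Lieb to $H_J$ and invoking the symbol asymptotics afterwards; the net estimates are identical.
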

As a special case with constant field $ P(\mathbf{m}) = -\lambda  m_z  $,  $ \lambda \geq 0 $, one obtains the Legendre relation 
$$
\ln 2 \cosh\left( \beta \lambda \right)= \max_{r \in [0,1]} \left[ I(r) +  \beta \lambda r\right] .
$$
By inverting this Legendre transform, $ I(r) =   \min_{\lambda \geq 0 }  \left[ \ln 2 \cosh\left( \beta \lambda \right) -  \beta \lambda r\right]  $, one may rewrite~\eqref{eq:freee} in the slightly more familiar form
$$
p(\beta)  =  \max_{r \in [0,1]}  \min_{\lambda \geq 0 } \left\{\ln 2 \cosh\left(\beta \lambda \right) -   \beta \left( \min_{\Omega \in S^2}   \textrm{P}\left(r\mathbf{e}(\Omega) \right) + \lambda r \right)   \right\} .
$$

Investigations of the free energy have been of great interests over many decades~\cite{BMT66,Lieb73,FSV80,Cega:1988aa,RagWer89,Duffield:1990aa,Chayes:2008aa,Ven:2020aa}. They are however not the main focus of this paper. We therefore conclude this topic with only one brief comment on the literature.

Among the numerous results, it is worth mentioning that alternatively to the sketched approach via the Berezin-Lieb inequalities, the formula~\eqref{eq:freee} may be derived exploiting the exchange symmetry of the mean-field Hamiltonian using a version of the quantum de Finetti theorem.  This road was taken by Fannes, Spohn and Verbeure~\cite{FSV80}. 
Their result essentially covers all Hamiltonians $ H = \sum_{p=1}^m A^{(p)} $ on $  \mathcal{H}_N  $ with exchange-symmetric $ p $-spin interactions $ A^{(p)} $ and yields 
$$
p(\beta) = \sup_\varrho \left[ E(\varrho) - \beta \sum_{p=1}^m \tr A^{(p)} \varrho^{\otimes p}    \right] .
$$
For a mean-field system of $  N $ spin-$1/2 $, 
the supremum is over states which are parametrized by the Bloch sphere, $ \varrho = \frac{1}{2} \mathbbm{1}_{\mathbb{C}^2}  - r\mathbf{e}(\Omega) \cdot   \mathbf{S} $, and whose entropy is $ E(\varrho) = - \tr \varrho \ln \varrho = I(r) $.  
As an aside, we  note that this powerful de-Finetti-approach generalizes from spin-$1/2 $ to spin-$ s $~\cite{FSV80}, and then covers the results on the free energy obtained in \cite{Bjo16,BjRR22}. E.g.\ for the exchange Hamiltonian $ T (\psi \otimes \phi) \coloneqq \phi \otimes \psi$ on $\mathbbm{C}^{d}$ for which
$\tr T  \varrho \otimes \varrho = \sum_{i,j=1}^d \lambda_i \lambda_j \tr T |u_i \rangle \langle u_i | \otimes |u_j \rangle \langle u_j | = \sum_{i=1}^d \lambda_i^2 $, one immediately gets  $p(\beta) = \sup_{\mathbf{\lambda} \in \Delta^{d}} \sum_{i=1}^d(- \lambda_i \ln \lambda_i  -\beta \lambda_i^2 )$, where $ \mathbf{\lambda} = (\lambda_1,\dots , \lambda_d) \in \Delta^{d} $ is the vector of eigenvalues of the state $ \varrho $ on $\mathbbm{C}^{d}$.

\subsection{Spectral gap from semiclassics plus fluctuations}

The main result of this paper is a simple quasi-classical explanation and formulae  for the low-energy part of the spectrum of a self-adjoint mean-field operator $ H $ as in~\eqref{eq:Ham} in the limit $ N \to \infty $. We will denote  by 
$ E_0(H) \leq E_1(H) \leq E_2(H) \leq \dots $ the ordered sequence of  its eigenvalues counted with multiplicities. In particular, the existence and leading asymptotic value of the spectral gap 
$$
\gap H = E_1(H) - E_0(H) 
$$
can be read of from the location of the minimum $ \mathbf{m}_0  $ of the polynomial $ P: \mathbb{R}^3 \to \mathbb{R} $ restricted to the unit ball $ B_1 $.  
In case the minimum is unique and located on the surface $ S^2 $, the operator~\eqref{eq:Ham} generically has a spectral gap. To leading order in $ N $  the value of this gap is in fact completely determined by the coefficients of the quadratic polynomial which is uniquely associated with $ P $. In view of the notorious difficulty of determining the spectral gap in quantum lattice systems \cite{Haldane:1983aa,Affleck:1986aa,Affleck:1989aa,KNS01}, this simplicity might be somewhat surprising. 

Broadly speaking, our results are in accordance with the general belief of fluctuation theory that the second-order approximation to $ P $, which involves the gradient $ \nabla P( \mathbf{m}_0 ) $ and the Hessian $ D_P( \mathbf{m}_0 )= \left( \partial_j \partial_k P( \mathbf{m}_0 ) \right)_{j,k=1}^3 $, yields the description of the low-energy spectra. Related statements have been proven in the context of mean-field Bose systems (see e.g.~\cite{Grech:2013bh,Lewin:2015ay}).
For the precise formulation of such a result for quantum spin systems and in order to point out a subtlety caused by the geometry, we need some basic geometric facts on functions on $ B_1 $. 

If $  \mathbf{m}_0 \in S^2 $ is a minimum of $ P $ on $ B_1 $, the gradient  either vanishes or points towards the center of the ball, $ \nabla P( \mathbf{m}_0 ) = -  | \nabla P( \mathbf{m}_0 ) |  \ \mathbf{m}_0 $. 
The quadratic approximation of the polynomial is then given by $ D_P( \mathbf{m}_0 ) $ projected on the directions perpendicular to $ \mathbf{m}_0 $. In terms of the normalized directional vector 
$$
\mathbf{e}_{\mathbf{m}_0} = \frac{\mathbf{m}_0}{| \mathbf{m}_0| } , \qquad \mbox{we set } \qquad Q_\perp  := \mathbbm{1}_{\mathbb{R}^3} - \mathbf{e}_{\mathbf{m}_0}^T \mathbf{e}_{\mathbf{m}_0}   ,
$$ 
which is understood as a linear projection map on $ \mathbb{R}^3 $. 
Introducing a local chart $ \Phi : \mathbb{R}^2 \to T_{\mathbf{m}_0} S^2 $, the linear map on $ \ran Q_\perp \equiv T_{\mathbf{m}_0} S^2 $ given by
\begin{equation}\label{def:perpHessian}
 D_P^{\perp}(\mathbf{m}_0) :=  Q_\perp D_P( \mathbf{m}_0) Q_\perp + | \nabla P( \mathbf{m}_0 ) | \ Q_\perp 
\end{equation}
is then the quadratic approximation to $ P \circ \Phi $ at $ \mathbf{m}_0 $. The shift of the Hessian  in cartesian coordinates by the norm of the gradient $|\nabla P( \mathbf{m}_0 ) |$ is thus an effect of the constraint due to the spherical geometry.

\begin{theorem}\label{thm:pol}
Let $ H $ be a self-adjoint operator on $ \mathcal{H}_N $ of the form~\eqref{eq:Ham} with a non-commuting self-adjoint polynomial $ P: \mathbb{R}^3 \to \mathbb{R} $. Suppose that the minimum of $ P $ restricted to the unit ball $ B_1 $ is unique and located at a point $ \mathbf{m}_0 \in S^2 $ on the unit sphere. Then,
\begin{equation}\label{eq:gappol}
\gap H =   2 \min \left\{ |\nabla P( \mathbf{m}_0 ) |, \sqrt{\det   D_P^{\perp}(\mathbf{m}_0)} \right\}+o(1).
\end{equation}
is the spectral gap above the unique ground state in case the rhs is strictly positive. 
\end{theorem}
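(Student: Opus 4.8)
The plan is to reduce the problem to a single irreducible block $\mathbb{C}^{2J+1}$ with $J$ close to the maximal value $N/2$, and then perform a semiclassical analysis around the minimizer $\mathbf{m}_0$. First I would argue that the ground state and first excited state live (asymptotically) in the top block $J = N/2$. This uses \eqref{eq:uppersymbclasss} together with the Berezin--Lieb philosophy: on a block with spin quantum number $J$, the operator is, up to $\mathcal{O}(N^{-1})$, the quantization of $\mathrm{P}(r\,\mathbf{e}(\Omega))$ with $r = 2J/N \in [0,1]$, so its bottom eigenvalue is at least $\min_\Omega \mathrm{P}(r\mathbf{e}(\Omega)) - \mathcal{O}(N^{-1})$. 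Since by assumption the global minimum of $\mathrm{P}$ on $B_1$ is attained uniquely at $\mathbf{m}_0 \in S^2$, i.e.\ at $r=1$, every block with $2J/N$ bounded away from $1$ has ground-state energy bounded away (by a positive constant) from $\min_{B_1}\mathrm{P}$, hence contributes nothing to the two lowest eigenvalues. For blocks with $2J/N \to 1$ but $J \ne N/2$ one needs a slightly more careful comparison: here the multiplicities $M_{N,J}$ are large, but a uniform lower bound on the block ground-state energy of the form $\min_{B_1}\mathrm{P} + c\,(1 - 2J/N) - \mathcal{O}(N^{-1})$ suffices to push them above $E_1$ of the top block, which sits only $\mathcal{O}(N^{-1})$ above $E_0$.

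Second, having localized to $\mathbb{C}^{N+1}$ (spin $J=N/2$), I would set up a harmonic approximation of the Hamiltonian near $\mathbf{m}_0$. Using the coherent-state lower symbol $P(\Omega, N/2) = \mathrm{P}(\mathbf{e}(\Omega)) + \mathcal{O}(N^{-1})$ and a local chart $\Phi$ centered at $\mathbf{m}_0$, the classical Hamiltonian restricted to $S^2$ behaves like $\mathrm{P}(\mathbf{m}_0) + \tfrac12 \langle x, D_P^\perp(\mathbf{m}_0) x\rangle + \dots$ near $x = 0$, with the gradient-shift term built in as explained before \eqref{def:perpHessian}. The quantization of a quadratic form on the sphere near a nondegenerate minimum, in the large-$J$ limit, is (after rescaling fluctuations by $J^{-1/2}$) a two-dimensional quantum harmonic oscillator: the conjugate pair comes from the two tangent directions $T_{\mathbf{m}_0}S^2$, with commutator $\propto 1/J \sim 2/N$ playing the role of $\hbar$. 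The gap of such an oscillator with ``mass-less'' quadratic form $D_P^\perp(\mathbf{m}_0)$ is $\sqrt{\det D_P^\perp(\mathbf{m}_0)}$ times the symplectic normalization, and after restoring the factor $N$ in $H = N\mathrm{P}(2\mathbf{S}/N)$ and the $2/N$ from $[S_x,S_y]=iS_z$ with $S_z \approx N/2$, one obtains exactly $2\sqrt{\det D_P^\perp(\mathbf{m}_0)}$ for the first oscillator gap. This is the generic term in \eqref{eq:gappol}.

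Third, I would explain the competing term $2|\nabla \mathrm{P}(\mathbf{m}_0)|$ and the $\min$. The harmonic oscillator picture is valid only as long as the quadratic approximation controls $H$ within the relevant energy window; the radial (towards-the-center) direction is not a genuine oscillator degree of freedom but a \emph{boundary} direction, since $r$ cannot exceed $1$. The effective one-sided ``confinement'' in the radial variable has slope $|\nabla\mathrm{P}(\mathbf{m}_0)|$, and the cost of the lowest radial excitation — essentially moving from $r=1$ inward by one fluctuation unit — is $2|\nabla\mathrm{P}(\mathbf{m}_0)| + o(1)$. When $|\nabla\mathrm{P}(\mathbf{m}_0)|$ is smaller than $\sqrt{\det D_P^\perp(\mathbf{m}_0)}$, this radial (half-line) excitation lies below the first tangential oscillator state and becomes $E_1$; otherwise the tangential oscillator wins. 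Making this dichotomy rigorous is the technical heart: one needs matching upper bounds (by plugging suitable coherent-state / oscillator trial vectors into the Rayleigh quotient — a finite linear combination of $\ket{\Omega, J}$ near $\mathbf{m}_0$, respectively a boundary-shifted state) and lower bounds (an IMS-type localization to a neighborhood of $\mathbf{m}_0$, a second-order Taylor estimate of the symbols with uniform control of the cubic remainder on that scale, and a comparison with the model oscillator-on-a-half-space operator).

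The step I expect to be the main obstacle is the \emph{uniform lower bound} matching $2\min\{|\nabla\mathrm{P}(\mathbf{m}_0)|, \sqrt{\det D_P^\perp(\mathbf{m}_0)}\}$: one must show that no low-energy state can ``cheat'' by spreading out over a region where the quadratic approximation is invalid, and simultaneously handle the interplay between the tangential oscillator and the radial half-line constraint without double-counting or missing a channel. Controlling the error terms in \eqref{eq:uppersymbclasss} on the fluctuation scale $\delta \sim N^{-1/2}$ (so that cubic terms are $o(N^{-1} \cdot N) = o(1)$ in the energy) and upgrading the scalar estimate of Proposition~\ref{prop:Druff1} to a localized operator inequality near $\mathbf{m}_0$ is where the real work lies; the oscillator computation itself and the reduction to the top block are comparatively routine.
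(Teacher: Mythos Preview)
Your plan contains the right semiclassical picture for the tangential directions, but there is a genuine inconsistency between your first and third steps that, as written, breaks the argument.

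In step~1 you claim that a lower bound of the form $\min_{B_1} P + c(1-2J/N) - \mathcal{O}(N^{-1})$ on the per-particle ground-state energy of the block with spin $J$ ``suffices to push them above $E_1$ of the top block''. This is false precisely in the regime where the theorem is interesting. After multiplying by $N$, the ground state of the block $J = N/2 - 1$ sits at $N\min P + 2c + \mathcal{O}(1)$ with the sharp constant $c = |\nabla P(\mathbf{m}_0)|$, while $E_1$ of the top block sits at $E_0 + 2\sqrt{\det D_P^\perp(\mathbf{m}_0)} + o(1)$. When $|\nabla P(\mathbf{m}_0)| < \sqrt{\det D_P^\perp(\mathbf{m}_0)}$, the $J = N/2-1$ block is \emph{not} pushed above $E_1$ of the top block; rather, its ground state \emph{is} the first excited state of $H$. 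So the localization to $J=N/2$ alone cannot produce the $\min$ in~\eqref{eq:gappol}.

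Your step~3 heuristic about a ``radial half-line excitation'' with cost $2|\nabla P(\mathbf{m}_0)|$ is in fact correct, but you have misidentified where it lives: within a fixed block $\mathbb{C}^{2J+1}$ the classical phase space is a \emph{sphere}, and there is no radial degree of freedom at all. The radial variable $r = 2J/N$ is the block label, and ``moving inward by one fluctuation unit'' means passing from $J = N/2$ to $J = N/2 - 1$. Hence the radial channel and the block reduction are the same thing, and step~1 must retain all blocks with $N/2 - J = \mathcal{O}(1)$ rather than discard them. The paper does exactly this: it analyses every block with $J \geq N/2 - K_N$ (with $K_N \to \infty$ slowly), shows via a Schur-complement argument that the low-lying eigenvalues of $H_{N/2-k,\alpha}$ are asymptotically $N h(\mathbf{m}_0) + \kappa + (2k-1)|\nabla h(\mathbf{m}_0)| + (2m+1)\sqrt{\det D_h^\perp(\mathbf{m}_0)}$ for $k,m \in \mathbb{N}_0$, and then reads off the gap by comparing $(k,m)=(0,1)$ against $(k,m)=(1,0)$. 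There is no ``oscillator-on-a-half-space'' model operator; the two channels decouple cleanly because they live in orthogonal $J$-blocks.

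Once you correct step~1 to keep the nearby blocks and recognise that the linear term $-|\nabla P(\mathbf{m}_0)|(N - 2\,\mathbf{m}_0\cdot\mathbf{S})$ simply evaluates to $(2k-1)|\nabla P(\mathbf{m}_0)| + o(1)$ on $\mathcal{H}^{K_N}_{N/2-k}(\mathbf{m}_0)$, your harmonic-oscillator analysis in step~2 goes through essentially as in the paper, and the lower-bound difficulty you flag in the last paragraph is handled not by IMS localization but by the Schur complement of Proposition~\ref{prop:Schur} together with the exponential decay~\eqref{eq:expdecay2} of the oscillator eigenfunctions.
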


This theorem is a special case of Theorem~\ref{thm:unimin}, which deals with mean-field spin Hamiltonians with more general regular  symbols than just polynomials. Theorem~\ref{thm:unimin} also has the corresponding description of the leading asymptotics of the entire low-energy spectrum in terms of fluctuation operators in directions $ \ran Q_\perp $. By Theorem~\ref{thm:polass} these more detailed results also apply to the polynomial case. 
To the best of our knowledge, all these results are entirely new in the mathematics literature. Previously, the spectrum $ \spec H $ of mean-field operators as in~\eqref{eq:Ham} has been identified only on order $ N $. In~\cite[Cor.~2.2]{Ven:2020aa}  it is shown that 
\begin{equation}\label{eq:VV20}
\lim_{N\to \infty}  \dist\left( N^{-1} \spec H , \ran P \right) = 0 ,
\end{equation}
where $ \ran P $ is the range of the function $ P: B_1 \to \mathbb{R} $. \\

To demonstrate the applicability of Theorem~\ref{thm:pol}, we consider the quantum Curie-Weiss  Hamiltonian $H = -\frac{4}{N} S_x^2 - 2 \gamma S_z$, which corresponds to the choice $P(\mathbf{m})  =  -m_x^2-\gamma m_z$. 
		The gradient and Hessian in cartesian coordinates are given by 
		\[\nabla P(\mathbf{m}) = \begin{bmatrix}
			-2 m_y \\
			0 \\
			-\gamma 
		\end{bmatrix}, \quad D_P(\mathbf{m}) = \begin{bmatrix}
		-2 & 0 & 0\\
		0 & 0& 0\\
		0 & 0 & 0
	\end{bmatrix} \]
	In the paramagnetic phase $\gamma > 2$, $ P $ has only one minimum on $ B_1 $ at $\mathbf{m}_0 = (0, 0, 1)^T$. The eigenvalues of the orthogonal part $  D_P^{\perp}(\mathbf{m}_0) $ of the Hessian are  $\omega_1 = -2 + \gamma $, $ \omega_2 =  \gamma $ and $|\nabla P( \mathbf{m}_0 )| = \gamma $.  Theorem~\ref{thm:pol} then yields: 
		$$ \gap H = 2 \sqrt{\gamma(\gamma-2)} + o(1) . $$  
The gap closes like a square root  close to the critical point $\gamma =2$. This is the end-point (at $ \beta = \infty $) of the critical line which separates the ferromagnetic phase of the quantum Curie-Weiss model at low temperatures from the paramagnetic phase at high temperature or large transversal field (cf.~\cite{Bapst_2012}).

		In the ferromagnetic phase $|\gamma| < 2$, minima of $ P $ are found at $ \mathbf{m}_0^\pm =( \pm \sqrt{1-\gamma^2/4}, 0 , \gamma/2)^T$. The eigenvalues of the orthogonal part $  D_P^{\perp}(\mathbf{m}_0^\pm) $ of the Hessian are  $\omega_1 =  2 ,\omega_2 = 2 (1-\gamma^2/4) $ and $|\nabla P( \mathbf{m}_0^\pm )| =2$. If we ignore the degeneracy of these two minima for the moment and pretend that only the positive solution $  \mathbf{m}_0^+ $ exists, this leads to the expression 
$ 4 \sqrt{1-\gamma^2/4} $ in the right side of~\eqref{eq:gappol}.
		As will be explained below Theorem~\ref{thm:mmin}, the gap vanishes in this phase due to the degeneracy of the two minima. What is calculated here is in fact the gap between the second excited state and the ground state.

\section{Low energy spectra for operators with regular symbols}

Our result on the gap  and the  low-energy spectrum applies to a more general class of mean-field quantum spin Hamiltonians 
than just non-commutative self-adjoint polynomials of the total spin. Next, we describe this class, which involves operators defined via their upper symbols. Besides the quest for generality, the semiclassical methods used in the proof of our results for Hamiltonians as in~\eqref{eq:Ham}  already naturally  brings up such operators with more general symbols.

\subsection{Assumptions}\label{sec:ass}
\begin{description}
\item[A0]  We assume that 
$ H $ is block diagonal with respect to the orthogonal decomposition \eqref{def:blockH} of $ \mathcal{H}_N $,
\begin{equation}\label{eq:Hblock}
H =  \bigoplus_{J=\frac{N}{2} - \lfloor \frac{N}{2}\rfloor}^{N/2} \bigoplus_{\alpha=1}^{M_J}  H_{J,\alpha} 
\end{equation}
 with self-adjoint blocks $  H_{J,\alpha}  $ acting on a copy of $ \mathbbm{C}^{2J+1} $. 
 Moreover, there is a twice continuously differentiable symbol $ h: B_1 \to \mathbb{R} $ such that 
all blocks are uniformly approximable in operator norm on $ \mathbb{C}^{2J+1} $ to order one as $ N \to \infty $:
\begin{equation}\label{ass:symbol}
	\max_{J,\alpha}  \left\| H_{J,\alpha}  -    \frac{2J+1}{4\pi} \int  N h\Big(\frac{2J}{N}  \mathbf{e}(\Omega) \Big) \,  \big| \Omega, J \rangle \langle \Omega, J \big| \, d\Omega  \right\| \leq \mathcal{O}(1) ,
\end{equation}
where the maximum runs over $ \alpha  \in \{ 1, \dots , M_{N,J} \}  $ and $ J \in \{ \frac{N}{2} - \lfloor \frac{N}{2}\rfloor, \dots , N/2\} $.
\end{description}
For our semiclassical analysis, we introduce subspaces associated with a fixed block $(J,\alpha) $ 
and any direction defined by  $  \mathbf{0} \neq \mathbf{m}_0 \in B_1 $:
$$  
\mathcal{H}^K_{J}( \mathbf{m}_0) = \spa \left\{ \psi \in \mathbb{C}^{2J+1} \ | \  \mathbf{m}_0 \cdot \mathbf{S} \ \psi = | \mathbf{m}_0| \ (J-k) \psi \; \mbox{for some $ k \in \{0, , 1 , \dots , K \} $} \right\} . 
$$
The associated orthogonal projection on $ \mathbb{C}^{2J+1}$ will be denoted by $ P_{J}^{K}(\mathbf{m}_0) $.  
We will work with quadratic approximations at $  \mathbf{m}_0  $ defined in terms of the matrix-valued self-adjoint second-order Taylor polynomial associated with $ h $ and  the total spin $   \mathbf{S}  $ on $ \mathbb{C}^{2J+1} $:
	\begin{equation}\label{def:quadratic}
	Q( \mathbf{m}_0) :=  N h( \mathbf{m}_0) \mathbbm{1} + \left(2 \mathbf{S} - N  \mathbf{m}_0 \right) \cdot  \nabla h( \mathbf{m}_0) + \frac{2}{N} (\mathbf{S} - N \mathbf{m}_0/2 ) \cdot D_h( \mathbf{m}_0)  (\mathbf{S} - N \mathbf{m}_0/2 ) .
  \end{equation}
 The operators $(  \mathbf{S} - N \mathbf{m}_0/2)/\sqrt{ J_N(\mathbf{m}_0 )} $ with $ J_N(\mathbf{m}_0 ) \coloneqq  |\mathbf{m}_0 | N/2 $  are the fluctuation operators (cf.~\cite{MN04,MN05}) with respect to the coherent state $ | \Omega_0 , J_N(\mathbf{m}_0 ) \rangle $, where $ \Omega_0 $ is the spherical angle of $\mathbf{e}_{\mathbf{m}_0}$. 
We assume the approximability of $ H_{J,\alpha} $ with $ J $ close to $  J_N(\mathbf{m}_0 )  $ either solely for the set of minima  $ \mathcal{M} \subset B_1 $ or globally at every point. 
\begin{description}
\item[A1] 
We assume that there is a continuously differentiable $ \kappa: B_1 \to \mathbb{R} $ and diverging 
sequences $ K_N $, $ \overline{K}_N \in \mathbb{N} $ such that for all  minima $ \mathbf{m}_0 \in  \mathcal{M}  $:
\begin{equation}\label{ass:quadr}
		\max_{ | J - J_N(\mathbf{m}_0 )| \leq \overline{K}_N} \max_{\alpha  } \ \left\|  \left[ \kappa\left( \mathbf{m}_0 \right) \mathbbm{1}  + Q\left(\mathbf{m}_0\right) - H_{J,\alpha} \right]  P_{J}^{K_N}( \mathbf{m}_0 ) \right\| = o(1) .
	\end{equation}
where Landau's $ o(1) $ stands for a null sequence as $ N \to \infty $. 
\item[A1']  We assume that there is a continuously differentiable $ \kappa: B_1 \to \mathbb{R} $ and a
diverging sequence $ K_N\in \mathbb{N} $ such that:
\begin{equation}\label{ass:quadr2}
		\max_{ \mathbf{m}_0 \in B_1} \max_{ | J - J_N(\mathbf{m}_0 )| \leq 1} \max_{\alpha  } \ \left\|  \left[ \kappa\left( \mathbf{m}_0 \right) \mathbbm{1}  + Q\left(\mathbf{m}_0\right) - H_{J,\alpha} \right]  P_{J}^{K_N}( \mathbf{m}_0 ) \right\| = o(1) .
	\end{equation}
\end{description}

\bigskip
 
 \noindent
Before moving on to the main results, let us add a few comments. 
 \begin{enumerate} 
   \item The reason for including an off-set function $ \kappa $ in the quadratic approximations~\eqref{ass:quadr} and~\eqref{ass:quadr2} is that the symbol $ h $ is only assumed to approximate $ H $ up to order one, cf.~\eqref{ass:symbol}. However, our main results address the spectrum exactly to this order. 
 \item 
 In case the minimum $  \mathbf{m}_0 \in \mathcal{M} $ is in the interior of the ball, $ | \mathbf{m}_0 | < 1 $, then $  \nabla h( \mathbf{m}_0)  =   \mathbf{0} $. The second term in the quadratic approximation~\eqref{def:quadratic} is hence absent. In case $ | \mathbf{m}_0 | = 1 $, the gradient either vanishes or points to the center of the ball, $ \nabla h( \mathbf{m}_0)  = - | \nabla h( \mathbf{m}_0) | \ \mathbf{e}_{\mathbf{m}_0} $. The second term in~\eqref{def:quadratic}  then equals $2   | \nabla h( \mathbf{m}_0) | ( \mathbf{e}_{\mathbf{m}_0} \cdot   \mathbf{S}  - N/2) $. 
 \item If \textbf{A1} or \textbf{A1'} hold for diverging sequences $ K_N, \overline{K}_N $, then they also hold for any such sequences, which are upper bounded by $ K_N, \overline{K}_N $. 
\end{enumerate}
The projections corresponding to the subspaces $ \mathcal{H}^K_{J}( \mathbf{m}_0) $ are chosen such that 
\begin{align}\label{eq:norm1}
\left\| \left(\mathbf{e}_{\mathbf{m}_0} \cdot \mathbf{S} - J_N( \mathbf{m}_0)\right) P_{J}^{K}(\mathbf{m}_0) \right\|  &= \max_{k \in \{0, \dots, K\}} \left| J - k - J_N(\mathbf{m}_0) \right| \leq  K +  \left| J - J_N(\mathbf{m}_0) \right| .
\end{align}
To estimate the norm of the spin operator $ Q_\perp \mathbf{S} $ projected to orthogonal directions, $  Q_\perp = \mathbbm{1}_{\mathbb{R}^3} -  \mathbf{e}_{\mathbf{m}_0}^T \mathbf{e}_{\mathbf{m}_0} $ perpendicular to $ \mathbf{m}_0 $, it is convenient to introduce a coordinate system. When focusing on a patch around one point, we may always assume without loss of generality that $ \mathbf{m}_0 = (0,0,|\mathbf{m}_0|)^T $. This can be accomplished by the unitary rotation in~\eqref{def:cs}, for which $ U(\Omega_0)^* \ \mathbf{e}_{\mathbf{m}_0} \cdot \mathbf{S} \ U(\Omega_0) = S_z $ if $ \Omega_0$ denotes the spherical coordinates of $  \mathbf{e}_{\mathbf{m}_0} $. In this coordinate system, the spin operators in the two orthogonal directions are given by $ S_x $ and $ S_y $, and the range of the projections $ P_{J,\alpha}^{K}(\mathbf{m}_0)  $ are spanned by the canonical orthonormal eigenbasis of $ S_z $ on $ \mathbb{C}^{2J+1} $, i.e. $ S_z  | J- k \rangle = (J- k) | J- k \rangle $ for all $ k \in \{0, 1, \dots, 2J\} $.  We recall that both $ S_x $ and $ S_y $ are tridiagonal matrices in terms of this basis:
\begin{align}\label{eq:MatrixSorth}
\langle J - k |  S_x | J - {k^\prime}\rangle & =  i^{k^{\prime}-k} \langle J - k|  S_y | J - {k^\prime}\rangle  = \sqrt{\frac{2J \max\{ k, {k^\prime}\} - k {k^\prime} }{4}} \delta_{|k'-k|,1 } . 
\end{align}
Therefore, for any $ d \in \mathbb{N} $ there is some $ C_d < \infty $ such that for both $ \xi \in \{ x,y \} $:
\begin{equation}\label{eq:dthpowerS}
\max_{J\leq N/2} \max_{\min\{k,k'\} \in \{ 0, \dots , K\} } \left| \langle J - k |  S_\xi^d | J - {k^\prime}\rangle \right| \leq C_d  \left(N K \right)^{\frac{d}{2}}   1[ |k'-k|\leq d ] .
\end{equation} 
This renders evident that on $ \mathcal{H}^K_{J}( \mathbf{m}_0)$ the scaled operators  $ N ( 2 S_\xi/N)^d $  are negligible if $ d \geq 3 $ and  $ K = o(N^{1/3} ) $. 

The above arguments also show that our assumptions are tailored to apply to~\eqref{eq:Ham} with an arbitrary non-commuting self-adjoint polynomial $ P $. 
\begin{theorem}\label{thm:polass}
For $ H =  N \textrm{P}\big(\tfrac{2}{N}   \mathbf{S} \big) $ on $ \mathcal{H}_N $ with any non-commuting self-adjoint polynomial $ P $, all assumptions~\textrm{\textbf{A0, A1, A1'}} are satisfied with $ h = P $ and $ \overline{K}_N  = o(N^{2/3})$,  $ K_N = o(N^{1/3}) $. 
\end{theorem}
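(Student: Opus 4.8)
The plan is to verify the three assumptions with $h = P$ and $\kappa\equiv 0$. Assumption~\textbf{A0} is immediate: since $H$ is a function of $\mathbf{S}$, each block acts as $N\textrm{P}(2\mathbf{S}/N)$ on $\mathbb{C}^{2J+1}$ independently of $\alpha$, so multiplying the quantitative Duffield estimate~\eqref{eq:uppersymbclasss} (proved as Proposition~\ref{prop:Druff1}) by $N$ gives~\eqref{ass:symbol} with $h=P$, for any diverging $\overline{K}_N$, $K_N$.

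For~\textbf{A1} and~\textbf{A1'} the first step is a Taylor expansion of the operator $N\textrm{P}(2\mathbf{S}/N)$ about a point $\mathbf{0}\neq\mathbf{m}_0\in B_1$. After rotating coordinates by the unitary in~\eqref{def:cs} so that $\mathbf{m}_0=(0,0,|\mathbf{m}_0|)^T$ --- whence $\mathbf{e}_{\mathbf{m}_0}\cdot\mathbf{S}=S_z$ and $\mathcal{H}^{K}_J(\mathbf{m}_0)=\spa\{\ket{J-k}:0\le k\le K\}$ --- I write $2\mathbf{S}/N = \mathbf{m}_0 + (2/N)\mathbf{T}$ with $\mathbf{T}:=\mathbf{S}-N\mathbf{m}_0/2$, insert this into every Weyl-ordered monomial of $P$, and expand using multilinearity of the symmetrization. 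Since the components of $\mathbf{m}_0$ are scalars, the contributions with no, one, or two factors of $\mathbf{T}$ --- for which the ordering of the remaining $\mathbf{T}$'s is immaterial --- reassemble into precisely $NP(\mathbf{m}_0)\mathbbm{1}$, $(2\mathbf{S}-N\mathbf{m}_0)\cdot\nabla P(\mathbf{m}_0)$ and $\tfrac{2}{N}\mathbf{T}\cdot D_P(\mathbf{m}_0)\mathbf{T}$, i.e.\ exactly the three terms of $Q(\mathbf{m}_0)$ in~\eqref{def:quadratic} with $h=P$. Hence $N\textrm{P}(2\mathbf{S}/N)-Q(\mathbf{m}_0)$ is a finite sum of ``remainder'' terms, each of the form $N(2/N)^{j}$ times a bounded power of the components of $\mathbf{m}_0$ times a symmetrized product of $j\ge 3$ components of $\mathbf{T}$. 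Taking $\kappa\equiv 0$, verifying~\eqref{ass:quadr} and~\eqref{ass:quadr2} thus amounts to showing that each such remainder term has operator norm $o(1)$ on $\mathcal{H}^{K_N}_J(\mathbf{m}_0)$ over the relevant range of $J$.

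The second step is this norm bound, which is where the scales $K_N=o(N^{1/3})$ and $\overline{K}_N=o(N^{2/3})$ are used. In the rotated frame $T_z=S_z-J_N(\mathbf{m}_0)$ is diagonal with $\|T_z P_J^{K_N}(\mathbf{m}_0)\|\le K_N+|J-J_N(\mathbf{m}_0)|$ by~\eqref{eq:norm1}, while $T_x=S_x$, $T_y=S_y$ are tridiagonal by~\eqref{eq:MatrixSorth}, each shifting the $S_z$-index by one, with matrix elements of size $O(\sqrt{NK_N})$ on $\mathcal{H}^{K_N}_J(\mathbf{m}_0)$ (cf.~\eqref{eq:dthpowerS}). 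Therefore a remainder term carrying $a$ factors $T_z$ and $b=j-a$ factors among $S_x,S_y$ is bounded, for $|J-J_N(\mathbf{m}_0)|\le\overline{K}_N$ and with a constant depending only on $P$, by
\[
N\,(2/N)^{j}\,\bigl(K_N+\overline{K}_N\bigr)^{a}\,\bigl(\sqrt{NK_N}\bigr)^{b}\;=\;o\bigl(N^{1-j/3}\bigr)\;=\;o(1)\qquad(j\ge 3),
\]
using $\overline{K}_N=o(N^{2/3})$, $K_N=o(N^{1/3})$; for~\eqref{ass:quadr2} one even has $|J-J_N(\mathbf{m}_0)|\le 1$. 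I expect the one delicate point --- and the main obstacle --- to be the non-commutativity when the symmetrized product of $\mathbf{T}$'s is brought to normal order: $[T_x,T_y]=iS_z=i\,T_z+i\,J_N(\mathbf{m}_0)$ produces a scalar of order $N$, so reordering can convert a product of $j$ fluctuation factors into products of fewer factors times powers of $J_N(\mathbf{m}_0)\le N/2$; one must check that each such step removes \emph{two} fluctuation factors while inserting only one factor $\le N/2$, so that the power counting above is only improved and every reordered piece remains $o(1)$ --- uniformly in $\alpha$, in $J$ within the allowed range, and in $\mathbf{m}_0\in B_1$ (the Taylor coefficients being polynomial in $\mathbf{m}_0$). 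This gives~\eqref{ass:quadr} and~\eqref{ass:quadr2} with $\kappa\equiv 0$, and the exponents $2/3$ and $1/3$ are those forced by the cubic terms $N(2S_\xi/N)^3$.
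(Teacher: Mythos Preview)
Your approach is essentially the same as the paper's: rotate so that $\mathbf{m}_0=(0,0,|\mathbf{m}_0|)^T$, write $H$ as a polynomial in the shifted variables $\mathbf{T}=\mathbf{S}-N\mathbf{m}_0/2$ (equivalently, the paper's $s_N(i)$), identify the degree $\le 2$ part with $Q(\mathbf{m}_0)$, and kill the degree $\ge 3$ monomials by the size estimates~\eqref{eq:norm1}--\eqref{eq:dthpowerS}. Your power counting $N(2/N)^{j}(K_N+\overline{K}_N)^{a}(\sqrt{NK_N})^{b}=o(N^{1-j/3})$ matches the paper's bound~\eqref{eq:productkernel}.

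The one place you make life harder for yourself is the commutator discussion. No reordering is needed: each $T_z$ is diagonal and each $T_x,T_y$ is tridiagonal in the $\ket{J-k}$ basis, so any ordered product $T_{i_1}\cdots T_{i_j}$ is at most $(2j+1)$-diagonal and its matrix elements on $\mathcal{H}_J^{K_N}(\mathbf{m}_0)$ are bounded directly as a sum over $\le 3^{j}$ intermediate-state paths, each path giving exactly the product $(K_N+\overline{K}_N)^{a}(\sqrt{NK_N})^{b}$ you wrote---regardless of the ordering of the factors. This is precisely how the paper gets~\eqref{eq:productkernel} and then passes to the norm via~\eqref{eq:normfrom kernel}. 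So your ``main obstacle'' dissolves: the bound you already stated applies to every ordered monomial as written, and the Weyl symmetrization only enters (as you correctly noted) to make the degree-$\le 2$ part match $Q(\mathbf{m}_0)$.
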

\begin{proof}

Thanks to~\eqref{eq:uppersymbclasss}, an approximate symbol of the Hamiltonian is indeed the polynomial, $ h = P $. 

Let $ \mathbf{m}_0 \in B_1 $ be arbitrary, not necessarily a minimum of $ P $. Without loss of generality we may assume $ \mathbf{m}_0 = (0,0,|\mathbf{m}_0|)^T $ and that the Hamiltonian is of the form $ H = N \textrm{P}\left( s_N(1) , s_N(2) ,  s_N(3)\right) $ with
$$
s_N(1) := \frac{2}{N}  S_x , \quad s_N(2):= \frac{2}{N} S_y, \quad s_N(3) := \frac{2}{N} (S_z - J_N(\mathbf{m}_0) ). 
$$
The polynomial $  \textrm{P} $  is a linear combination of monomial products of a fixed order $ d \in \mathbb{N} $. Due to non-commutativity of matrix multiplication such monomials include all products of the form $$ \Pi_N(\mathbf{i}) := s_N(i_1) s_N(i_2) \dots s_N(i_d) $$ with an arbitrary choice of ordered indices $\mathbf{i} =( i_1, \dots , i_d) \in \{ 1,2,3\}^d $. 
Up to order $ d \leq 2 $ and due to the assumed self-adjointness of the non-commutative product, they coincide with the terms in the definition~\eqref{def:quadratic} of $ Q( \mathbf{m}_0) $. 

Each of the above matrices $  \Pi_N(\mathbf{i}) $ is block diagonal on $ \mathcal{H}_N $ and, when restricted to a copy of $ \mathbb{C}^{2J+1} $,  at most $2d+1$-diagonal. We may therefore estimate similarly as in~\eqref{eq:dthpowerS}   for all  $ J \leq N/2 $ and any choice of indices $ \mathbf{i} \in \{ 1,2,3\}^d$:
\begin{align}\label{eq:productkernel}
& \max_{\min\{k,k'\} \in \{ 0, \dots , K_N\} } \left| \langle J - k | \ N   \Pi_N(\mathbf{i}) \ | J - {k^\prime}\rangle \right| \notag  \\ 
& \quad \leq C_d \ N^{1-d}  \max_{m\in\{ 0, \dots, d\} } \left\{ \sqrt{N (K_N+d)}^{d-m} \left( | J- J_N(\mathbf{m}_0) | + K_N+d \right)^{m} \right\}  1[ |k'-k|\leq d ] \notag \\
& \quad   \leq  C_d \left[ \frac{\sqrt{K_N}^d}{\sqrt{N}^{d -2} }+ \frac{|J-J_N(\mathbf{m}_0) |^d + K_N^d }{N^{d-1} }  \right] 1[ |k'-k|\leq d ] ,
\end{align}
with a constant $  C_d < \infty $, which changes from line to line, and which is independent of $ J $. 
For any $ d \geq 3 $, the right side vanishes if $ K_N = o(N^{1/3}) $ and $ |J-J_N(\mathbf{m}_0) | = o(N^{2/3})$. Since
\begin{equation}\label{eq:normfrom kernel}
\max_\alpha \left\|   \Pi_N(\mathbf{i}) P_{J}^{K_N}( \mathbf{m}_0 ) \right\| \leq (2d+1) \mkern-10mu  \max_{\min\{k,k'\} \in \{ 0, \dots , K_N\} } \mkern-10mu  \left| \langle J - k |   \Pi_N(\mathbf{i})  | J - {k^\prime}\rangle \right|   ,
\end{equation}
any monomial of order $ d \geq 3 $  indeed  does not contribute in~\eqref{ass:quadr}. 
\end{proof}
Using similar estimates and under some more restrictive assumptions on $ K_N $ and $ \overline{K}_N $, one may replace $ Q( \mathbf{m}_0)  $ in assumption~\eqref{ass:quadr}  by the  second-order polynomial
\begin{equation}\label{eq:defQhat}
\widehat Q( \mathbf{m}_0) \coloneqq  N h( \mathbf{m}_0) \mathbbm{1} + \left(2 \mathbf{S} - N  \mathbf{m}_0 \right) \cdot  \nabla h( \mathbf{m}_0) + \frac2{ N}   (Q_\perp \mathbf{S}) \cdot D_h( \mathbf{m}_0) (Q_\perp\mathbf{S} )  ,
\end{equation}
which only involves the projection $ Q_\perp $ of the Hessian. This means that the fluctuation operator in the radial direction of $  \mathbf{m}_0 $ is negligible. 
\begin{lemma}\label{lem:altQ}
If $ K_N = o(N^{1/3}) $ and $  \overline{K}_N^2 K_N = o(N) $, then 
\begin{equation}
\sup_{ | J - J_N(\mathbf{m}_0 )| \leq \overline{K}_N } \sup_{\alpha  }  \left\| \left[Q( \mathbf{m}_0) - \widehat Q( \mathbf{m}_0) \right] P_{J}^{K_N}(\mathbf{m}_0)  \right\| = o(1) .
\end{equation}
\end{lemma}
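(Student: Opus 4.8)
The plan is to reduce everything to an explicit formula for $Q(\mathbf m_0)-\widehat Q(\mathbf m_0)$ together with the elementary bounds~\eqref{eq:norm1} and~\eqref{eq:dthpowerS}. Since $Q$ and $\widehat Q$ share their constant and linear terms, only the difference of the two Hessian contractions survives. Working, as we may, in coordinates with $\mathbf m_0=(0,0,|\mathbf m_0|)^T$, so that $Q_\perp\mathbf S=(S_x,S_y,0)$ and $\widetilde{\mathbf S}:=\mathbf S-N\mathbf m_0/2=(S_x,S_y,S_z-J_N(\mathbf m_0))$, the components of $\widetilde{\mathbf S}$ and of $Q_\perp\mathbf S$ differ only in the $z$-slot. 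Hence, with $D:=D_h(\mathbf m_0)$ and using $D_{az}=D_{za}$, a short computation gives
\[
Q(\mathbf m_0)-\widehat Q(\mathbf m_0)=\frac2N\Big(D_{zz}\,\widetilde S_z^{\,2}+\sum_{a\in\{x,y\}}D_{az}\big(S_a\widetilde S_z+\widetilde S_z S_a\big)\Big),
\]
in which every monomial carries at least one factor $\widetilde S_z=S_z-J_N(\mathbf m_0)$. This factor is the source of the smallness: it commutes with $P_J^{K}(\mathbf m_0)$, and by~\eqref{eq:norm1} one has $\|\widetilde S_z\,P_J^{K}(\mathbf m_0)\|\le K+\overline K_N$ whenever $|J-J_N(\mathbf m_0)|\le\overline K_N$.

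Next I would estimate the three types of monomials against $P_J^{K_N}(\mathbf m_0)$ with $|J-J_N(\mathbf m_0)|\le\overline K_N$. For the term $\widetilde S_z^{\,2}$, which commutes with $P_J^{K_N}(\mathbf m_0)$, one simply gets $\tfrac2N|D_{zz}|(K_N+\overline K_N)^2$; this is already $o(1)$ since $K_N^2=o(N^{2/3})$ and $\overline K_N^2\le \overline K_N^2K_N=o(N)$. For the mixed terms, $S_a$ ($a\in\{x,y\}$) is tridiagonal, so $S_aP_J^{K_N}(\mathbf m_0)$ has range in $\mathcal H^{K_N+1}_J(\mathbf m_0)$, and~\eqref{eq:dthpowerS} with $d=1$ (equivalently~\eqref{eq:MatrixSorth}) gives $\|S_aP_J^{K_N}(\mathbf m_0)\|\le C\sqrt{NK_N}$. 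Decomposing $S_a\widetilde S_zP_J^{K_N}=(S_aP_J^{K_N})(\widetilde S_zP_J^{K_N})$ and $\widetilde S_zS_aP_J^{K_N}=(\widetilde S_zP_J^{K_N+1})(S_aP_J^{K_N})$ and using submultiplicativity of the operator norm, both orderings obey
\[
\tfrac2N|D_{az}|\,\big\|\big(S_a\widetilde S_z+\widetilde S_z S_a\big)P_J^{K_N}(\mathbf m_0)\big\|\;\le\;\frac{C}{N}\,\sqrt{NK_N}\,(K_N+\overline K_N+1)\;=\;C\sqrt{\tfrac{K_N}{N}}\,(K_N+\overline K_N+1),
\]
with $C$ depending only on $|D_h(\mathbf m_0)|$ (and, if one wants uniformity, bounded over $\mathbf m_0\in B_1$ by continuity of $D_h$ on the compact ball).

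Finally I would check that the displayed right-hand side is $o(1)$, i.e.\ $K_N(K_N+\overline K_N)^2=o(N)$: expanding the square, $K_N^3=o(N)$ because $K_N=o(N^{1/3})$, $K_N\overline K_N^2=o(N)$ is the hypothesis, and $K_N^2\overline K_N\le\tfrac12(K_N^3+K_N\overline K_N^2)=o(N)$ by the arithmetic--geometric mean inequality. Summing the finitely many monomials and taking the supremum over $\alpha$, on which neither $Q(\mathbf m_0)$ nor $\widehat Q(\mathbf m_0)$ nor $P_J^{K_N}(\mathbf m_0)$ depends, then gives the claim. I do not expect a genuine obstacle here; the only point needing a little care is the ordering $\widetilde S_z S_a$, where $S_a$ acts first and enlarges the relevant block from $\mathcal H^{K_N}_J(\mathbf m_0)$ to $\mathcal H^{K_N+1}_J(\mathbf m_0)$, which merely shifts the bound on $\widetilde S_z$ by one.
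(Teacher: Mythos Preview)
Your proof is correct and follows essentially the same approach as the paper's. Both arguments rotate to $\mathbf m_0=(0,0,|\mathbf m_0|)^T$, identify the difference $Q(\mathbf m_0)-\widehat Q(\mathbf m_0)$ as a linear combination of the five monomials $\widetilde S_z^{\,2}$ and $S_a\widetilde S_z$, $\widetilde S_z S_a$ ($a\in\{x,y\}$), and then bound each using~\eqref{eq:norm1} together with the tridiagonal structure of $S_x,S_y$; the paper packages this via matrix-element estimates and~\eqref{eq:normfrom kernel}, whereas you use operator-norm submultiplicativity and the observation that $S_a$ maps $\mathcal H_J^{K_N}(\mathbf m_0)$ into $\mathcal H_J^{K_N+1}(\mathbf m_0)$, which is a minor presentational difference only.
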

\begin{proof}
Using the same coordinate system and notation as in the proof of Theorem~\ref{thm:polass}, 
the difference $ Q( \mathbf{m}_0) - \widehat Q( \mathbf{m}_0) $ is a linear combination of the five monomials of the form $ N  \Pi_N(\mathbf{i})  $ with 
$ \mathbf{i} =(i_1,i_2) \in \{ (1,3), (3,1), (2,3), (3,2), (3,3) \} $. 

By~\eqref{eq:norm1}, we have $  \left\| N  \Pi_N(3,3) P_{J}^{K_N}(\mathbf{m}_0)  \right\| \leq 8 \ (  | J - J_N(\mathbf{m}_0 )|^2 + K_N^2 ) / N = o(1) $. In all other cases, we estimate similarly as in~\eqref{eq:productkernel}. For example:
$$
\max_{k^\prime \in \{0, \dots, K_N\}}   \left| \langle J - k |  N   \Pi_N(3,1) | J - {k^\prime}\rangle \right|  \leq    \sqrt{\frac{K_N+1}{N}} \left( | J- J_N(\mathbf{m}_0) | + K_N+1 \right)   1[ |k'-k|\leq 1 ] , 
$$
which by~\eqref{eq:normfrom kernel} leads to $   \left\| N  \Pi_N(3,1) P_{J,\alpha}^{K_N}(\mathbf{m}_0)  \right\| \leq o(1) $. The remaining terms are estimated similarly. 
\end{proof}

\subsection{The case of a unique minimum} 
The following is our main result for mean-field Hamiltonians whose approximate symbol has a unique minimum, which is at the surface of the unit ball.
\begin{theorem}\label{thm:unimin}
Assuming  \textbf{A0} and \textbf{A1} and that the symbol $ h $ has a unique global minimum at  $  \mathbf{m}_0 \in S^2 $ where   $ \nabla h(\mathbf{m}_0) \neq \mathbf{0} $ and $ \det   D_h^{\perp}(\mathbf{m}_0) > 0 $:
\begin{enumerate}
\item
 the ground state $\psi_0$ of $ H $ on $ \mathcal{H}_N $ is unique (up to phase) and contained in the subspace with maximal total spin $J= N/2$. In terms of the eigenstates $| N/2 - m \rangle $ of $ \mathbf{m}_0 \cdot \mathbf{S} $ in that subspace, we have for any $ m \in \mathbb{N}_0 $:
 \begin{equation}\label{eq:eigenvect}
  \langle J-m | \psi_0 \rangle = \omega^{1/4} \sqrt{\frac{2}{(\omega +1) n!}} \left(\sqrt{\frac{\omega-1}{2(\omega +1)}}\right)^n H_n(0) + o(1) ,
 \end{equation}
where $H_n$ denotes the $n$-th Hermite polynomial, and the ground state energy is given by 
        \begin{equation}\label{eq:gsenergy}
        	E_0(H) = N h( \mathbf{m}_0) + \kappa( \mathbf{m}_0) - | \nabla h(\mathbf{m}_0) |  + \sqrt{\det   D_h^{\perp}(\mathbf{m}_0)} +o(1).
        \end{equation}
        \item for any energy below $ E_0(H) + \Delta $ with $ \Delta  > 0 $ fixed but arbitrary, the eigenvalues of $ H $ stem from blocks $ (J,\alpha) $ with $ k = N/2 - J \in \mathbb{N}_0 $ fixed. When counted with multiplicity, these low-energy eigenvalues of $ H_{N/2-k,\alpha} $ for $ k \in \mathbb{N}_0 $ asymptotically coincide up to $ o(1) $ with the points in
        \begin{equation} \label{eq:allev}
         N h( \mathbf{m}_0) + \kappa( \mathbf{m}_0) + (2k-1)\  | \nabla h(\mathbf{m}_0) | + (2m+1) \sqrt{\det   D_h^{\perp}(\mathbf{m}_0)} , 
        \end{equation}
        with $ m \in \mathbb{N}_0 $.  
    The spectral gap of $ H $ is 
    \begin{equation}\label{eq:gapH}
    	\gap H =   2 \min \left\{ |\nabla h( \mathbf{m}_0 ) |, \sqrt{\det   D_h^{\perp}(\mathbf{m}_0)} \right\}+o(1).
    \end{equation}
    \end{enumerate}
\end{theorem}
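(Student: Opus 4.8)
The plan is to localize the relevant spectral subspace near the minimum $ \mathbf{m}_0 $, replace each contributing block $ H_{J,\alpha} $ by the quadratic model $ \widehat Q(\mathbf{m}_0) $, and then solve that model by Holstein--Primakoff bosonization. I begin with \emph{localization}. Combining \textbf{A0} with the resolution of unity~\eqref{eq:completeness} yields the crude bound $ H_{J,\alpha} \geq N \min_{\Omega} h\big(\tfrac{2J}{N}\mathbf{e}(\Omega)\big) - \mathcal{O}(1) $ on every block. Since $ h $ attains its minimum on $ B_1 $ uniquely at $ \mathbf{m}_0 \in S^2 $, a Taylor expansion using $ \nabla h(\mathbf{m}_0) \neq \mathbf{0} $ and $ \det D_h^{\perp}(\mathbf{m}_0) > 0 $ gives $ h\big(\tfrac{2J}{N}\mathbf{e}(\Omega)\big) - h(\mathbf{m}_0) \gtrsim \big(1 - \tfrac{2J}{N}\big) + \dist(\Omega,\Omega_0)^2 $ near $ \mathbf{m}_0 $, and a strictly positive constant away from it; moreover a normalized vector with $ \mathbf{e}_{\mathbf{m}_0}\cdot\mathbf{S} \leq J - K $ has coherent-state density satisfying $ \int \dist(\Omega,\Omega_0)^2\,|\langle\Omega,J|\psi\rangle|^2\,d\Omega \gtrsim K/N $. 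Feeding this in shows that every eigenvalue arising either from a block with $ N/2 - J $ unbounded, or from the part of a block orthogonal to $ \mathcal{H}^{K_N}_J(\mathbf{m}_0) $, exceeds $ N h(\mathbf{m}_0) + c\,K_N $ for some $ c>0 $; since $ K_N \to \infty $ these leave any fixed window $ [\,E_0(H),\,E_0(H)+\Delta\,] $. Hence only the blocks $ J = N/2 - k $ with $ k $ bounded matter, and on them only the subspace $ \mathcal{H}^{K_N}_J(\mathbf{m}_0) $.

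On $ \mathcal{H}^{K_N}_J(\mathbf{m}_0) $, assumption \textbf{A1} replaces $ H_{J,\alpha} $ by $ \kappa(\mathbf{m}_0)\mathbbm{1} + Q(\mathbf{m}_0) $ up to $ o(1) $, and Lemma~\ref{lem:altQ} further replaces $ Q(\mathbf{m}_0) $ by $ \widehat Q(\mathbf{m}_0) $, both steps being admissible for the budget $ \overline{K}_N = o(N^{2/3}) $, $ K_N = o(N^{1/3}) $ of Theorem~\ref{thm:polass}. I would then work in coordinates with $ \mathbf{m}_0 = (0,0,1)^T $, so that $ \nabla h(\mathbf{m}_0) = -|\nabla h(\mathbf{m}_0)|\,\mathbf{e}_z $ and, after a rotation about the $ z $-axis, $ D_h^{\perp}(\mathbf{m}_0) $ is diagonal in the $ x,y $ directions. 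Using the exact ladder relations together with the spin constraint $ S_x^2 + S_y^2 = J(J+1) - S_z^2 $ one shows, with errors $ \mathcal{O}(K_N^2/N) = o(1) $, that on $ \mathcal{H}^{K_N}_{N/2-k}(\mathbf{m}_0) $ the operators $ S_+/\sqrt{N} $, $ S_-/\sqrt{N} $ act as bosonic annihilation/creation operators $ a, a^\dagger $ with $ N/2 - S_z = k + a^\dagger a $. Substituting into~\eqref{eq:defQhat}, the linear term contributes $ (2k-1)|\nabla h(\mathbf{m}_0)|\mathbbm{1} + |\nabla h(\mathbf{m}_0)|(x^2+p^2) $ via $ 2a^\dagger a = x^2+p^2-1 $ (with $ x = (a+a^\dagger)/\sqrt2 $, $ p = (a-a^\dagger)/(i\sqrt2) $), and the Hessian term contributes the quadratic form $ (x,p)\,Q_\perp D_h(\mathbf{m}_0)Q_\perp\,(x,p)^T $; their sum is precisely $ (x,p)\,D_h^{\perp}(\mathbf{m}_0)\,(x,p)^T $. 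Thus the shift $ |\nabla h(\mathbf{m}_0)|\,Q_\perp $ in~\eqref{def:perpHessian} emerges exactly as the contribution of the radial linear term through the spherical constraint.

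It remains to solve the quadratic model and assemble. A symplectic Bogoliubov transformation brings the positive-definite form $ D_h^{\perp}(\mathbf{m}_0) $ to normal form: its spectrum on Fock space is $ \{(2m+1)\sqrt{\det D_h^{\perp}(\mathbf{m}_0)} : m\in\mathbb{N}_0\} $ and its ground state is the associated squeezed vacuum, whose coefficients $ \langle J-m|\psi_0\rangle $ in the $ S_z $-eigenbasis are the Hermite expression~\eqref{eq:eigenvect} (with $ \omega = \sqrt{\omega_1/\omega_2} $ the squeezing parameter built from the eigenvalues $ \omega_1,\omega_2 $ of $ D_h^{\perp}(\mathbf{m}_0) $, and $ H_n(0)=0 $ for odd $ n $); since the spectral gap is bounded below, eigenvector stability turns the $ o(1) $-closeness of $ H_{N/2,1} $ to the model into~\eqref{eq:eigenvect}. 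Together with the block identification $ J = N/2 - k $ this yields~\eqref{eq:allev}; the block $ k = 0 $ has multiplicity $ M_{N,N/2} = 1 $, its lowest point is~\eqref{eq:gsenergy}, and it lies strictly below all other contributions by $ 2\min\{|\nabla h(\mathbf{m}_0)|,\sqrt{\det D_h^{\perp}(\mathbf{m}_0)}\} > 0 $, giving the uniqueness of the ground state (and that it sits in the maximal-spin sector), part~(1), and~\eqref{eq:gapH}. Matching variational upper bounds come from oscillator eigenfunctions lifted to $ \mathcal{H}^{K_N}_J(\mathbf{m}_0) $ and evaluated via \textbf{A1} and Lemma~\ref{lem:altQ}; min-max then turns all estimates into equalities up to $ o(1) $.

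I expect the localization step to be the main obstacle. Assumption \textbf{A1} only controls $ H_{J,\alpha} $ inside $ \mathcal{H}^{K_N}_J(\mathbf{m}_0) $, so the remaining low-energy mass must be pushed out using only the order-$ N $ upper-symbol bound of \textbf{A0}; making the interplay of the growth of $ h $ off $ \mathbf{m}_0 $, the coherent-state localization of states with prescribed $ \mathbf{e}_{\mathbf{m}_0}\cdot\mathbf{S} $, and the $ S_z $-localization quantitative and uniform in $ (J,\alpha) $ --- and then combining these one-sided bounds (via a partition-of-unity/IMS-type argument, or a direct min-max sandwich) to pin the full low-energy spectrum to $ o(1) $ --- is where the real work lies. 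By comparison, the bosonization in the second paragraph is routine Holstein--Primakoff bookkeeping once the error budget $ K_N = o(N^{1/3}) $ is exploited.
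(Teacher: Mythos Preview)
Your outline is correct and matches the paper's strategy almost step for step: rotate so that $\mathbf{m}_0=(0,0,1)^T$ and $D_h^\perp(\mathbf{m}_0)$ is diagonal, use \textbf{A0} to produce the linear lower bound $H_{J,\alpha}\geq Nh(\mathbf{m}_0)-C+c(N-2\,\mathbf{e}_{\mathbf{m}_0}\!\cdot\mathbf{S})$ (Lemma~\ref{lem:truncation}) which kills blocks with $N/2-J\geq K_N$, use \textbf{A1} and Lemma~\ref{lem:altQ} on the survivors, and read off the oscillator spectrum. Your Holstein--Primakoff computation recovers exactly the paper's identification of $\widehat Q(\mathbf{m}_0)$ on $\mathcal{H}^{K_N}_J(\mathbf{m}_0)$ with the operator $\omega_y^\perp D$, $D=\omega^2 L_x^2+L_y^2$ on $\ell^2(\mathbb{N}_0)$ (Lemma~\ref{lem:approxH}, Proposition~\ref{prop:harm}).

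The one place where you are vague is precisely the step you flag as the obstacle, and the paper's resolution is worth knowing because a direct min--max or IMS argument with the projection $P_J^{K_N}(\mathbf{m}_0)$ is awkward: the off-diagonal $\|P_J^{K_N}H_{J,\alpha}(1-P_J^{K_N})\|$ is of order $K_N$, not $o(1)$, since the quadratic piece is tridiagonal with boundary entries $\sim K_N$. The paper avoids this by a \emph{second} localization. One first forms $\widetilde H_J^{(N)}=P_J^{K_N^-}H_J^{(N)}P_J^{K_N^-}$ and lets $E_K^{(N)}$ project onto its lowest $K$ eigenvectors for a \emph{fixed} $K$; these are, up to $o(K_N^{-\infty})$, the squeezed-oscillator eigenfunctions $\psi_k$, which decay exponentially in the $|J-m\rangle$ basis (Proposition~\ref{prop:harm}, \eqref{eq:expdecay}--\eqref{eq:expdecay2}). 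That decay makes the Schur off-diagonal $\|F_K^{(N)}H_{J,\alpha}E_K^{(N)}\|=o(1)$ (eq.~\eqref{eq:FEblockweg}) and, combined with the linear lower bound of Lemma~\ref{lem:truncation}, also pushes the complementary block $F_K^{(N)}H_{J,\alpha}F_K^{(N)}$ above $Nh(\mathbf{m}_0)+2cM(1-o(1))$ for $M\ll K$ (eqs.~\eqref{eq:QHQ}--\eqref{eq:estconf}). Proposition~\ref{prop:Schur} then pins the low eigenvalues of $H_{J,\alpha}$ to those of $\widetilde H_J^{(N)}$ to $o(1)$, and Lemma~\ref{lem:src} identifies the latter with the oscillator values. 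So the ``glue'' is a Schur complement taken not against $\mathcal{H}^{K_N}_J(\mathbf{m}_0)$ but against the span of finitely many approximate eigenvectors, with their exponential tails doing the work.
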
 
The proof of Theorem~\ref{thm:unimin} is the topic of Section~\ref{sec:pfunim}.  As can be inferred from there, the error estimates hiding in the $ o(1) $-terms are made up from two sources: first, the accuracy  of the assumed quadratic approximation~\eqref{ass:quadr}, and second, the cumulative subsequent error estimates. Moreover, an asymptotic form of all eigenfunctions of $ H_{J,\alpha} $ on the outer blocks and not only the ground state~\eqref{eq:eigenvect} is found there. 
Although the ground state is simple and the same applies to the oscillator spectrum~\eqref{eq:allev} of $ H_{J,\alpha} $ for fixed $ (J,\alpha) $, the multiplicity $ M_{N,J} $ in~\eqref{eq:Hblock} will cause the eigenvalues of $ H $ to occur approximately to order $ o(1) $ with these multiplicities.  \\

Before moving on to the general case, let us put Theorem~\ref{thm:unimin}  in the context of available results. 
\begin{enumerate}
\item 
The spectra of quadratic mean-field Hamiltonians such as the Lipkin-Meshkov-Glick  model can be described exactly  through  Bethe-Ansatz-type equations 
\cite{Turbiner:1988aa,Pan:1999aa,Ortiz:2005aa}. Since the latter are in the most general case hard to solve, much attention has been given to   finding approximate semiclassical solutions \cite{Ribeiro:2008aa,Bulekov:2021aa}. In view of this, it is worth emphasizing that the above theorem is (through Theorem~\ref{thm:polass}) applicable to all $ H $ of polynomial form~\eqref{eq:Ham}. Their low-energy spectra are proven to agree with that of the associated quadratic term. The latter turns out to produce the harmonic oscillator spectrum~\eqref{eq:allev}, in which the frequency is determined by the Hessian in the spherical geometry.   
The spectrum of such general mean-field Hamiltonians has so far only been determined  to a coarser order $ N $ in~\cite{Ven:2020aa} (cf.~\eqref{eq:VV20}) and not on the fine scale $ o(1) $.  
\item
Expressions for the spectral gap of certain polynomial mean-field quantum-spin Hamiltonians have been derived on the level of rigor of theoretical physics in~\cite{Ribeiro:2008aa,Bapst_2012}. These works assume that the low-energy spectrum of the relevant blocks $ H_{J,\alpha} $ are equally spaced, and argue that the gap is proportional to its inverse density of states at the ground state. 
A-posteriori and thanks to the proven equal spacing of the low-energy eigenvalues of  $ H_{N/2 - k ,\alpha}  $, this is correct in case the minimum in~\eqref{eq:gapH} is attained at $  \big(\det   D_h^{\perp}(\mathbf{m}_0)\big)^{1/2}$. It hence works in case of the quantum Curie-Weiss model in the paramagnetic phase. However, in case the minimum in~\eqref{eq:gapH}  is found at $   |\nabla h( \mathbf{m}_0 )|  $ and hence stems from the second-most outer shell, this strategy fails. 
\end{enumerate}

Theorem~\ref{thm:unimin} should be contrasted with the case that the minimum of the symbol is found strictly inside the unit ball. In this case, the spectral gap vanishes, since the following theorem shows that all blocks $ (J,\alpha) $ with $ | J - J_N(\mathbf{m}_0) | \leq o(\sqrt{N }) $ have the same ground-state energy up to an $ o(1) $-error. 
\begin{theorem}\label{thm:unimin2}
Assuming  \textbf{A0} and \textbf{A1'}and that the symbol $ h \in C^3 $ has a unique global minimum at  $  \mathbf{m}_0 \in B_1 $
with $0 < | \mathbf{m}_0| < 1$ and $ D_h( \mathbf{m}_0) > 0 $. Then:
\begin{enumerate}
\item  any ground-state eigenvector of $ H $ on $ \mathcal{H}_N $ is contained in a subspace with total spin $ J $ with $ | J - J_N(\mathbf{m}_0) | \leq \mathcal{O}(\sqrt{N })$ and 
\begin{equation}\label{eq:gsball}
E_0(H) = E_0(H_{J,\alpha}) =  N h(\mathbf{m}_0)  + \kappa( \mathbf{m}_0) +  | \mathbf{m}_0|  \sqrt{\det   D_h^{\perp}(\mathbf{m}_0)} + o(1) .
\end{equation}
\item for any $ J $ with $ | J - J_N(\mathbf{m}_0) | \leq o(\sqrt{N }) $ the ground-state energy $ E_0(H_{J,\alpha})  $ is still given by~\eqref{eq:gsball}. 
\end{enumerate}
\end{theorem}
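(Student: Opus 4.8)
Since $H$ is block diagonal, it suffices to understand each $E_0(H_{J,\alpha})$ and then minimize over $(J,\alpha)$. The idea is to expand a block not around $\mathbf{m}_0$ itself but around the correct point on its own shell: for a spin $J$ with $\rho\coloneqq 2J/N$ close to $|\mathbf{m}_0|$, let $\mathbf{x}^\ast(J)$ be the minimizer of $h$ on the sphere of radius $\rho$. Because $\mathbf{m}_0$ is the unique global minimum with $D_h(\mathbf{m}_0)>0$ and $h\in C^3$, the implicit function theorem applied to the Lagrange equations shows that $\mathbf{x}^\ast(J)$ is unique for $N$ large, depends $C^1$-smoothly on $\rho$, satisfies $\mathbf{x}^\ast(J)\to\mathbf{m}_0$ with $|\mathbf{x}^\ast(J)-\mathbf{m}_0|=\mathcal{O}(|\rho-|\mathbf{m}_0||)$, and $\nabla h(\mathbf{x}^\ast(J))$ is radial with $|\nabla h(\mathbf{x}^\ast(J))|=\mathcal{O}(|\rho-|\mathbf{m}_0||)$ (as $\nabla h(\mathbf{m}_0)=\mathbf{0}$). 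Moreover every point of the radius-$\rho$ sphere lies at Euclidean distance $\ge ||\mathbf{m}_0|-\rho|$ from $\mathbf{m}_0$, so by $D_h(\mathbf{m}_0)>0$ one has $h(\mathbf{x}^\ast(J))-h(\mathbf{m}_0)\ge c\,(\rho-|\mathbf{m}_0|)^2$ with any $c<\tfrac12\lambda_{\min}D_h(\mathbf{m}_0)$, and of course $h(\mathbf{x}^\ast(J))\ge h(\mathbf{m}_0)$ always. Since $J_N(\mathbf{x}^\ast(J))=J$, assumption \textbf{A1'} applies at $\mathbf{x}^\ast(J)$ in place of $\mathbf{m}_0$.

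The key simplification, relative to Theorem~\ref{thm:unimin}, is that the gradient at the expansion point is now \emph{radial}, not merely pointing to the center. In $\mathbf{x}^\ast(J)$-adapted coordinates the middle term of $Q(\mathbf{x}^\ast(J))$ is $2|\nabla h(\mathbf{x}^\ast(J))|(\mathbf e_{\mathbf x^\ast(J)}\!\cdot\!\mathbf S-J)$, whose norm on $\ran P_J^{K_N}(\mathbf{x}^\ast(J))$ is $\mathcal{O}(|\nabla h(\mathbf{x}^\ast(J))|\,K_N)=\mathcal{O}(K_N/\sqrt N)=o(1)$; likewise the radial fluctuation operator $(\mathbf e_{\mathbf x^\ast(J)}\!\cdot\!\mathbf S-J)/\sqrt J$ is $o(1)$ there, so all Hessian cross terms involving it drop out and only $\tfrac2N(Q_\perp\mathbf S)\!\cdot\!D_h(\mathbf{x}^\ast(J))(Q_\perp\mathbf S)=|\mathbf{x}^\ast(J)|\,\mathbf b_\perp\!\cdot\!D_h^\perp(\mathbf{x}^\ast(J))\,\mathbf b_\perp$ survives, with $\mathbf b_\perp=Q_\perp\mathbf S/\sqrt J$ satisfying $[b_x,b_y]=i\,\mathbf e_{\mathbf x^\ast(J)}\!\cdot\!\mathbf S/J=i+o(1)$ on that subspace. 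After a Holstein--Primakoff change of variables (with $\mathcal{O}(K_N^{3/2}/\sqrt N)=o(1)$ error for $K_N=o(N^{1/3})$) this is a bosonic Bogoliubov Hamiltonian with ground-state energy $\sqrt{\det D_h^\perp(\mathbf{x}^\ast(J))}$. I would then import the localization argument of Section~\ref{sec:pfunim} essentially verbatim — it is in fact easier here, the $S_z$-field being absent — to conclude that the full block obeys, uniformly for $|\rho-|\mathbf{m}_0||$ small,
\[
E_0(H_{J,\alpha})=N h(\mathbf{x}^\ast(J))+\kappa(\mathbf{x}^\ast(J))+|\mathbf{x}^\ast(J)|\sqrt{\det D_h^\perp(\mathbf{x}^\ast(J))}+o(1),
\]
with the lowest eigenvector contained in $\ran P_J^{K_N}(\mathbf{x}^\ast(J))$ up to $o(1)$ and of Gaussian (squeezed-vacuum) profile.

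The rest is bookkeeping. For $|J-J_N(\mathbf{m}_0)|=o(\sqrt N)$ we have $\rho-|\mathbf{m}_0|=o(1/\sqrt N)$, hence $N h(\mathbf{x}^\ast(J))=N h(\mathbf{m}_0)+N\cdot\mathcal{O}(1/N)=N h(\mathbf{m}_0)+o(1)$, while $\kappa(\mathbf{x}^\ast(J))\to\kappa(\mathbf{m}_0)$ and $|\mathbf{x}^\ast(J)|\sqrt{\det D_h^\perp(\mathbf{x}^\ast(J))}\to|\mathbf{m}_0|\sqrt{\det D_h^\perp(\mathbf{m}_0)}$ by continuity; this gives~\eqref{eq:gsball} and proves part~2. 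For part~1, the displayed identity together with $h(\mathbf{x}^\ast(J))\ge h(\mathbf{m}_0)$ yields $E_0(H_{J,\alpha})\ge N h(\mathbf{m}_0)+\kappa(\mathbf{m}_0)+|\mathbf{m}_0|\sqrt{\det D_h^\perp(\mathbf{m}_0)}-o(1)$ for every block with $\rho$ near $|\mathbf{m}_0|$; for blocks with $\rho$ macroscopically away from $|\mathbf{m}_0|$, the crude bound $E_0(H_{J,\alpha})\ge N\inf_\Omega h(\tfrac{2J}{N}\mathbf e(\Omega))-\mathcal{O}(1)$ from \textbf{A0} and the Berezin--Lieb lower bound gives energies $\gg N h(\mathbf{m}_0)$, and combining it with $\inf_\Omega h(\tfrac{2J}{N}\mathbf e(\Omega))\ge h(\mathbf{m}_0)+c(\rho-|\mathbf{m}_0|)^2$ shows $E_0(H_{J,\alpha})\ge N h(\mathbf{m}_0)+2c\,|J-J_N(\mathbf{m}_0)|^2/N-\mathcal{O}(1)$, which exceeds the value in~\eqref{eq:gsball} by more than $1$ once $|J-J_N(\mathbf{m}_0)|\ge A\sqrt N$ with $A$ fixed large. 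A trial state in the block with $J$ nearest $J_N(\mathbf{m}_0)$ (where, since $\nabla h(\mathbf{m}_0)=\mathbf{0}$, all the above errors are genuinely $o(1)$) provides the matching upper bound $E_0(H)\le N h(\mathbf{m}_0)+\kappa(\mathbf{m}_0)+|\mathbf{m}_0|\sqrt{\det D_h^\perp(\mathbf{m}_0)}+o(1)$. Hence $E_0(H)=E_0(H_{J,\alpha})$ for some block with $|J-J_N(\mathbf{m}_0)|=\mathcal{O}(\sqrt N)$, and no ground state has a component in a block with $|J-J_N(\mathbf{m}_0)|\ge A\sqrt N$.

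\textbf{Main obstacle.} The delicate point is the lower bound \emph{inside} the window: upgrading the easy coherent-state bound $E_0(H_{J,\alpha})\gtrsim N h(\mathbf{m}_0)-\mathcal{O}(1)$ to $o(1)$ precision, which requires confining low-energy vectors of $H_{J,\alpha}$ to $\ran P_J^{K_N}(\mathbf{x}^\ast(J))$ before \textbf{A1'} can be used. This is exactly the localization carried over from Section~\ref{sec:pfunim}: a two-scale argument, first using strict global minimality of $\mathbf{m}_0$ and Chebyshev on the Husimi distribution to trap a near-minimizer near the relevant pole, then using $D_h^\perp(\mathbf{m}_0)>0$ together with the identity $S_x^2+S_y^2=(J-S_z)(J+S_z+1)$ to show that excitation numbers $J-S_z\gg K_N$ cost an energy bounded away from zero (proportional to the number of excitations), so that the truncation to $\ran P_J^{K_N}(\mathbf{x}^\ast(J))$ is harmless; one must check this survives the incompatibility between the scale at which one would like to localize and the constraint $K_N=o(N^{1/3})$ from \textbf{A1'} — as it does in Section~\ref{sec:pfunim}. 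The only genuinely new input beyond Theorem~\ref{thm:unimin} is the radial geometry of $\mathbf{x}^\ast(J)$ and of $h$ along the radius, which is where the hypotheses $h\in C^3$ and $D_h(\mathbf{m}_0)>0$ (rather than just $D_h^\perp(\mathbf{m}_0)>0$) enter, and which pins the $\mathcal{O}(\sqrt N)$ size of the degenerate window and the closing of the gap.
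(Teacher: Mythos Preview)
Your approach is essentially the same as the paper's: both expand each block $H_{J,\alpha}$ around the minimizer of $h$ on the sphere of radius $2J/N$ (your $\mathbf{x}^\ast(J)$ is the paper's $\mathbf{m}_{J,N}$), invoke \textbf{A1'} there (the key observation being $J_N(\mathbf{x}^\ast(J))=J$), discard the radial gradient term as $\mathcal{O}(K_N/\sqrt N)=o(1)$, reduce the quadratic form to a planar oscillator, and combine this with a quadratic lower bound on $h$ (the paper formalizes this as Lemma~\ref{lem:truncation1a}) to confine the ground state to the $\mathcal{O}(\sqrt N)$ window and control the $F$-block in the Schur-complement step. The only differences are cosmetic---you use the implicit function theorem and Holstein--Primakoff/Bogoliubov language where the paper uses direct Taylor estimates and the explicit operators of Section~\ref{sec:limop}, and your sketch of the localization (Husimi/Chebyshev plus the identity $S_x^2+S_y^2=J(J+1)-S_z^2$) is a variant of the coherent-state lower bound the paper packages as Lemma~\ref{lem:truncation1a}.
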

The proof largely builds on the techniques of the proof of Theorem~\ref{thm:unimin} and is found in Section~\ref{sec:pfunim2}. The techniques allow in fact to determine the whole low-energy spectrum of every block $ H_{J,\alpha} $ with $ | J - J_N(\mathbf{m}_0) | \leq o(\sqrt{N }) $. 

\subsection{The case of a finite number of minima} 
Our last main result concerns the case of symbols with finitely many minima on the surface of the quantum sphere. 
\begin{theorem}\label{thm:mmin} 
Assuming \textbf{A0} and \textbf{A1} and that the symbol $ h $ has $ L $ global minima at $ \{ \mathbf{m}_1, \dots , \mathbf{m}_L \} \subset S^2 $ where at each minimum $ \nabla h(\mathbf{m}_l) \neq \mathbf{0} $ and $ \det   D_h^{\perp}(\mathbf{m}_l) > 0 $:
\begin{enumerate}
\item 
 any ground-state eigenvector of $ H $ on $ \mathcal{H}_N $ is  contained in the subspace with maximal total spin $J= N/2$. The ground-state energy is
 $$
 E_0(H) = \min_{l \in \{1,\dots,L\}} \left[  h( \mathbf{m}_l) + \kappa( \mathbf{m}_l) - | \nabla h(\mathbf{m}_l) |  + \sqrt{\det   D_h^{\perp}(\mathbf{m}_l)} \right] +o(1).
 $$
\item for any energy below $ E_0(H) + \Delta $ with $ \Delta  > 0 $ fixed but arbitrary, the eigenvalues of $ H $ stem from blocks $ (J,\alpha) $ with $ k = N/2 - J \in \mathbb{N}_0 $ fixed but arbitrary. When counted with multiplicity, these low-energy eigenvalues of $ H_{N/2-k,\alpha} $ for $ k \in \mathbb{N}_0 $ asymptotically coincide up to $ o(1) $ with the points in
        \begin{equation} \label{eq:allev2}
         N h( \mathbf{m}_l) + \kappa( \mathbf{m}_l) + (2k-1)\  | \nabla h(\mathbf{m}_l) | + (2m+1) \sqrt{\det   D_h^{\perp}(\mathbf{m}_l)} , 
        \end{equation}
        with $ m \in \mathbb{N}_0 $ and $ l \in \{1,\dots, L \} $. 
\end{enumerate}
\end{theorem}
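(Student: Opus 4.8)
The plan is to reduce the multi-minimum case to $L$ essentially decoupled copies of the unique-minimum analysis of Theorem~\ref{thm:unimin}, glued together by a localization argument on the quantum sphere. Our hypotheses are precisely those of Theorem~\ref{thm:unimin} imposed separately at each $\mathbf{m}_l$, and assumption \textbf{A1} already furnishes the quadratic approximation~\eqref{ass:quadr} in a neighbourhood of every minimum simultaneously; hence the entire local machinery developed in Section~\ref{sec:pfunim} can be run verbatim near each $\mathbf{m}_l$. The only genuinely new points are that, a priori, a low-energy eigenvector need not be concentrated near a single $\mathbf{m}_l$, and that the pieces concentrated near distinct minima might interact. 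Both are controlled by the spatial separation of the $\mathbf{m}_l$ on $S^2$.

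First I would reuse, essentially unchanged, the block reduction from the proof of Theorem~\ref{thm:unimin}: for any fixed $\Delta>0$ every eigenvalue of $H$ below $E_0(H)+\Delta$ stems from a block $H_{N/2-k,\alpha}$ with $k\in\nn_0$ bounded uniformly in $N$, since lowering the spin raises the local ground-state energy by $2|\nabla h(\mathbf{m}_l)|$ per unit of $k$, as in~\eqref{eq:allev}. Fixing such a block, I would then localize in the angular variable. Choose pairwise disjoint geodesic caps $U_1,\dots,U_L\subset S^2$ with $\mathbf{m}_l\in U_l$, together with a ``bulk'' set $U_0$ on which $h\geq\min_{B_1}h+\delta$ for some $\delta>0$ — possible because $h$ is continuous on the compact set $B_1$ and attains its minimum only on $\{\mathbf{m}_1,\dots,\mathbf{m}_L\}$ — and a smooth partition of unity $\sum_{l=0}^{L}\chi_l^2=\mathbbm{1}$ subordinate to this cover, realised as functions of the coherent-state direction and lifted to operators on $\mathbb{C}^{2J+1}$. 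An IMS-type localization formula then gives $H_{J,\alpha}=\sum_{l=0}^{L}\chi_l H_{J,\alpha}\chi_l+o(1)$ on the span of the low-energy eigenvectors: the error is a sum of double commutators $[[H_{J,\alpha},\chi_l],\chi_l]$ whose angular symbol is supported in the transition regions, which are disjoint from every minimum, and which carries two factors of $N^{-1}$ from the two derivatives, so that, evaluated against states concentrated at angular scale $N^{-1/2}$ near some $\mathbf{m}_l$, it is in fact exponentially small.

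It remains to analyse each summand. On $\ran\chi_l$ for $l\geq 1$ I would rotate $\mathbf{m}_l$ to the north pole and note that, since $\chi_l\equiv 1$ on a cap of fixed radius while $\ran P_{J}^{K_N}(\mathbf{m}_l)$ consists of states concentrated within angular distance $O(\sqrt{K_N/N})=o(1)$ of $\mathbf{m}_l$, the operator $\chi_l$ acts as the identity on $\ran P_{J}^{K_N}(\mathbf{m}_l)$ up to exponentially small corrections; therefore \textbf{A1} (together with Lemma~\ref{lem:altQ} if one prefers $\widehat Q$) lets us replace $\chi_l H_{J,\alpha}\chi_l$ on the low-energy range by $\kappa(\mathbf{m}_l)\mathbbm{1}+Q(\mathbf{m}_l)$, whose low-energy spectrum was computed in Section~\ref{sec:pfunim} to be precisely the shifted harmonic-oscillator ladder appearing in~\eqref{eq:allev2}. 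On $\ran\chi_0$ the bound $H_{J,\alpha}\chi_0\geq(N\min_{B_1}h+\tfrac{N\delta}{2})\chi_0$ — obtained from the Berezin–Lieb lower bound via the lower symbol, cf.~\eqref{eq:BerezinLieb} — shows this range contributes nothing below $E_0(H)+\Delta$. Because the caps are disjoint, trial states localized near different minima are orthogonal and their $H_{J,\alpha}$-matrix elements are $O(e^{-cN})$, the overlap of coherent states at separated points of $S^2$ decaying exponentially in $J\sim N$; so by a two-sided min-max argument the low-energy spectrum of $H_{J,\alpha}$ is, up to $o(1)$, the union over $l$ of the local ladders. Taking the union over the finitely many admissible $k$, and using that $k=0$, i.e.\ $J=N/2$, yields the smallest local ground-state energies, gives part~1; the complete list is~\eqref{eq:allev2}, giving part~2.

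The step I expect to be the main obstacle is the quantitative localization: showing that $\|[H_{J,\alpha},\chi_l]\psi\|$ and the inter-cap matrix elements are $o(1)$ — not merely $o(N)$ — on the span of the low-energy eigenvectors. This rests on first establishing, exactly as in the single-minimum proof, that those eigenvectors are concentrated on $\ran P_{J}^{K_N}(\mathbf{m}_l)$ for some $l$, after which the tridiagonal bounds~\eqref{eq:dthpowerS} and the exponential decay of coherent-state overlaps at separated points do the rest. Once the $L$ caps are decoupled in this way, everything else is a bookkeeping reprise of the proof of Theorem~\ref{thm:unimin} carried out $L$ times.
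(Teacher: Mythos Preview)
Your strategy is sound but takes a genuinely different route from the paper. The paper does \emph{not} use an IMS localization with smooth cutoffs; instead it works with the hard spectral projections $P_J^{K_N}(\mathbf{m}_l)$ onto the top $K_N+1$ eigenstates of $\mathbf{m}_l\cdot\mathbf{S}$. The new ingredients compared to the one-minimum proof are: (i)~a Gram-matrix argument (Lemma~\ref{lem:spectrumG}, Theorem~\ref{cor:approxP}) showing that the bases of the $L$ patches are mutually orthogonal up to $o(N^{-\infty})$ for $J\geq N/2-K_N$, so that one can pass to an exactly orthogonal decomposition $\mathcal H_J^{K_N}=\bigoplus_l\mathcal H_J^{K_N}(l)$ at negligible cost; (ii)~a refined truncation Lemma~\ref{lem:truncation2}, based on a lower bound~\eqref{eq:lowerhLmin} of the form $h(\mathbf{m})\geq h(\mathbf{m}_1)+f_0+\sum_l f(\mathbf{m}_l\cdot\mathbf{m})$ with a $C^2$ bump $f$, that simultaneously lifts the energy on the complement $Q_J^{K_N}$ by $cK_N$ and reproduces the single-minimum linear bound on each $P_J^{K_N}(\mathbf{m}_l)$. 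With (i) and (ii) in hand, the Schur-complement argument of Section~\ref{sec:pfunim} is rerun verbatim with the direct-sum approximant $H_J^{(N)}=\bigoplus_l H_J^{(N)}(l)$.

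Your IMS route trades the Gram-matrix bookkeeping for a commutator estimate: the double commutator $[[H_{J,\alpha},\chi_l],\chi_l]$ is indeed $O(N^{-1})$ in operator norm (one factor of $N^{-1}$ per commutator, against the overall factor $N$ in $H$), so the localization error is $o(1)$ uniformly --- no appeal to a priori concentration of eigenvectors is needed, and the circularity you flag dissolves. The paper's route has the advantage that no symbol calculus is required: everything is reduced to explicit matrix elements in the $|J-k;\mathbf{m}_l\rangle$ bases and the exponential overlap bound of Lemma~\ref{lem:overlap}. Your route is closer to standard multi-well Schr\"odinger analysis and avoids the Gram orthogonalization entirely.

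One place in your sketch that would need genuine work is the bulk estimate $\chi_0 H_{J,\alpha}\chi_0\geq (N\min h+\tfrac{N\delta}{2})\chi_0^2$. Berezin--Lieb alone does not give this, since $\chi_0$ (however you quantize it) does not commute with the coherent-state projectors; you need a composition-of-symbols argument to the effect that the lower symbol of $\chi_0 H_{J,\alpha}\chi_0$ is $g_0^2\cdot Nh+O(1)$. The paper sidesteps this by proving the analogous statement directly for the spectral projection $Q_J^{K_N}$ in Lemma~\ref{lem:truncation2}, using the coherent-state representation and the Chernoff bound of Proposition~\ref{lem:chernoff} to show that $Q_J^{K_N}|\Omega,J\rangle$ is negligible for $\Omega$ near any $\mathbf{m}_l$.
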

The proof, which in comparison to Theorem~\ref{thm:unimin} poses the additional difficulty of controlling the tunnelling between minima,  is found in Section~\ref{sec:prmmin}.\\

The theorem allows for degeneracies in the spectrum already at the level of the ground state. The quantum Curie-Weiss model $P(\mathbf{m})  =  -m_x^2-\gamma m_z$ in the ferromagnetic phase $  |\gamma| < 2 $ is an example. The gradient's norm $ |\nabla P(\mathbf{m}_0^\pm)| = 2 $ and $ \det D_P^\perp(\mathbf{m}_0^\pm) = 4- \gamma^2 $ agree for the two minima $ \mathbf{m}_0^\pm \in S^2 $. Therefore, all the low-energy eigenvalues described by~\eqref{eq:allev2} are approximately doubly degenerate. In this situation the formula~\eqref{eq:gappol} yields the gap of the nearly $ L $-fold degenerate ground state ($L=2$ for Curie-Weiss)  to the next energy levels. 
Our proof enables to show that the level splitting due to tunnelling through a macroscopic barrier from one minimum to the other is smaller than any polynomial in $ N^{-1} $. As demonstrated numerically in~\cite{VGR20}, the quantum Curie-Weiss's ferromagnetic phase  exhibits the 'flea on the elephant phenomenon'  \cite{Jona-Lasinio:1981aa}, i.e., the sensitivity of the ground-state function to perturbations. It might be interesting to combine the techniques in this paper with \cite{Helffer:1984aa,Simon:1985aa} for a proof of this. \\

Let us conclude the main part of the paper with a general outlook. 
Having derived precise low-energy asymptotics of eigenvalues in terms of quadratic approximations and using techniques as in~\cite[Sec.~5]{MW23}, our techniques should extend to derive the fluctuations of the free energy not only at $ \beta = \infty $, but also for finite temperature. 

It would also be interesting to investigate the dynamical properties of the mean-field Hamiltonians. Coherent wavepackets evolve semiclassically~\cite{Frohlich:2007aa}. The latter work also connects to the question of whether the description of the low-energy spectra in the present paper are helpful in the analysis of models which become semiclassical only in a Kac-type scaling limit (cf.~\cite{Pres09}).

\section{Semiclassical analysis}

This section is dedicated to the proofs of Theorems~\ref{thm:unimin}-\ref{thm:mmin}.  The proofs combine projection techniques  in a Schur-complement analysis, which in the present paper constitutes of the following simple principle. 
\begin{proposition}\label{prop:Schur}
Let $ A $ be a bounded self-adjoint operator on a Hilbert space $ \mathcal{H} $, and $ E $ and $ F $ orthogonal projections with $ E + F = \mathbbm{1}_\mathcal{H} $. Assume that $ a < \inf \spec F A F  $ and let $ R(a) = ( F AF -  a F)^{-1} $ stand for the block inverse on $ F\mathcal{H} $. Then 
$$
 a \in \spec A \quad \mbox{if and only if} \quad  0 \in \spec EAE - a E - EAFR(a) FAE .
$$
In particular, the eigenvalues  $ \alpha_0(A) \leq \alpha_1(A) \leq \dots $ of $ A $ (counted with multiplicities) and the respective eigenvalues of $ EAE $ satisfy
$$
\left| \alpha_j(A) - \alpha_j(EAE) \right| \leq \frac{\| E  A F \|^2}{\dist(\spec FAF, a) }
$$
provided $ \alpha_j(A) < a < \spec FAF$. 
\end{proposition}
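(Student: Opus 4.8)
The plan is to exploit the standard Schur-complement (Feshbach) identity for the block decomposition of $A - a\mathbbm{1}$ with respect to $\mathcal{H} = E\mathcal{H} \oplus F\mathcal{H}$. Since $a < \inf\spec FAF$, the block $FAF - aF$ is strictly positive on $F\mathcal{H}$, hence boundedly invertible there with inverse $R(a)$, and one has $\|R(a)\| = \dist(\spec FAF, a)^{-1}$. First I would write the factorization: on $\mathcal{H}$ one has, in the $2\times 2$ block form,
\[
A - a\mathbbm{1} = \begin{pmatrix} \mathbbm{1}_E & EAFR(a) \\ 0 & \mathbbm{1}_F \end{pmatrix} \begin{pmatrix} S(a) & 0 \\ 0 & FAF - aF \end{pmatrix} \begin{pmatrix} \mathbbm{1}_E & 0 \\ R(a)FAE & \mathbbm{1}_F \end{pmatrix},
\]
where $S(a) := EAE - aE - EAFR(a)FAE$ is the Schur complement on $E\mathcal{H}$. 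The outer triangular factors are bounded and boundedly invertible (their inverses are obtained by negating the off-diagonal entry), and the middle factor has the block $FAF - aF$ invertible. Consequently $A - a\mathbbm{1}$ is invertible on $\mathcal{H}$ if and only if $S(a)$ is invertible on $E\mathcal{H}$; equivalently, $a \in \spec A$ iff $0 \in \spec S(a)$. This gives the first claim directly.

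For the eigenvalue bound, I would first reduce to the case $a = 0$ by replacing $A$ with $A - a\mathbbm{1}$ (note $S(a)$ is unchanged in form, becoming the Schur complement of $A-a\mathbbm 1$). Fix $j$ with $\alpha_j(A) < a < \inf\spec FAF$. The key observation is that for any $\lambda$ in the interval $(\alpha_j(EAE - aE) , 0)$ — or more precisely on the relevant spectral range below $a$ — the operator $EAFR(a)FAE$ is positive semidefinite with norm at most $\|EAF\|^2 \|R(a)\| = \|EAF\|^2 / \dist(\spec FAF, a)$. Therefore
\[
EAE - aE - \frac{\|EAF\|^2}{\dist(\spec FAF, a)}\, E \;\le\; S(a) \;\le\; EAE - aE,
\]
as quadratic forms on $E\mathcal{H}$. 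Combining this with the spectral equivalence $\alpha_j(A-a\mathbbm 1) = \alpha_j(S(a))$ valid below the threshold $0$ (which follows from the first part applied to shifted operators, together with the fact that the count of eigenvalues of $A$ strictly below $a$ equals the count of eigenvalues of $S(a)$ strictly below $0$ — this is where one uses that no spectrum of $FAF$ lies below $a$, so the $F$-block contributes nothing in that range), the min-max principle applied to the form inequality yields
\[
\alpha_j(EAE) - a - \frac{\|EAF\|^2}{\dist(\spec FAF, a)} \le \alpha_j(A) - a \le \alpha_j(EAE) - a ,
\]
which is the asserted estimate.

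The main obstacle is the bookkeeping that makes ``$\alpha_j(A) = \alpha_j(S(a)) + a$ below the threshold'' precise: one must argue that counting multiplicities is consistent, i.e. that the spectral projections of $A$ onto $(-\infty, a)$ and of $S(a)$ onto $(-\infty,0)$ have the same rank and that eigenvalues match in order. This follows from a continuity/deformation argument — interpolating $a' \mapsto S(a')$ as $a'$ decreases, or equivalently using that $\lambda \in \spec A$ with $\lambda < a$ iff $0 \in \spec(EAE - \lambda E - EAFR(\lambda)FAE)$ and that $\lambda \mapsto EAF R(\lambda) FAE$ is monotone increasing in $\lambda$ on $(-\infty, \inf\spec FAF)$ — but it needs to be stated carefully. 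Everything else is routine linear algebra of the Schur complement and the variational characterization of eigenvalues.
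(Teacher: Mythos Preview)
The paper itself only writes ``The proof is elementary'' and cites standard references, so there is no detailed argument to compare against; your Schur--complement factorization is exactly the right tool, and the first assertion ($a\in\spec A \Leftrightarrow 0\in\spec S(a)$) is handled correctly.

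The eigenvalue estimate, however, contains a genuine error. You claim that $\alpha_j(A-a\mathbbm{1}) = \alpha_j(S(a))$ for indices with $\alpha_j(A)<a$. This is false. Take $A=\begin{pmatrix}0&1\\1&2\end{pmatrix}$ with $E,F$ the coordinate projections and $a=1$: then $\alpha_0(A)-a=-\sqrt{2}$ while $S(a)=0-1-1\cdot 1\cdot 1=-2$, so $\alpha_0(S(a))=-2\neq-\sqrt{2}$. What \emph{is} true (and what your ``inertia'' remark establishes) is only that the \emph{number} of negative eigenvalues of $A-a$ and of $S(a)$ coincide; the individual eigenvalues do not match. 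Your final paragraph gestures at monotonicity of $\lambda\mapsto R(\lambda)$, but you present it as a route to the same false identity rather than as a replacement for it.

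The repair is to abandon the comparison with the fixed operator $S(a)$ and argue directly via min--max. The bound $\alpha_j(A)\le\alpha_j(EAE)$ is immediate by restricting test subspaces to $E\mathcal H$. For the other direction, set $\lambda_j:=\alpha_j(A)<a$ and use your completed-square identity at $\lambda=\lambda_j$ rather than at $a$: on the span $V$ of the lowest $j+1$ eigenvectors of $A$ one has $\langle\psi,(A-\lambda_j)\psi\rangle\le 0$, hence $\langle E\psi,S(\lambda_j)E\psi\rangle\le 0$. Now $\lambda_j<a$ implies $R(\lambda_j)\le R(a)$, so $S(\lambda_j)\ge EAE-\lambda_jE-\delta E$ with $\delta=\|EAF\|^2/\dist(\spec FAF,a)$, giving $\langle E\psi,(EAE-\lambda_j-\delta)E\psi\rangle\le 0$ on $V$. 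Finally $E|_V$ is injective (a nonzero $\psi\in V\cap F\mathcal H$ would have $\langle\psi,A\psi\rangle\ge(\inf\spec FAF)\|\psi\|^2>\lambda_j\|\psi\|^2$, contradicting $\psi\in V$), so $\dim EV=j+1$ and min--max yields $\alpha_j(EAE)\le\lambda_j+\delta$.
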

The proof is elementary. Extensions of this statement can be found in \cite{SZ03,Zh05,Zwo12,GustSig11}. \\

Through suitably defined projections, our proof  of Theorems~\ref{thm:unimin}-\ref{thm:mmin} focuses the spectral analysis of the mean-field Hamiltonian on subspaces associated with patches around the classical minima of its symbol. More precisely, we proceed by a two-step localization procedure. To illustrate the main idea, let us focus on the proof of  Theorem~\ref{thm:unimin}. For a first coarse localization and at a fixed value of the total spin $ J $, we  restrict attention to the subspaces $ \mathcal{H}^{K_N}_{J}( \mathbf{m}_0)  $ with $ K_N $ increasing with $ N $ and $ \mathbf{m}_0 $  the unique minimum of the symbol. On these increasing nested subspaces, we construct a sequence of approximate Hamiltonians, which result from an explicit limit operator of the fluctuation operators.  In a second localization, the Schur-complement analysis  of Proposition~\ref{prop:Schur} is then applied with $ E $ the subspace spanned by an arbitrary, but finite number of eigenspaces of the approximate limit operator, which turns out to coincide with a harmonic oscillator. The off-diagonal terms featuring in~Proposition~\ref{prop:Schur} are estimated with the help of exponential decay estimates on the explicit eigenfunctions of the limit operator. 

In case of Theorem~\ref{thm:mmin}, we proceed similarly. An additional challenge is to control the tunneling  between patches of different minima. This is done here using the exponential decay of coherent-state inner products, which we derive in Appendix~\ref{app:cs1}. Let us stress that our method does not rely on sharp semiclassical tunneling estimates in terms of the Agmon metric, which are available e.g.\ for related problems of multi-well tunneling for Schr\"odinger operators~\cite{Helffer:1984aa,Simon:1985aa}. 

To the best of our knowledge, the above proof strategy is new in the present context of a spectral analysis of mean-field quantum spin systems. In a broader context. the Schur complement methods or more generally, Gushin's method, have been employed before in semiclassical analysis, e.g.\ for stability analysis and, under the name Feshbach-Schur method, in perturbation theory in quantum mechanics,  cf.~\cite{SZ03,Zwo12,GustSig11}.\\

We start with a description of the limit operators. The proof of Theorems~\ref{thm:unimin}-\ref{thm:mmin} follow in the subsequent sections.

\subsection{Limit of fluctuation operators}\label{sec:limop} 

We start our proof by introducing two operators $L_x$ and $L_y$ on the Hilbert space $\ell^2(\nn_0)$, which turn out to be unitarily equivalent to the position and momentum operator on $L^2(\rr)$. This enables us to determine the spectrum and eigenfunctions of the operator
\[ D = \omega^2 L_x^2+ L_y^2  ,  \]
which is equivalent to a harmonic oscillator with frequency $\omega$. In the following subsections, we show that the Hamiltonians described in Section~\ref{sec:ass}  indeed converges in a sense to be specified locally to an operator of the form $D$.  This is the key to determine their low energy spectra explicitly. \\

To set the stage, we define $L_x$ and $L_y$ via their matrix elements in terms of the canonical orthonormal bases in $ \ell^2(\mathbb{N}_0) $:
\begin{equation}\label{def:L}
		\langle k | L_x | {k^\prime} \rangle =  i^{k^\prime-k} 
		\langle k | L_y | {k^\prime} \rangle = \sqrt{\frac{\max\{k,k^\prime\}}{2}} \delta_{|k-k^\prime| =1}.
\end{equation}
They  give rise to essentially self-adjoint operators on the dense domain 
\[ c_{00} \coloneqq \{ (x_n)_{n \in \nn}\, | \, \exists \, N \in \nn \text{ such that } x_n = 0 \, \, \forall \, n \geq N \}.  \]
Moreover, by an elementary calculation 
\begin{equation}\label{eq:comm} [L_x,L_y] | k \rangle \coloneqq (L_x L_y - L_y L_x) | k \rangle  = i | k \rangle ,  \end{equation}
i.e., $L_x$ and $L_y$ satisfy the canonical commutation relations. The following proposition is essentially a consequence of this observation.

\begin{proposition}\label{prop:harm}
Let $\omega \geq 1$. Then, $D = \omega^2 L_x^2+  L_y^2$ is a positive, essentially self-adjoint operator on the domain $ c_{00} \subset \ell^2(\mathbb{N}_0)$ with spectrum 
	 \begin{equation}\label{eq:harmosc}
	 	\spec D = \{(2k+1) \omega \, | \, k \in \nn_0 \} .
	 \end{equation}
Every point in the spectrum is a non-degenerate eigenvalue.
The ground-state  $\psi_0$ is 
 \begin{equation}\label{eq:gs}
 	\langle n | \psi_0\rangle = \omega^{1/4} \sqrt{\frac{2}{(\omega +1) n!}} \left(\sqrt{\frac{\omega-1}{2(\omega +1)}}\right)^n H_n(0).
 \end{equation}
The $k$-th excited state is given by $\psi_k = (a^\dagger)^k \psi_0/\sqrt{k!} $ with the raising operator
$
 	a^\dagger\coloneqq \sqrt{\frac{\omega}{2}} \left(L_x - \frac{i}{\omega} L_y\right) 
$. 
 In particular, $ \langle n | \psi_k\rangle = 0 $ unless $ k-n $ is even in which case, we have the exponential decay estimate
 \begin{align}\label{eq:expdecay}
 \left|  \langle n | \psi_k\rangle  \right|^2 & \leq  \sqrt{\frac{\omega}{\pi}} \frac{2^{2k+1}}{\omega+1}  \frac{n^{k-\frac{1}{2}}}{ k!}   \left( \frac{\omega-1}{\omega+1}\right)^{n-k } ,\quad  \mbox{if $ n\geq 2k $,} && \\ 
  \left|  \langle n | \psi_k\rangle  \right|^2 & \leq   \sqrt{\frac{\omega}{\pi}}\frac{2^{2n+1}}{\omega+1}   \frac{k^{n-\frac{1}{2}}}{ n!}   \left( \frac{\omega-1}{\omega+1}\right)^{n-k } ,\quad  \mbox{if $ k\geq 2 n $.} &&
  \label{eq:expdecay2}
 \end{align}
\end{proposition}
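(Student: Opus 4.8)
The plan is to reduce the operator $D=\omega^2 L_x^2+L_y^2$ to a standard harmonic oscillator on $L^2(\mathbb R)$ via the canonical commutation relation \eqref{eq:comm}, and then transport all spectral information back to $\ell^2(\mathbb N_0)$. First I would verify essential self-adjointness: since $L_x,L_y$ are tridiagonal with matrix elements growing like $\sqrt{k}$, each is essentially self-adjoint on $c_{00}$ (e.g. by Nelson's analytic-vector criterion, or the classical Jacobi-matrix Carleman criterion $\sum (\max\{k,k'\}/2)^{-1/2}=\infty$), and the same polynomial growth bound shows $D$ is essentially self-adjoint on $c_{00}$; positivity is clear since $D=\omega^2 L_x^2+L_y^2\ge 0$ as a sum of squares of symmetric operators. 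Because $[L_x,L_y]=i$ on $c_{00}$ and $L_x,L_y$ are irreducible (the only bounded operator commuting with the tridiagonal pair is scalar), the Stone--von Neumann theorem gives a unitary $\mathcal U:\ell^2(\mathbb N_0)\to L^2(\mathbb R)$ with $\mathcal U L_x\mathcal U^*=X$ (multiplication by $x$) and $\mathcal U L_y\mathcal U^*=-i\,d/dx=:P$, up to the choice of representation.

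Next I would identify the image operator: $\mathcal U D\mathcal U^*=\omega^2 X^2+P^2$, which is the harmonic oscillator of frequency $\omega$ (in the convention where $P^2+\omega^2 X^2$ has spectrum $\{(2k+1)\omega\}$), giving \eqref{eq:harmosc} with every eigenvalue simple. Rather than invoking Stone--von Neumann abstractly, it is cleaner and more self-contained to define the ladder operator $a^\dagger=\sqrt{\omega/2}\,(L_x-\tfrac i\omega L_y)$ and $a=\sqrt{\omega/2}\,(L_x+\tfrac i\omega L_y)$ directly on $\ell^2(\mathbb N_0)$; from \eqref{eq:comm} one computes $[a,a^\dagger]=1$ and $D=\omega(2a^\dagger a+1)$ (being careful with operator-domain issues, working on $c_{00}$ which is invariant under $L_x,L_y$, hence under $a,a^\dagger$). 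One then shows directly that the vector $\psi_0$ defined by \eqref{eq:gs} lies in $\ell^2(\mathbb N_0)$ — the stated formula is the Fourier--Hermite expansion of the Gaussian ground state pulled back through the $L_x$-eigenbasis, and summability follows since $H_n(0)=0$ for odd $n$ and $|H_{2m}(0)|^2/(2m)!$ decays geometrically against $(\tfrac{\omega-1}{2(\omega+1)})^{2m}$ for $\omega\ge 1$ — and that $a\psi_0=0$. Since $a^\dagger a$ has the standard number-operator structure once a normalizable vacuum is exhibited, the vectors $\psi_k=(a^\dagger)^k\psi_0/\sqrt{k!}$ form an orthonormal basis of eigenvectors, $D\psi_k=(2k+1)\omega\,\psi_k$, with no other spectrum and no degeneracy.

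For the parity statement $\langle n|\psi_k\rangle=0$ unless $k-n$ is even: $L_x$ and $L_y$ both map the "even" subspace $\overline{\operatorname{span}}\{|2j\rangle\}$ to the "odd" subspace and vice versa (they are $|k-k'|=1$ matrices), so $a^\dagger$ flips parity; since $\psi_0$ is supported on even indices (as $H_n(0)=0$ for odd $n$), $\psi_k$ is supported on indices $\equiv k\pmod 2$. Finally, for the decay estimates \eqref{eq:expdecay}--\eqref{eq:expdecay2} I would obtain a closed form for $\langle n|\psi_k\rangle$. The cleanest route is to recognize that in the $L_x$-representation the states $\psi_k$ are the Hermite functions, so $\langle n|\psi_k\rangle$ equals an overlap integral $\int \varphi_n^{(1)}(x)\,\varphi_k^{(\omega)}(x)\,dx$ of Hermite functions with two different frequencies ($1$ coming from the $L_x$-eigenbasis normalization, $\omega$ from $D$); such Gaussian--Hermite overlaps are classical and evaluate to a single Hermite polynomial times a power of $\tfrac{\omega-1}{\omega+1}$ and a power of $\tfrac{2\sqrt\omega}{\omega+1}$ — consistent with \eqref{eq:gs} at $k=0$. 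From the explicit expression one bounds $|H_j(0)|\le 2^{j/2}\sqrt{j!}$ (or the exact $H_{2m}(0)=(-1)^m (2m)!/m!$) and uses crude factorial inequalities to arrive at the asymmetric bounds valid in the regimes $n\ge 2k$ and $k\ge 2n$ respectively; the asymmetry simply reflects which of $n,k$ dominates the combinatorial prefactor after symmetrizing the overlap formula.

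I expect the main obstacle to be the honest bookkeeping around operator domains and self-adjointness — making sure that $c_{00}$ is a core, that $a,a^\dagger$ are genuine adjoints of each other there, and that the formal identity $D=\omega(2a^\dagger a+1)$ upgrades to a statement about self-adjoint operators (so that the spectrum is exactly $\{(2k+1)\omega\}$ with nothing else, rather than just "contains"). The rest — the parity observation, producing the closed form for $\langle n|\psi_k\rangle$, and the elementary estimates leading to \eqref{eq:expdecay}--\eqref{eq:expdecay2} — is routine once the oscillator identification is in place, though the explicit Hermite-overlap computation requires some care to get the constants exactly matching the stated inequalities.
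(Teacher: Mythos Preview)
Your approach is essentially the same as the paper's: both reduce $D$ to the harmonic oscillator on $L^2(\mathbb R)$ via the canonical commutation relation~\eqref{eq:comm}, identify $\langle n|\psi_k\rangle$ with the overlap integral $\int \varphi_n^{(1)}(x)\varphi_k^{(\omega)}(x)\,dx$ of Hermite functions at two different frequencies, and extract the decay bounds from the resulting explicit formula via Stirling-type factorial estimates. The paper obtains the closed form for the overlap through the multiplication theorem for Hermite polynomials (yielding a finite sum rather than ``a single Hermite polynomial times a power'' as you write --- that simplification only occurs at $k=0$), but otherwise the route and the ingredients coincide; your extra care with essential self-adjointness and the optional ladder-operator alternative are welcome additions that the paper omits.
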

The value $H_n(0)$ is explicit:
\begin{equation} \label{eq:Hermite0}
H_n(0) = \begin{cases} (-1)^{n/2} \frac{n!}{(n/2)!}& \text{ if } $n$ \text{ even } \\
                    0 &  \text{ if } $n$ \text{ odd. }\end{cases} 
 \end{equation} 

In case $ \omega = 1 $, the eigenbasis turns out to agree with the canonical orthonormal basis, $ | \psi_k \rangle = | k \rangle $  for all $ k \in \mathbb{N}_0 $. The exponential decay estimates~\eqref{eq:expdecay} and~\eqref{eq:expdecay2} will play a crucial role in subsequent approximation results. 
\begin{proof}[Proof of Proposition~\ref{prop:harm}] 
The commutation relation \eqref{eq:comm} implies that there is a unitary $U \colon \ell^2(\nn_0) \to L^2(\rr)$ such that 
$ U L_x U^{*} = \hat{x} $ and $ U L_y U^{*} = \hat{p} $, 
where $\hat{x}$ and $\hat{p}$ denote the position and momentum operator. In particular, $U D U^{*}$ is the standard harmonic oscillator with frequency $\omega$, which yields the basic assertion on  $\spec  D $. The eigenfunctions of  $ U D U^{*} $ are known to be given by 
\[ \varphi_n^\omega(x) = \frac{1}{\sqrt{2^n n!}} \left( \frac{\omega}{\pi} \right)^{1/4} e^{-\omega x^2/2} H_n(\sqrt{\omega} x).   \]
This unitary equivalence also proves the ladder operator representation for the excited states (cf.~\cite{Hall13}). 
Since $U| n \rangle  = | \varphi_n^{1} \rangle $, we seek for a representation of  $\varphi_m^\omega$ and, in particular $ m = 0 $, in terms of the $\omega = 1$ eigenfunctions:
$$
\langle \varphi_n^{1}  | \varphi_m^{\omega} \rangle  = \frac{\omega^{1/4}}{\sqrt{2^{n+m} \pi  n! m! } }   \int  e^{-(\omega+1) x^2/2 } H_n\left(x \right) H_m\left(\sqrt{\omega} x \right)   dx 
$$
Using a change of variables and subsequently the multiplication theorem for Hermite polynomials, 
$$
H_n(\alpha x) = \sum_{l=0}^{\lfloor\frac{n}{2}\rfloor} \alpha^{n-2l} (\alpha^2 -1)^l \frac{n!}{(n-2l)! l!} H_{n-2l}(x) , \quad \alpha \in \mathbb{R} ,
$$
we arrive after some elementary algebra at 
 \begin{align}\label{eq:monster}
 	\langle n | \psi_k\rangle =  \omega^{1/4} \sqrt{\frac{2 \ n! k!}{(\omega +1)}} \  \sum_{l=0}^{\lfloor\frac{n}{2}\rfloor} (-1)^{l} \frac{1\left[l + \tfrac{k-n}{2} \in \big\{0, 1 , \dots , \lfloor\frac{k}{2}\rfloor  \big\} \right]}{ (n-2l)! \ l! \ (l+ \tfrac{k-n}{2} )!} \left( \frac{2\sqrt{\omega}}{\omega+1}\right)^{n-2l} \left( \frac{\omega-1}{2(\omega+1)}\right)^{2l +\tfrac{k-n}{2} }   . 
 \end{align}
The above sum is only non-zero if $ n - k $ is even. In particular, due to the indicator function requiring $ l = n/2 $,  for $ k = 0 $  only one term in the sum survives yielding~\eqref{eq:gs} with $ H_n(0) $ replaced by~\eqref{eq:Hermite0}. 

For a proof of the exponential decay estimates~\eqref{eq:expdecay}, we start from~\eqref{eq:monster} with the triangle inequality. Since $  \frac{2\sqrt{\omega}}{\omega+1} \leq 1 $, this term can be upper bounded by one. Furthermore, since also $ \frac{\omega-1}{2(\omega+1)} \leq 1 $, the we may lower bound $ 2l +\tfrac{k-n}{2} \geq \frac{|k-n|}{2}  $, since $ l\geq 0 $ in case $ k\geq n $ and $   l\geq (n-k)/2 $ in case  $ n\geq k $. 
Therefore, in case $ k \geq n $ it remains to estimate
$$
 \sum_{l=0}^{\lfloor\frac{n}{2}\rfloor} \frac{\sqrt{n! k!}}{(n-2l)!} \frac{1}{l! (l+ \tfrac{k-n}{2} )!} \leq \sum_{l=0}^{n} \binom{n}{l} \frac{\sqrt{k!}}{\sqrt{n!} \ (\tfrac{k-n}{2} )!} = \frac{2^n}{\sqrt{n!} }  \frac{\sqrt{k!}}{(\tfrac{k-n}{2} )!} .
$$
The last ratio is then estimated with the help of standard Stirling bounds,
$$
\sqrt{2\pi m} \left(\frac{m}{e}\right)^m \exp\left(\frac{1}{12m+1}\right)\leq m! \leq \sqrt{2\pi m} \left(\frac{m}{e}\right)^m \exp\left(\frac{1}{12m}\right), \quad m \in \mathbb{N} , 
$$
together with the elementary bound $ \exp\left( - (k-n) \ln\left( 1 -  \frac{n}{k}\right) \right)  \leq 2^n $ valid for all $ k \geq 2n $. This yields~\eqref{eq:expdecay}  after some algebra. 
In case $ n \geq k $ we proceed similarly. The only difference is that the sum in~\eqref{eq:monster} starts from~$\frac{|k-n|}{2}$. 
\end{proof}

The connection of $ D $ to our models will be through its approximations defined on $ \mathbb{C}^{2J+1} $:
\begin{equation}\label{def:A_N}
D_N :=  \omega^2  \big(L^{(N)}_x\big)^2 +\big(L^{(N)}_y\big)^2    \qquad\mbox{with} \quad L^{(N)}_\xi \coloneqq \frac{S_\xi}{\sqrt{J_N}} , \quad J_N := \frac{N}{2} |\mathbf{m}_0| .
\end{equation}
In this subsection $ |\mathbf{m}_0| \in (0,1] $ is treated as a scaling parameter. 
When restricting the commutation relation, $ \big[  L^{(N)}_x ,  L^{(N)}_y \big] = i S_z / J_N $, to the subspace 
$$ \mathcal{H}^{K_N}_{J} :=\spa\left\{ | J-k \rangle \ \big| \ k \in \{ 0, \dots , K_N\} \right\}  
$$ 
with $ |J-J_N| \leq \overline{K}_N $, the commutator 
asymptotically agrees as $ N \to \infty $ with the canonical one as long as both $ \overline{K}_N $ and $ K_N $ grow only suitably slow with $ N $. 
Hence $ D_N   $ when restricted to $  \mathcal{H}^{K_N}_{J} $  is expected to be equal to the harmonic oscillator $ D $ in the limit $ N \to \infty $.  
This observation is at the heart of many works on quantum spin systems in the large $ J $ limit \cite{HeppLieb73,MN04,MN05}. To turn it into a mathematical argument in the present context, we note that through the identification $ | J-k \rangle \equiv | k \rangle $ of their canonical orthonormal basis, the Hilbert spaces $   \mathcal{H}^{K_N}_{J} \subset \mathbb{C}^{2J+1}  $ are all canonically embedded into $ \ell^2(\mathbb{N}_0 ) $. This embedding will be denoted by 
$$ I_J^{(N)}: \mathcal{H}^{K_N}_{J}\to \ell^2(\mathbb{N}_0) , \quad  \mbox{and}\quad  \overline I_J^{(N)}  : \ell^2(\mathbb{N}_0) \to  \mathcal{H}^{K_N}_{J}\ $$
stands for its corresponding projection. Thus  
$$
\overline D_{J,N} \coloneqq   \overline  I_J^{(N)} D  I_J^{(N)} 
$$
when restricted to $  \mathcal{H}^{K_N}_{J}  $ is unitarily equivalent to $\overline D_{N} \coloneqq  P_{K_N} D P_{K_N} $  when restricted to 
$$ I_J^{(N)}  \mathcal{H}^{K_N}_{J} = \spa\left\{ | k \rangle \in \ell^2(\mathbb{N}_0) \ | \ k \in \{ 0, \dots , K_N\} \right\}  \subset c_{00} $$ 
with $ P_{K_N} $ denoting the corresponding orthogonal projection in $ \ell^2(\mathbb{N}_0) $.

\begin{lemma}\label{lem:src}
\begin{enumerate}
\item
For any choice of sequences $ K_N = o(N^{1/2}) $ and $ K_N  \overline{K}_N^3 = o(N^2) $:
\begin{align}\label{eq:normDDN}
& \max_{|J-J_N|\leq \overline{K}_N} \left\| \left(\overline D_{J,N} -   D_N   \right) P_J^{K_N^-}  \right\| =  o(1) ,
\end{align}
where $ K_N^- \coloneqq K_N -2 $ and $ P_J^{K_N^-}  $ denotes the orthogonal projection onto $  \mathcal{H}^{K_N^-}_{J} \subset  \mathcal{H}^{K_N}_{J}$.  
\item 
For any sequence $ K_N \to \infty $, the operators $ \overline D_N = P_{K_N} D P_{K_N}   $ converge as $ N \to \infty $ in strong-resolvent sense to $ D $.  
Moreover, we also have the pointwise convergence of eigenvalues:
\begin{equation}\label{eq:convev}
 E_k\left(   \overline D_N \right) = E_k\left( D \right) + o(K_{N}^{-\infty}) 
\end{equation}
 for any $ k \in \mathbb{N}_0 $. If $ E_K $ and $ E_K^{(N)} $ denote the spectral projection onto the $ K \in \mathbb{N} $ lowest  eigenvalues of  $ D $  and $  \overline D_N $ on $ P_{K_N} \ell^2(\mathbb{N}_0) $, then 
\begin{equation}\label{eq:convspecprN}  \left\|  \left( E_K^{(N)} - E_K\right) E_K^{(N)} \right\| = o(K_{N}^{-\infty}) 
\end{equation}
where $ o(K_{N}^{-\infty})  $ denotes a sequence, which when multiplied by an arbitrary but fixed power of $ K_N $ goes to zero. \\
\end{enumerate}
\end{lemma}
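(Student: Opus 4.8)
The plan is to prove the two items by quite different means: item~(1) by an explicit comparison of matrix elements, and item~(2) by soft perturbation theory driven by the quantitative exponential decay of the eigenfunctions $\psi_k$ from Proposition~\ref{prop:harm}. Throughout I take $\omega\ge 1$; the case $\omega=1$ is trivial, since then $\psi_k=|k\rangle$ has finite support and $\overline D_N$ agrees exactly with $D$ on $P_{K_N}\ell^2(\mathbb{N}_0)$ as soon as $K_N>k$.

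For item~(1): by~\eqref{eq:MatrixSorth} the operators $S_x^2$ and $S_y^2$ on $\mathbb{C}^{2J+1}$, and hence $D_N=J_N^{-1}(\omega^2 S_x^2+S_y^2)$ and $\overline D_{J,N}=\overline I_J^{(N)}DI_J^{(N)}$ (the latter being simply the compression of $D$ to the span of $|0\rangle,\dots,|K_N\rangle$), are supported, in the basis $|J-k\rangle\leftrightarrow|k\rangle$, on the three diagonals $k-k'\in\{0,\pm2\}$. Consequently $\|(\overline D_{J,N}-D_N)P_J^{K_N^-}\|$ is controlled, up to a universal constant, by the largest of the corresponding matrix-element differences. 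I would compute the diagonal and the $(k,k+2)$-entries of $\overline D_{J,N}$ — these are $(\omega^2+1)\tfrac{2k+1}{2}$ and $(\omega^2-1)\tfrac{\sqrt{(k+1)(k+2)}}{2}$ — and of $D_N$, and, inserting $J_N=|\mathbf{m}_0|N/2$ and expanding the square roots using $|J-J_N|\le\overline{K}_N$ and $k\le K_N-2$, check that each difference is $O\big((K_N\overline{K}_N+K_N^2)/N\big)$. The hypotheses then give~\eqref{eq:normDDN}: $K_N=o(N^{1/2})$ makes $K_N^2=o(N)$, and if $K_N\overline{K}_N$ failed to be $o(N)$ then along a subsequence $\overline{K}_N\gg N/K_N\gg N^{1/2}$, whence $K_N\overline{K}_N^3\gg N^2$, contradicting the second hypothesis.

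For item~(2), the strong-resolvent statement is immediate: if $\psi\in c_{00}$ is supported on $\{0,\dots,m\}$, then $\overline D_N\psi=P_{K_N}DP_{K_N}\psi=D\psi$ once $K_N\ge m+2$, and $c_{00}$ is a core for $D$, so the standard core criterion for strong-resolvent convergence applies. For the eigenvalues and projections I would work with the tails $(1-P_{K_N})\psi_k$. By~\eqref{eq:expdecay}--\eqref{eq:expdecay2}, which require $\omega>1$, one has $\|(1-P_{K_N})\psi_k\|^2=\sum_{n>K_N}|\langle n|\psi_k\rangle|^2=o(K_N^{-\infty})$, and since $D$ is pentadiagonal with $|\langle m|D|n\rangle|\le C(\omega)(n+1)$, the same geometric rate overwhelms this linear growth, so $\|D(1-P_{K_N})\psi_k\|=o(K_N^{-\infty})$ as well. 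Then $\overline D_N P_{K_N}\psi_k=(2k+1)\omega\,P_{K_N}\psi_k-P_{K_N}D(1-P_{K_N})\psi_k$ exhibits $P_{K_N}\psi_k$ as an $o(K_N^{-\infty})$-approximate eigenvector of $\overline D_N$ at $(2k+1)\omega$, and the $\{P_{K_N}\psi_k\}$ are $o(K_N^{-\infty})$-orthonormal. A Rayleigh-quotient bound on $\spa\{P_{K_N}\psi_0,\dots,P_{K_N}\psi_k\}$ gives $E_k(\overline D_N)\le(2k+1)\omega+o(K_N^{-\infty})$, while $E_k(\overline D_N)\ge E_k(D)=(2k+1)\omega$ follows from min-max because the Rayleigh quotient of $D$ is minimised over the \emph{restricted} class of $(k+1)$-dimensional subspaces of $P_{K_N}\ell^2(\mathbb{N}_0)$; together these give~\eqref{eq:convev}. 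For~\eqref{eq:convspecprN}, let $\Pi_N$ project onto $W_N:=\spa\{P_{K_N}\psi_0,\dots,P_{K_N}\psi_{K-1}\}$. The bounds above yield $\|(1-\Pi_N)\overline D_N\Pi_N\|=o(K_N^{-\infty})$, show $\Pi_N\overline D_N\Pi_N$ has all eigenvalues $\le(2K-1)\omega+o(1)$, and — expanding a unit $v\in W_N^\perp\cap P_{K_N}\ell^2(\mathbb{N}_0)$ in the eigenbasis of $D$ and using $|\langle\psi_i,v\rangle|\le\|(1-P_{K_N})\psi_i\|$ for $i<K$ — show $(1-\Pi_N)\overline D_N(1-\Pi_N)\ge(2K+1)\omega-o(1)$ on $W_N^\perp$. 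A Davis--Kahan $\sin\Theta$ estimate with this $\approx 2\omega$ spectral gap then gives $\|E_K^{(N)}-\Pi_N\|=o(K_N^{-\infty})$; since also $\|(1-E_K)\Pi_N\|=o(K_N^{-\infty})$ (because $W_N$ is spanned by vectors $o(K_N^{-\infty})$-close to $\psi_0,\dots,\psi_{K-1}$), we conclude $\|(E_K^{(N)}-E_K)E_K^{(N)}\|=\|(1-E_K)E_K^{(N)}\|=o(K_N^{-\infty})$.

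The main obstacle I anticipate is the $o(K_N^{-\infty})$ bookkeeping in item~(2): $D$ is unbounded, so every error term carries $D$ (or powers of it) applied to a tail of an eigenfunction, and it is essential that the geometric rate $\tfrac{\omega-1}{\omega+1}<1$ furnished by Proposition~\ref{prop:harm} strictly dominates the polynomial-in-$n$ growth of the entries of $D$; one must then verify that the approximate eigenvectors, their orthogonality defects, and the escape term $\|(1-\Pi_N)\overline D_N\Pi_N\|$ are all bounded by fixed powers of this decay, so that the gap argument closes. Item~(1), by contrast, is routine once the joint dependence of the matrix-element discrepancies on $k$ and on $J-J_N$ is tracked carefully.
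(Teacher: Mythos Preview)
Your proposal is correct. Item~(1) is essentially identical to the paper's argument: both compute the matrix elements of $\overline D_{J,N}-D_N$ on the three relevant diagonals and track the dependence on $k\le K_N$ and $|J-J_N|\le\overline K_N$; your bound $O\big((K_N\overline K_N+K_N^2)/N\big)$ is in fact slightly sharper than the paper's $O\big(N^{-1}\max\{K_N^2,\sqrt{K_N\overline K_N^3}\}\big)$, but both go to zero under the stated hypotheses by exactly the reasoning you give.

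Item~(2) is where the two arguments diverge. Both start from the same core criterion for strong-resolvent convergence and both ultimately feed on the exponential decay~\eqref{eq:expdecay} of $\langle n|\psi_k\rangle$. The paper, however, introduces the auxiliary block operator $\widehat D_N=P_{K_N}DP_{K_N}+Q_{K_N}DQ_{K_N}$, shows $\|(D-\widehat D_N)E_K\|=o(K_N^{-\infty})$, and then invokes its Schur-complement Proposition~\ref{prop:Schur} (the same tool it uses throughout the main proofs) followed by a resolvent/contour-integral argument for~\eqref{eq:convspecprN}. You instead build approximate eigenvectors $P_{K_N}\psi_k$ directly, get the upper bound on $E_k(\overline D_N)$ by Rayleigh--Ritz on their span, the lower bound by the clean min--max observation that compression to $P_{K_N}\ell^2(\mathbb{N}_0)$ only raises eigenvalues, and then close with Davis--Kahan using $\Pi_N$. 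Your route is a bit more elementary and self-contained (no auxiliary operator, no contour integrals), while the paper's has the virtue of reusing exactly the Schur-complement machinery that drives the rest of Section~3. A small bonus in your argument: for $v\in W_N^\perp\cap P_{K_N}\ell^2(\mathbb{N}_0)$ one actually has $\langle\psi_i,v\rangle=\langle P_{K_N}\psi_i,v\rangle+\langle(1-P_{K_N})\psi_i,P_{K_N}v\rangle=0$ exactly for $i<K$, so the lower block bound $(1-\Pi_N)\overline D_N(1-\Pi_N)\ge(2K+1)\omega$ holds with no error term.
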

\begin{proof}
1.~For any $ k, k^\prime \leq K_N  $, the matrix-elements of the difference in the canonical basis of $I_J^{(N)}  \mathcal{H}^{K_N}_{J}  $  are given by
\begin{align*}
 \langle k |  D-   I_J^{(N)}  D_N \overline I_J^{(N)}   |  {k^\prime}\rangle = & \ \langle k | L_y^2  |  {k^\prime}\rangle -  J_N^{-1} \langle J - k | S_y^2  | J - {k^\prime}\rangle  + \omega^2 \left[ \langle k | L_x^2  |  {k^\prime}\rangle - J_N^{-1} \langle J - k | S_x^2 | J - {k^\prime}\rangle  \right]
\end{align*}
Each of the  two terms terms in the right side are explicit  thanks to~\eqref{def:L} and~\eqref{eq:MatrixSorth}.  
If $ k^\prime  \leq K_N-2 $, their difference can be expressed in terms of 
$$
 \langle m | L_x | m^\prime \rangle -  J_N^{-1/2} \langle J - m | S_x  | J - {m^\prime}\rangle = \delta_{|m-m^\prime|, 1} \left[ \sqrt{\frac{\max\{ m, m^\prime\} }{2}} -\sqrt{\frac{2J \max\{ m, m^\prime\}- m m^\prime\ }{4J_N }} \right] 
$$
with $ m, m^\prime \leq K_N $ and analogously  for $ y $ instead of $ x $, for which the expression  differs only by an overall complex phase. By an explicit calculation, the modulus of the last term is upper bounded according to
$$
\max_{|J-J_N| \leq \overline{K}_N} \left|  \langle m | L_\xi | m^\prime \rangle -  J_N^{-1/2} \langle J - m | S_\xi  | J - {m^\prime}\rangle \right| \leq   \frac{ \max\{ m, m^\prime, \overline{K}_N\}^{3/2}   }{J_N \sqrt{ 1- \max\{ m, m^\prime, \overline{K}_N\}/J_N}} \  \delta_{|m-m^\prime|, 1} 
$$
for both $ \xi \in \{ x, y\} $. 
Since also 
\begin{align*}
 \left|  \langle m | L_\xi | m^\prime \rangle \right|  \leq \sqrt{\max\{ m, m^\prime\}/2} &\leq  \sqrt{K_N}, \\
   J_N^{-1/2} \left|   \langle J-  m |  S_\xi  | J- m^\prime \rangle \right| & \leq \sqrt{K_N/|\mathbf{m}_0|} ,
\end{align*} 
we conclude that for some constant $ C< \infty $ and all $ k\leq K_N  $, $ k^\prime \leq K_N-2 $:
\begin{equation}
\max_{|J-J_N| \leq \overline{K}_N}  \left| \langle k |  D-    I_J^{(N)}  D_N \overline I_J^{(N)}  |  {k^\prime}\rangle  \right| \leq  \frac{C}{N}  \max\big\{ K_N^2 ,\sqrt{K_N \overline{K}_N^3}\big\} \ 1[|k-k^\prime| \leq 2 ] .
\end{equation}
By an analogous estimate as in~\eqref{eq:normfrom kernel}, we thus arrive at the claimed norm estimates in~\eqref{eq:normDDN}.\\

\noindent
2.~Since $ c_{00} $ is a common core for $ D $ and   $\overline D_N = P_{K_N} D P_{K_N}  $, the claimed strong-resolvent convergence is immediate from from the fact that the matrix elements of $ D $ vanish unless their difference is smaller than two (cf.~\cite{RS80}). 
To boost this strong convergence  to convergence of eigenvalues, we use  the block approximation
\begin{equation}\label{eq:Dblock} 
\widehat D_N \coloneqq P_{K_N} D P_{K_N} +  Q_{K_N} D Q_{K_N} , \qquad \mbox{on $ \ell^2(\mathbb{N}_0) $.} 
\end{equation}
with $ Q_{K_N} = 1 - P_{K_N} $. For $ E_K = \sum_{k=0}^{K-1} |\psi_k\rangle\langle \psi_k| $, we have
 $$
\left\| E_K Q_{K_N} \right\|^2 = \left\| E_K Q_{K_N} E_K \right\| \leq \sum_{k=0}^{K-1} \sum_{n = K_N}^\infty | \langle n | \psi_k \rangle |^2 = o(K_N^{-\infty}) .  
$$
The error estimate as $ N \to \infty $ follows from the exponential decay estimate~\eqref{eq:expdecay}. 

The matrix elements of $\langle n | D-\widehat D_N | m \rangle $  are non-vanishing only for $ | n-K_N|, |m-K_N| \leq 2 $. Moreover, $ | \langle n | D- \widehat D_N | m \rangle | \leq \mathcal{O}(K_N) $ such that again by the exponential decay estimate~\eqref{eq:expdecay}:
\begin{equation}\label{eq:projectionbd}
\left\| (D- \widehat D_N) E_K \right\| = o(K_N^{-\infty}) .
\end{equation}
Denoting by $ F_K = 1 - E_K $, this implies that $ \left\| E_K (D- \widehat D_N) E_K \right\| = o(K_N^{-\infty})  $ and $ \left\| F_K \widehat D_N E_K \right\| = o(K_N^{-\infty})  $. Since also
\begin{align*}
F_K  \widehat D_N F_K & \geq F_K P_{K_N} D P_{K_N} F_K + E_{K}(D) \, F_K Q_{K_N} F_K Q_{K_N} F_K \\
& = F_K P_{K_N} \left(D - E_{K}(D) \right) P_{K_N} F_K + E_{K}(D) \,  \left( F_K  - F_K Q_{K_N} E_K Q_{K_N} F_K \right) \\
	& \geq E_{K}(D) \ F_K \left( 1 - 2 \| Q_{K_N} E_K \|^2 \right) = E_{K}(D) \ F_K \left( 1 - o(K_N^{-\infty}) \right) ,
\end{align*}
the Schur-complement method of Proposition~\ref{prop:Schur} proves that the eigenvalues of $\widehat D_N$ strictly below $ E_{K}(D)  \left( 1 - o(1) \right)  $ aymptotically coincide with those of $ D $ up to an error of order $ o(K_N^{-\infty})  $. 

A similar estimate  also shows that $ Q_{K_N} D Q_{K_N} \geq E_{K}(D) Q_{K_N}  \left( 1 - o(1) \right)  $ for any $ K\in \mathbb{N}$, so that the low-energy spectrum of $\widehat D_N$ entirely coincides with that of its first block $  \overline D_N $. This finishes the proof of~\eqref{eq:convev}. 

The assertion~\eqref{eq:convspecprN} then follows from~\eqref{eq:projectionbd}, the discrete nature of the spectrum of $ D $ and a standard perturbation theory bound based on the representation of the spectral projection as a contour integral involving resolvents (cf.~\cite{RS80}).
%
\end{proof}

\subsection{Proof of Theorem~\ref{thm:unimin}}\label{sec:pfunim}
For a proof of Theorem~\ref{thm:unimin}, we analyse all the blocks $ H_{J,\alpha} $ in the decomposition~\eqref{eq:Hblock} of the Hamiltonian separately.  The main contribution to the lowest energies of $ H $ will come from the largest $ J $ and the vicinity of the minimum. Projection techniques will help to focus on this patch. 

To facilitate notation, by a unitary rotation 
we subsequently assume without loss of generality that the minimum of $ h $ is at $  \mathbf{m}_0 =  (0,0, 1)^T $, and that 
the projected Hessian $ Q_\perp D_h(\mathbf{m}_0) Q_\perp $ as a matrix on $ \ran Q_\perp $ has its eigenvectors aligned with the $ x $- and $ y $-direction with $ \omega_x $ and $ \omega_y $ the corresponding eigenvalues. This
entails that 
\begin{equation}\label{eq:rotQ}
\widehat Q_N( \mathbf{m}_0) =   N h( \mathbf{m}_0) \mathbbm{1} + \left(N - 2 S_z  \right) | \nabla h( \mathbf{m}_0)|  + \frac2{ N}   \left( \omega_x S_x^2 + \omega_y S_y^2\right)  ,
\end{equation}
cf.~\eqref{eq:defQhat}. We will also assume without loss of generality that  the positive eigenvalues of $ D_h^\perp(\mathbf{m}_0) $ are ordered according to 
$
 0 < \omega_y^\perp =  \omega_y + |\nabla h(\mathbf{m}_0) | \leq  \omega_x + |\nabla h(\mathbf{m}_0) | =  \omega_x^\perp $, 
and we note that
\begin{equation}\label{eq:omega}
\sqrt{\det   D_P^{\perp}(\mathbf{m}_0) }  = \sqrt{\omega_x^\perp \omega_y^\perp}  =  \omega_y^\perp  \, \omega , \quad \mbox{with $ \displaystyle \omega^2 := \frac{ \omega_x^\perp}{\omega_y^\perp} \geq 1 $.} 
\end{equation}
Moreover, we will use throughout the whole proof $ K_N = \overline{K}_N = o(N^{1/3}) $ diverging as $ N \to \infty $.

\subsubsection{Limit operator for $ J \geq N/2 - K_N $}

By assumption~\eqref{ass:quadr} and Lemma~\ref{lem:altQ}, on the increasing subspaces $ \mathcal{H}^{K_N}_{J}( \mathbf{m}_0)  $ with $  J \geq N/2 - K_N $, the shifted quadratic polynomial $ \kappa( \mathbf{m}_0) \mathbbm{1} + \widehat Q_N( \mathbf{m}_0) $ approximates $ H_{J,\alpha} $ uniformly in norm to order $ o(1) $.  Next, we show  that this quadratic polynomial is  well approximated by
$$
	H_J^{(N)} := N h( \mathbf{m}_0)  + \kappa( \mathbf{m}_0) +  | \nabla h( \mathbf{m}_0)|  (N-2J-1) +   \omega_y^\perp \ \overline I_J^{(N)} D   I_J^{(N)}  \quad \mbox{on $  \mathcal{H}^{K_N}_{J}( \mathbf{m}_0)  $}
$$
with $ D = \omega^2 L_x^2+ L_y^2  $ on $ \ell^2(\mathbb{N}_0) $ and $ \omega $ from~\eqref{eq:omega}.
\begin{lemma}\label{lem:approxH}
In the situation of Theorem~\ref{thm:unimin} 
if  $ K_N = o(N^{1/3}) $:
\begin{equation}\label{eq:approxH}
\max_{ J \geq N/2 - K_N } \max_{\alpha} \left\| \left( H_{J,\alpha} - H_J^{(N)}\right) P_{J}^{K_N^-}(\mathbf{m}_0)   \right\| = o( 1) .
\end{equation}
\end{lemma}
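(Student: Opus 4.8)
\emph{Proof plan.} The plan is to bridge $H_{J,\alpha}$ to $H_J^{(N)}$ through the Cartesian quadratic form $\kappa(\mathbf m_0)\mathbbm 1+\widehat Q_N(\mathbf m_0)$ of~\eqref{eq:rotQ}. First, by assumption~\eqref{ass:quadr} combined with Lemma~\ref{lem:altQ} — whose hypotheses $K_N=o(N^{1/3})$ and $\overline K_N^2K_N=o(N)$ are met by the choice $K_N=\overline K_N=o(N^{1/3})$ used throughout the proof of Theorem~\ref{thm:unimin} — one has, uniformly over $\alpha$ and over $J\ge N/2-K_N$ (i.e.\ $|J-J_N(\mathbf m_0)|\le\overline K_N$, since $J_N(\mathbf m_0)=N/2$),
\[
\big\|\big(\kappa(\mathbf m_0)\mathbbm 1+\widehat Q_N(\mathbf m_0)-H_{J,\alpha}\big)P_J^{K_N}(\mathbf m_0)\big\|=o(1) ,
\]
and the same bound then holds with $P_J^{K_N^-}(\mathbf m_0)$ in place of $P_J^{K_N}(\mathbf m_0)$ because $P_J^{K_N^-}(\mathbf m_0)=P_J^{K_N}(\mathbf m_0)P_J^{K_N^-}(\mathbf m_0)$. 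It thus remains to show $\|(\kappa(\mathbf m_0)\mathbbm 1+\widehat Q_N(\mathbf m_0)-H_J^{(N)})P_J^{K_N^-}(\mathbf m_0)\|=o(1)$.

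To see this I would rewrite the Hessian eigenvalues via $\omega_\xi=\omega_\xi^\perp-|\nabla h(\mathbf m_0)|$ ($\xi\in\{x,y\}$), which is just the shift in~\eqref{def:perpHessian}. Inserting this into~\eqref{eq:rotQ}, and using that $|\mathbf m_0|=1$ gives $J_N=N/2$ so that $D_N=\tfrac2N(\omega^2S_x^2+S_y^2)$ and $\tfrac2N(\omega_x^\perp S_x^2+\omega_y^\perp S_y^2)=\omega_y^\perp D_N$ with $\omega$ from~\eqref{eq:omega}, one obtains
\[
\kappa(\mathbf m_0)\mathbbm 1+\widehat Q_N(\mathbf m_0)=\big(Nh(\mathbf m_0)+\kappa(\mathbf m_0)\big)\mathbbm 1+|\nabla h(\mathbf m_0)|\Big[(N-2S_z)-\tfrac2N\big(S_x^2+S_y^2\big)\Big]+\omega_y^\perp D_N .
\]
Since $S_x^2+S_y^2=\mathbf S^2-S_z^2$ is diagonal in the eigenbasis $\{|J-k\rangle\}$ of $\mathbf m_0\cdot\mathbf S=S_z$, the bracketed operator acts on $|J-k\rangle$ as the scalar $N-2(J-k)-\tfrac2N\big(J(J+1)-(J-k)^2\big)$, and an elementary expansion (writing $\ell=N/2-J\in\mathbb N_0$) shows this equals $(N-2J-1)+\tfrac2N(\ell+2k\ell+k^2)$. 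For $\ell\le K_N$ and $k\le K_N^-\le K_N$ with $K_N=o(N^{1/2})$ the remainder is $O(K_N^2/N)=o(1)$ uniformly, and since the operator is diagonal this directly gives an $o(1)$ operator-norm bound on $\ran P_J^{K_N^-}(\mathbf m_0)$; hence $|\nabla h(\mathbf m_0)|[(N-2S_z)-\tfrac2N(S_x^2+S_y^2)]P_J^{K_N^-}(\mathbf m_0)=|\nabla h(\mathbf m_0)|(N-2J-1)P_J^{K_N^-}(\mathbf m_0)+o(1)$.

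For the remaining term, the first part of Lemma~\ref{lem:src} gives exactly $\|(\overline D_{J,N}-D_N)P_J^{K_N^-}(\mathbf m_0)\|=o(1)$ uniformly over the relevant $J$, since its hypotheses $K_N=o(N^{1/2})$ and $K_N\overline K_N^3=o(N^2)$ hold for $K_N=\overline K_N=o(N^{1/3})$, and $\overline D_{J,N}=\overline I_J^{(N)}DI_J^{(N)}$ is precisely the operator entering $H_J^{(N)}$. Collecting the constant, the gradient and the $\omega_y^\perp\overline D_{J,N}$ contributions reconstitutes $H_J^{(N)}P_J^{K_N^-}(\mathbf m_0)$, and summing the three $o(1)$ errors via the triangle inequality, together with the first step, yields~\eqref{eq:approxH}.

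The main point requiring care — rather than a genuine obstacle — is the bookkeeping in the second step: the bare Cartesian Hessian term $\tfrac2N\omega_yS_y^2$ contributes a piece $\sim 2|\nabla h(\mathbf m_0)|\,k$ on $|J-k\rangle$ that diverges with $K_N$ and is only rendered negligible once combined with the gradient term $(N-2S_z)|\nabla h(\mathbf m_0)|$; this cancellation is exactly the mechanism behind the spherical shift in $D_h^\perp$. Everything else amounts to verifying that the choice $K_N=\overline K_N=o(N^{1/3})$ satisfies the hypotheses of Lemmas~\ref{lem:altQ} and~\ref{lem:src} and to tracking the $O(K_N^2/N)$ remainders.
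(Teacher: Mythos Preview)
Your proposal is correct and follows essentially the same approach as the paper: reduce to $\kappa(\mathbf m_0)\mathbbm 1+\widehat Q_N(\mathbf m_0)$ via~\eqref{ass:quadr} and Lemma~\ref{lem:altQ}, absorb the $|\nabla h(\mathbf m_0)|$-shift into $\omega_\xi^\perp$ to produce $\omega_y^\perp D_N$, and then invoke Lemma~\ref{lem:src}. The only cosmetic difference is that the paper handles the diagonal gradient term via the operator identity $N-2S_z=\tfrac N2-\tfrac2N(\mathbf S^2-S_x^2-S_y^2)+\tfrac{(N-2S_z)^2}{2N}$ and bounds the last piece by $(4K_N)^2/2N$, whereas you compute the eigenvalues $(N-2J-1)+\tfrac2N(\ell+2k\ell+k^2)$ directly; both yield the same $O(K_N^2/N)$ remainder.
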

\begin{proof}
We use assumption~\eqref{ass:quadr} and Lemma~\ref{lem:altQ}  to establish the claim~\eqref{eq:approxH}  with $ H_J^{(N)} $ replaced by 
$ \kappa( \mathbf{m}_0) \mathbbm{1} +  \widehat Q_N( \mathbf{m}_0)  $. 
In order to simplify the term in~\eqref{eq:rotQ} proportional to $  | \nabla h( \mathbf{m}_0)|  $, we rewrite 
$$
N - 2 S_z 
= \frac{N}{2} - \frac{2}{N} \left(\mathbf{S}^2 - S_x^2 - S_y^2 \right)   + \frac{(N-2S_z)^2}{2N} .
$$
On $  \mathcal{H}^{K_N}_{J}( \mathbf{m}_0)  $ with $ J \geq  N/2 - K_N $, the norm of the last term is bounded by $ (4 K_N)^2/2N 
= o(1) $. 
To that order, we can therefore replace 
$ H_J^{(N)} $   by 
\begin{align*} 
\widehat  H_J^{(N)} := N h( \mathbf{m}_0) +  \kappa( \mathbf{m}_0)  +  | \nabla h( \mathbf{m}_0)| \left( N -2 J  - 1\right) +  \omega^\perp_y D_N   , & \\
	\mbox{with}\quad  D_N = \frac{\omega^2}{J_N}  S_x^2 + \frac{1}{J_N}  S_y^2 \quad  \mbox{and}\quad &   J_N = \frac{N}{2}.  
\end{align*}
Lemma~\ref{lem:src} then yields
$$ \max_{J \geq N/2-K_N} \left\|(  H_J^{(N)}  -  \widehat H_J^{(N)}) P_{J}^{K_N^-}(\mathbf{m}_0)   \right\| = o( 1) ,
$$
which completes the proof.
\end{proof}

\subsubsection{Truncations} 

To control the blocks $ H_{J,\alpha} $  for values of $ J $ other than the ones of the previous paragraphs, we use the following lemma.
\begin{lemma}\label{lem:truncation}
In the situation of Theorem~\ref{thm:unimin}, there are constants $ c, C \in (0,\infty) $ such that for all $ (J,\alpha) $:
\begin{equation}
H_{J,\alpha} \geq  N h(\mathbf{m}_0) - C + c \left( N -  2  \ \mathbf{e}_{  \mathbf{m}_0} \cdot \mathbf{S} \right) .
\end{equation}
\end{lemma}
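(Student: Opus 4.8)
The strategy is to reduce the operator inequality to a purely classical statement about the symbol $h$ and then lift it using assumption~\textbf{A0} and the Bloch‑coherent‑state calculus. The classical input is that there is a constant $c>0$ with
\begin{equation*}
h(\mathbf{m}) \;\geq\; h(\mathbf{m}_0) + c\,\bigl(1 - \mathbf{e}_{\mathbf{m}_0}\cdot\mathbf{m}\bigr)\qquad\text{for all }\mathbf{m}\in B_1 .
\end{equation*}

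To prove this I would first treat the region away from $\mathbf{m}_0$: since $|\mathbf{m}-\mathbf{m}_0|\geq 1-|\mathbf{m}|$, uniqueness of the minimum and compactness of $B_1$ give $h(\mathbf{m})-h(\mathbf{m}_0)\geq\delta(r)>0$ on $\{\mathbf{m}\in B_1:|\mathbf{m}-\mathbf{m}_0|\geq r\}$ for any fixed $r>0$, while $1-\mathbf{e}_{\mathbf{m}_0}\cdot\mathbf{m}\leq 2$, so the inequality holds there once $c\leq\delta(r)/2$. For $\mathbf{m}$ close to $\mathbf{m}_0$ I would write $\mathbf{m}=\mathbf{m}_0+w_\parallel\mathbf{e}_{\mathbf{m}_0}+\mathbf{w}_\perp$ with $\mathbf{w}_\perp\perp\mathbf{e}_{\mathbf{m}_0}$; the constraint $|\mathbf{m}|\leq 1$ reads $2w_\parallel+w_\parallel^2+|\mathbf{w}_\perp|^2\leq0$, hence $-w_\parallel=1-\mathbf{e}_{\mathbf{m}_0}\cdot\mathbf{m}\geq\tfrac12|\mathbf{w}_\perp|^2$. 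A second‑order Taylor expansion of $h$ at $\mathbf{m}_0$, using $\nabla h(\mathbf{m}_0)=-|\nabla h(\mathbf{m}_0)|\mathbf{e}_{\mathbf{m}_0}$, gives $h(\mathbf{m})-h(\mathbf{m}_0)=|\nabla h(\mathbf{m}_0)|(-w_\parallel)+\tfrac12\mathbf{w}\cdot D_h(\mathbf{m}_0)\mathbf{w}+o(|\mathbf{w}|^2)$. Splitting $|\nabla h(\mathbf{m}_0)|(-w_\parallel)=\tfrac12|\nabla h(\mathbf{m}_0)|\,|\mathbf{w}_\perp|^2+|\nabla h(\mathbf{m}_0)|\rho$ with $\rho:=-w_\parallel-\tfrac12|\mathbf{w}_\perp|^2\geq0$, and recombining the first summand with $\tfrac12\mathbf{w}_\perp\cdot D_h(\mathbf{m}_0)\mathbf{w}_\perp$ into $\tfrac12\mathbf{w}_\perp\cdot D_h^\perp(\mathbf{m}_0)\mathbf{w}_\perp$ via~\eqref{def:perpHessian}, what remains is $\tfrac12\mathbf{w}_\perp\cdot D_h^\perp(\mathbf{m}_0)\mathbf{w}_\perp+|\nabla h(\mathbf{m}_0)|\rho$ plus terms of order $w_\parallel^2+|w_\parallel|\,|\mathbf{w}_\perp|+o(|\mathbf{w}|^2)$, which (using $-w_\parallel=\tfrac12|\mathbf{w}_\perp|^2+\rho$ and $w_\parallel^2\leq2\rho$) is $o(|\mathbf{w}_\perp|^2+\rho)$. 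Since $D_h^\perp(\mathbf{m}_0)\geq0$ (because $\mathbf{m}_0$ minimises $h\circ\Phi$) and $\det D_h^\perp(\mathbf{m}_0)>0$ by hypothesis, its lowest eigenvalue $\lambda_{\min}$ is positive, so on a small ball around $\mathbf{m}_0$ the right‑hand side is $\geq\tfrac14\min\{\lambda_{\min},2|\nabla h(\mathbf{m}_0)|\}\,(|\mathbf{w}_\perp|^2+\rho)\geq c\,(1-\mathbf{e}_{\mathbf{m}_0}\cdot\mathbf{m})$. Taking $c$ to be the minimum of the two local constants proves the classical inequality.

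Next I would transfer it to operators. By assumption~\textbf{A0} there is $C_0<\infty$, independent of $N,J,\alpha$, with $H_{J,\alpha}\geq G_J-C_0\,\mathbbm{1}$ where $G_J:=\tfrac{2J+1}{4\pi}\int Nh\bigl(\tfrac{2J}{N}\mathbf{e}(\Omega)\bigr)\,\big|\Omega,J\rangle\langle\Omega,J\big|\,d\Omega$. As $\tfrac{2J}{N}\mathbf{e}(\Omega)\in B_1$, applying the classical inequality pointwise under the integral, together with the resolution of identity~\eqref{eq:completeness} and the standard identity $\tfrac{2J+1}{4\pi}\int\mathbf{e}(\Omega)\,\big|\Omega,J\rangle\langle\Omega,J\big|\,d\Omega=\tfrac{1}{J+1}\mathbf{S}$ (Appendix~\ref{app:cs}), yields $G_J\geq Nh(\mathbf{m}_0)\,\mathbbm{1}+cN\,\mathbbm{1}-\tfrac{2cJ}{J+1}\,\mathbf{e}_{\mathbf{m}_0}\cdot\mathbf{S}$. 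Finally, since $\|\mathbf{e}_{\mathbf{m}_0}\cdot\mathbf{S}\|\leq J$ on $\mathbb{C}^{2J+1}$, one has $-\tfrac{2cJ}{J+1}\mathbf{e}_{\mathbf{m}_0}\cdot\mathbf{S}=-2c\,\mathbf{e}_{\mathbf{m}_0}\cdot\mathbf{S}+\tfrac{2c}{J+1}\mathbf{e}_{\mathbf{m}_0}\cdot\mathbf{S}\geq-2c\,\mathbf{e}_{\mathbf{m}_0}\cdot\mathbf{S}-2c\,\mathbbm{1}$, so that $H_{J,\alpha}\geq Nh(\mathbf{m}_0)-C+c\,(N-2\,\mathbf{e}_{\mathbf{m}_0}\cdot\mathbf{S})$ with $C:=C_0+2c$, uniformly in $(J,\alpha)$.

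The main obstacle is the estimate near $\mathbf{m}_0$: the gain from the inward‑pointing gradient is linear in $1-\mathbf{e}_{\mathbf{m}_0}\cdot\mathbf{m}$, whereas the gain from the positive‑definite tangential Hessian is quadratic in $|\mathbf{w}_\perp|$, and one must use the ball constraint $1-\mathbf{e}_{\mathbf{m}_0}\cdot\mathbf{m}\geq\tfrac12|\mathbf{w}_\perp|^2$ to merge the two into a single lower bound with a uniform positive constant that also dominates the cubic Taylor remainder; the positivity of $\lambda_{\min}(D_h^\perp(\mathbf{m}_0))$ together with $\nabla h(\mathbf{m}_0)\neq\mathbf{0}$ is precisely what makes this possible. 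Everything after the classical inequality is routine coherent‑state bookkeeping.
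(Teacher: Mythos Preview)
Your proof is correct and follows essentially the same route as the paper: establish the scalar inequality $h(\mathbf{m})\geq h(\mathbf{m}_0)+c(1-\mathbf{e}_{\mathbf{m}_0}\cdot\mathbf{m})$ on $B_1$, then lift it via \textbf{A0} and the coherent-state quantisation of the linear symbol. The only difference is cosmetic: you use the explicit upper-symbol identity $\tfrac{2J+1}{4\pi}\int\mathbf{e}(\Omega)\,|\Omega,J\rangle\langle\Omega,J|\,d\Omega=\tfrac{1}{J+1}\mathbf{S}$ directly, whereas the paper invokes Lemma~\ref{lem:assz} for the same purpose.

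Your treatment of the classical inequality is in fact more complete than the paper's. The paper's one-line justification (``unique minimum and $|\nabla h(\mathbf{m}_0)|>0$'') is not by itself sufficient: along a tangential geodesic on $S^2$ one has $1-\mathbf{e}_{\mathbf{m}_0}\cdot\mathbf{m}\sim\tfrac12 t^2$ while $h(\mathbf{m})-h(\mathbf{m}_0)\sim\tfrac12 t^2\,\mathbf{v}\cdot D_h^\perp(\mathbf{m}_0)\mathbf{v}$, so a strictly positive $c$ requires $D_h^\perp(\mathbf{m}_0)>0$, which is part of the hypotheses of Theorem~\ref{thm:unimin} but not mentioned in the paper's proof. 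Your near-field argument uses exactly this, via the ball constraint $-w_\parallel\geq\tfrac12|\mathbf{w}_\perp|^2$ and the bound $w_\parallel^2\leq 2\rho$, to absorb the cross terms and the $o(|\mathbf{w}|^2)$ remainder; this is the right mechanism.
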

\begin{proof}
From assumption~\eqref{ass:symbol} we infer that for some $ C < \infty $:
\begin{equation}\label{eq:fromasssym}
H_{J,\alpha } \geq - C +   \frac{2J+1}{4\pi} \int  N h\Big(\frac{2J}{N}  \mathbf{e}(\Omega) \Big) \,  \big| \Omega, J \rangle \langle \Omega, J \big| \  d\Omega . 
\end{equation}
Since $   \mathbf{m}_0 \in S^2 $ was assumed to be the unique minimum of $ h: B_1 \to \mathbb{R} $ and $ |\nabla h(   \mathbf{m}_0 ) | > 0 $, there is some $ c > 0 $ such that for all $  \mathbf{m} \in B_1 $:
\begin{equation}\label{eq:lowerh1min}
h( \mathbf{m}) \geq  h( \mathbf{m}_0) + c (1 - \mathbf{e}_{  \mathbf{m}_0} \cdot \mathbf{m} ) . 
\end{equation}
Plugging this estimate into the above operator inequality and using the fact that  
by Lemma~\ref{lem:assz}
$$
 \frac{2J+1}{4\pi} \int d\Omega \,  J  \mathbf{m}_0 \cdot \mathbf{e}(\Omega)   \big| \Omega, J \rangle \langle \Omega, J \big| \ d\Omega \geq - C + \mathbf{e}_{  \mathbf{m}_0} \cdot \mathbf{S} 
$$
with some constant $ C $, which does not depend on $ N, J $, we arrive at the claimed matrix inequality. 
\end{proof}

\subsubsection{Finishing the proof}

With the above preparations, we are ready to spell out the proof of Theorem~\ref{thm:unimin}. Aside from exploiting the block decomposition~\eqref{eq:Hblock}, the argument essentially relies on a two-step approximation procedure and a Schur-complement analysis in the main blocks corresponding to $ J \geq N/2 - K_N $. 
\begin{proof}[Proof of Theorem~\ref{thm:unimin}]
We investigate the spectrum of each block $ H_{J,\alpha} $ of the Hamiltonian in~\eqref{eq:Hblock} separately, and distinguish cases.\\

In case $ J \leq  N/2 - K_N $, we use Lemma~\ref{lem:truncation} to conclude that for all $ \alpha $:
\begin{equation}
 H_{J,\alpha} \geq  N h(\mathbf{m}_0) - C + 2 c K_N .  
\end{equation}
Since $ K_N \to \infty $ as $ N \to \infty $, these blocks clearly do not contribute to the asserted low-energy spectrum below 
$ E_0(H) + \Delta $ with arbitrary $ \Delta  > 0 $.\\

In case $ J \geq  N/2 - K_N $, we consider the approximating Hamiltonian $H_J^{(N)} $ on $  \mathcal{H}^{K_N}_{J}( \mathbf{m}_0) $ from Lemma~\ref{lem:approxH}, and define its projection
$$
\widetilde H_J^{(N)} :=  P_{J}^{K_N^-}(\mathbf{m}_0) H_J^{(N)}   P_{J}^{K_N^-}(\mathbf{m}_0)   \quad \mbox{on $ \mathcal{H}^{K_N^-}_{J}( \mathbf{m}_0)  $.}
$$ 
Note that this operator is unitarily equivalent to
\begin{equation}\label{eq:DNcalc}
 \left( N h( \mathbf{m}_0) \mathbbm{1} + \kappa( \mathbf{m}_0) +  | \nabla h( \mathbf{m}_0)|  (N-2J-1) \right)  P_{K_N^-}  +   \omega_y^\perp P_{K_N^-} D P_{K_N^-} \; \mbox{on $P_{K_N^-} \ell^2(\mathbb{N}_0) $.}
\end{equation}
We fix $ K \in \mathbb{N} $ arbitrary, and let $ E_K^{(N)} $ stand for the orthogonal projection onto the $ K $-dimensional subspace of $ \mathcal{H}^{K_N^-}_{J}( \mathbf{m}_0) $  spanned by eigenvectors  of the $ K $  lowest eigenvalues of $ \widetilde H_J^{(N)} $.  Its orthogonal complement in $ \mathbbm{C}^{2J+1} $ will be denoted by $ F_K^{(N)}  = \mathbbm{1}_{\mathbb{C}^{2J+1}} - E_K^{(N)}  $. 
Using Lemma~\ref{lem:approxH} we conclude
\begin{align}\label{eq:mainblock}
	& \max_{ J \geq   N/2 - K_N } \max_{\alpha} \left\|  \widetilde H_J^{(N)}  - E_K^{(N)}  H_{J,\alpha} E_K^{(N)}  \right\| = o(1)  , \\
	& \max_{ J \geq   N/2 - K_N } \max_{\alpha}   \left\|  F_K^{(N)}  H_{J,\alpha}  E_K^{(N)} \right\| \leq  \max_{ J \geq   N/2 - K_N } \left\|  F_K^{(N)}  \overline D_{J,N} E_K^{(N)} \right\|  +  o(1) . \notag 
\end{align}
The term in the right side equals
\begin{equation}\label{eq:FEblockweg}
\left\|  F_K^{(N)}   \overline I_J^{(N)} ( D -P_{K_N^-} D P_{K_N^-}) I_J^{(N)}  E_K^{(N)} \right\|  \leq \left\| ( D -P_{K_N^-} D P_{K_N^-}) I_J^{(N)}  E_K^{(N)} \right\| = o(1) ,
\end{equation}
and its convergence follows from~\eqref{eq:convspecprN} and~\eqref{eq:projectionbd}. 

For the Schur-complement principle in Proposition~\ref{prop:Schur} it thus remains to control the lowest eigenvalue of the block $F_K^{(N)}  H_{J,\alpha}  F_K^{(N)} $. For this task, we again employ Lemma~\ref{lem:truncation}, which yields the matrix inequality
\begin{align}\label{eq:QHQ}
F_K^{(N)}  H_{J,\alpha}  F_K^{(N)}   & \geq \left( N h(\mathbf{m}_0) - C\right) F_K^{(N)}    + c\  F_K^{(N)}  \left( N - 2 S_z \right) F_K^{(N)}   \notag \\ & \geq \left( N h(\mathbf{m}_0) - C\right) F_K^{(N)}    + 2 c \, M \left( 1  - \left\| F_K^{(N)}  P^M_J  F_K^{(N)}   \right\| \right) F_K^{(N)} 
\end{align}
where $ P^M_J  $ is the orthogonal projection in $ \mathbb{C}^{2J+1} $  spanned by the eigenvectors $ | J \rangle , | J-1 \rangle , \dots | J - M \rangle $ of $ S_z $ corresponding to the highest $ M+1 $ eigenvalues. We establish a lower bound on the last term in~\eqref{eq:QHQ} with the help of an upper bound on
\begin{equation}\label{eq:turnaround}
\| F_K^{(N)}  P^M_J  F_K^{(N)}  \| = \| P^M_J F_K^{(N)}  P^M_J \|  \leq \sum_{m=0}^M \langle J-m | \  F_K^{(N)}   | J-m\rangle . 
\end{equation}
By Lemma~\ref{lem:src} as $ N \to \infty $ and for any sequence $ J \geq N/2 - K_N $ the matrices $P_{K_N^-} D P_{K_N^-}   $ converge in strong-resolvent sense to $ D $.  This implies the weak convergence of the corresponding spectral projections. In particular,  for each fixed $ m \in \mathbb{N}_0 $ and any sequence $ J \geq N/2 - K_N $:
\begin{equation}\label{eq:estconf} 
  \langle J-m | \  F_K^{(N)}  | J-m\rangle = \sum_{k > K} \left| \langle m | \psi_k \rangle \right|^2 + o(1) , 
\end{equation}
where $ (\psi_k)_{k\in \mathbb{N}_0} $ denotes the orthonormal eigenbasis of $ D $. The latter are estimated in Proposition~\ref{prop:harm}. Thanks to the exponential decay estimate~\eqref{eq:expdecay2}, the right side is exponentially small in $ K - M $ for all $ m \in \{ 0, 1, \dots , M \} $ provided $ K \geq 2 M $. 
Hence, if we choose $ M $ large enough and subsequently $ K $ suitably larger,
the prefactor in the  last term in~\eqref{eq:QHQ} exceeds any constant. The block $ F_K^{(N)} H_{J,\alpha}  F_K^{(N)}  $ then does not contribute to the low-energy spectrum of $ H_{J,\alpha}  $.

By~\eqref{eq:mainblock}--\eqref{eq:estconf} and Proposition~\ref{prop:Schur} the low-energy spectrum of $ H_{J,\alpha} $ hence asymptotically agrees with the lowest eigenvalues of $\widetilde H_J^{(N)} $. In turn, they agree with those of the operator in~\eqref{eq:DNcalc}.
 By the convergence~\eqref{eq:convev}, the eigenvalues of the last term in~\eqref{eq:DNcalc} asymptotically agree with those of  $\omega_y^\perp  D $, which are given by the oscillator values~\eqref{eq:harmosc}. For $ k = N/2 - J \in \mathbb{N}_0 $ fixed, this yields the expression~\eqref{eq:allev} for the  eigenvalues including their multiplicities. 
 
The minimal eigenvalue $ E_0(H) $ asymptotically corresponds to the choice $ k = m = 0 $, which establishes~\eqref{eq:gsenergy}. 
The corresponding weak convergence of the eigenvector~\eqref{eq:eigenvect} is a consequence of the Schur-complement analysis,  Lemma~\ref{lem:src} and the explicit expression~\eqref{eq:gs} for the ground-state of $ D $. 

Finally, the expression~\eqref{eq:gapH} for the spectral gap is immediate from~\eqref{eq:allev}, since either the first excited corresponds to $ k= 0 $ and $ m = 1 $ or vice versa.  
\end{proof}

\subsection{Proof of Theorem~\ref{thm:unimin2}}\label{sec:pfunim2}
In case the unique minimum is found at at  $  \mathbf{m}_0 \in B_1 $ with $0 < | \mathbf{m}_0| < 1$, the low-energy spectrum of $ H $ stems again from blocks $ H_{J,\alpha} $ with $ J $-values in the vicinity of $ J_N( \mathbf{m}_0) = N  |\mathbf{m}_0|/ 2 $. This is evident from the following lower bound, which substitutes Lemma~\ref{lem:truncation} in the present case. 
\begin{lemma}\label{lem:truncation1a}
In the situation of Theorem~\ref{thm:unimin2}, there are constants $ c, C \in (0,\infty) $ such that for all $ (J,\alpha) $ and all $ \mathbf{m}_1 \in B_1 $:
\begin{equation}
H_{J,\alpha} \geq  N h(\mathbf{m}_0) - C - \frac{cN}{4} |    \mathbf{m}_1- \mathbf{m}_0 |^2 + \frac{c}{N} \left( J - J_N( \mathbf{m}_1)  \right)^2 + c | \mathbf{m}_1|   \left( J -    \mathbf{e}_{  \mathbf{m}_1} \cdot \mathbf{S} \right) .
\end{equation}
\end{lemma}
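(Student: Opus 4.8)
The plan is to imitate the proof of Lemma~\ref{lem:truncation}, making only two changes: the linear lower bound \eqref{eq:lowerh1min} on $h$ has to be upgraded to a quadratic one (since now $\nabla h(\mathbf{m}_0)=\mathbf{0}$ at the interior minimum), and the resulting classical quadratic form is expanded around the arbitrary reference point $\mathbf{m}_1$ rather than around $\mathbf{m}_0$. Exactly as in \eqref{eq:fromasssym}, assumption \eqref{ass:symbol} first provides a constant $C<\infty$, uniform in $N,J,\alpha$, with
\[
H_{J,\alpha}\ \geq\ -C+\frac{2J+1}{4\pi}\int N\,h\!\Big(\tfrac{2J}{N}\,\mathbf{e}(\Omega)\Big)\,\big|\Omega,J\rangle\langle\Omega,J\big|\,d\Omega .
\]
Since $\mathbf{m}_0$ is the \emph{unique} global minimiser of $h$ on the compact ball $B_1$ and $D_h(\mathbf{m}_0)>0$, a second-order Taylor expansion at $\mathbf{m}_0$ (using $h\in C^3$) bounds $h$ near $\mathbf{m}_0$ from below by $h(\mathbf{m}_0)+c\,|\mathbf{m}-\mathbf{m}_0|^2$, while on the compact complement of any fixed neighbourhood of $\mathbf{m}_0$ in $B_1$ the function $h-h(\mathbf{m}_0)$ is bounded below by a positive constant; combined, and using $\operatorname{diam}B_1=2$, this yields a single $c>0$ with $h(\mathbf{m})\geq h(\mathbf{m}_0)+c\,|\mathbf{m}-\mathbf{m}_0|^2$ for all $\mathbf{m}\in B_1$. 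Inserting this pointwise estimate into the integral (legitimate since $|\Omega,J\rangle\langle\Omega,J|\,d\Omega$ is a positive operator-valued measure) gives
\[
H_{J,\alpha}\ \geq\ N h(\mathbf{m}_0)-C+cN\,\frac{2J+1}{4\pi}\int \big|\tfrac{2J}{N}\mathbf{e}(\Omega)-\mathbf{m}_0\big|^{2}\,\big|\Omega,J\rangle\langle\Omega,J\big|\,d\Omega .
\]

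It remains to bound the last operator below by the $\mathbf{m}_1$-dependent expression. Writing $\tfrac{2J}{N}\mathbf{e}(\Omega)-\mathbf{m}_0=\big(\tfrac{2J}{N}\mathbf{e}(\Omega)-\mathbf{m}_1\big)+(\mathbf{m}_1-\mathbf{m}_0)$ and using Young's inequality on the cross term retains a positive multiple of $\big|\tfrac{2J}{N}\mathbf{e}(\Omega)-\mathbf{m}_1\big|^{2}$ at the cost of a scalar of order $N\,|\mathbf{m}_1-\mathbf{m}_0|^{2}$, which comes out of the integral through the resolution of identity \eqref{eq:completeness} and produces the $-\tfrac{cN}{4}|\mathbf{m}_1-\mathbf{m}_0|^{2}$ term. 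For the retained part I would use the polar identity
\[
\big|\tfrac{2J}{N}\mathbf{e}(\Omega)-\mathbf{m}_1\big|^{2}=\big(\tfrac{2J}{N}-|\mathbf{m}_1|\big)^{2}+2\,\tfrac{2J}{N}\,|\mathbf{m}_1|\,\big(1-\mathbf{e}_{\mathbf{m}_1}\cdot\mathbf{e}(\Omega)\big).
\]
The first summand equals $\tfrac{4}{N^{2}}\big(J-J_N(\mathbf{m}_1)\big)^{2}$ and, being a scalar, gives the $\tfrac{c}{N}(J-J_N(\mathbf{m}_1))^{2}$ term, while $N$ times the second summand is a positive multiple of $|\mathbf{m}_1|\big(J-J\,\mathbf{e}_{\mathbf{m}_1}\cdot\mathbf{e}(\Omega)\big)$, and $J\,\mathbf{e}_{\mathbf{m}_1}\cdot\mathbf{e}(\Omega)$ is precisely the lower symbol of $\mathbf{e}_{\mathbf{m}_1}\cdot\mathbf{S}$; hence integrating against $\tfrac{2J+1}{4\pi}|\Omega,J\rangle\langle\Omega,J|\,d\Omega$ and invoking Lemma~\ref{lem:assz} for the direction $\mathbf{e}_{\mathbf{m}_1}$ (valid by rotation covariance) replaces it by the operator $|\mathbf{m}_1|\big(J-\mathbf{e}_{\mathbf{m}_1}\cdot\mathbf{S}\big)$ up to an $\mathcal{O}(1)$ error. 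Collecting the terms, absorbing the accumulated $\mathcal{O}(1)$ remainders into $C$, and taking $c$ small enough to serve simultaneously in all three places, yields the asserted inequality; note that $J-\mathbf{e}_{\mathbf{m}_1}\cdot\mathbf{S}\geq 0$ and $\big(\tfrac{2J}{N}-|\mathbf{m}_1|\big)^{2}\geq 0$, so that shrinking the constants in front of these two summands is harmless.

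The genuinely delicate point is this last manipulation: one must organise the Young split together with the polar decomposition so that, after integration against the coherent-state measure and the use of Lemma~\ref{lem:assz}, one recovers \emph{exactly} the combination $\tfrac1N(J-J_N(\mathbf{m}_1))^{2}+|\mathbf{m}_1|(J-\mathbf{e}_{\mathbf{m}_1}\cdot\mathbf{S})$, with a penalty that is \emph{quadratic} in $|\mathbf{m}_1-\mathbf{m}_0|$ rather than merely linear, and with all constants uniform in $N$, $J$, $\alpha$ and $\mathbf{m}_1$. Everything else is a direct adaptation of the argument for Lemma~\ref{lem:truncation}. The lemma will then be used, just as Lemma~\ref{lem:truncation} is used in the surface-minimum case, to confine the low-energy spectrum of $H$ to blocks with $|J-J_N(\mathbf{m}_0)|=\mathcal{O}(\sqrt N)$ and to coherent states localised near $\mathbf{e}_{\mathbf{m}_0}$, by choosing $\mathbf{m}_1=\mathbf{m}_0$ and invoking the positivity of the two retained terms.
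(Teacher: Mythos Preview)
Your proposal is correct and follows essentially the same route as the paper's proof: the paper also derives the quadratic lower bound $h(\mathbf{m})\geq h(\mathbf{m}_0)+\tfrac{c}{4}|\mathbf{m}-\mathbf{m}_0|^2$, then shifts the center to $\mathbf{m}_1$ via Cauchy--Schwarz (your Young split), and finally says only ``Plugging the estimate into~\eqref{eq:fromasssym} yields the claim using the same arguments as in the proof of Lemma~\ref{lem:truncation}.'' Your polar identity and invocation of Lemma~\ref{lem:assz} are precisely what is hidden in that last sentence, so you have in fact unpacked the paper's argument rather than departed from it.
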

\begin{proof}
We use that $   \mathbf{m}_0 \in B_1$ is the unique minimum of $ h: B_1 \to \mathbb{R} $ at which $ D_h(  \mathbf{m}_0) > 0 $. Hence there is some $ c > 0 $ such that:
\begin{equation}\label{eq:simplelb}
h( \mathbf{m}) \geq  h( \mathbf{m}_0) + \frac{c}{4} \left|  \mathbf{m}- \mathbf{m}_0 \right|^2 \geq h( \mathbf{m}_0) + \frac{c}{8} \left|  \mathbf{m}- \mathbf{m}_1 \right|^2 -  \frac{c}{2} \left|  \mathbf{m}_1 - \mathbf{m}_0 \right|^2 .
\end{equation}
The last inequality was obtained using Cauchy-Schwarz and holds  for all $  \mathbf{m} \in B_1 $. 
Plugging the estimate into~\eqref{eq:fromasssym} yields the claim using the same arguments as in the proof of Lemma~\ref{lem:truncation}. 
\end{proof}

\begin{proof}[Proof of Theorem~\ref{thm:unimin2}] 
A Rayleigh-Ritz bound with the variational state $ | \Omega_0 , J_N(\mathbf{m}_0) \rangle $, with $ \Omega_0 $ the spherical angle of $  \mathbf{e}_{\mathbf{m}_0 }$, yields the upper bound 
$$
E_0(H) \leq E_0\big(H_{J_N(\mathbf{m}_0) ,\alpha}\big) \leq \langle \Omega_0 , J_N(\mathbf{m}_0)| H_{J_N(\mathbf{m}_0) ,\alpha}  | \Omega_0 , J_N(\mathbf{m}_0) \rangle  \leq Nh(\mathbf{m}_0) + C 
$$
by Proposition~\ref{prop:Druff2}. Combing this with Lemma~\ref{lem:truncation1a} with $ \mathbf{m}_1 = \mathbf{m}_0 $, we immediately conclude that the blocks $ H_{J,\alpha} $  
corresponding to  $ | J - J_N( \mathbf{m}_0) | \geq C  \sqrt{N }  $ with $ C > 0 $ suitably large, but fixed, do not contribute to the spectrum near the ground state.

It thus remains to analyse the blocks corresponding to  $ | J - J_N( \mathbf{m}_0) | \leq  C \sqrt{N }  $.  
For any $ J $ in this regime, we associate a radius $ r_{J,N} :=  2J/N $, which by construction satisfies
$ \left| r_{J,N} - |\mathbf{m}_0| \right| \leq C N^{-1/2}   $. The value of $ h $  at the point on this sphere in direction of $ \mathbf{m}_0 $ is controlled by a second-order Taylor estimate using $ | \nabla h(\mathbf{m}_0) | = 0 $:
\begin{equation}\label{eq:Taylorh}
\left| h\left(  r_{J,N} \mathbf{e}_{\mathbf{m}_0} \right) - h(\mathbf{m}_0) \right| \leq C  \left| r_{J,N} - |\mathbf{m}_0| \right|^2 \leq \frac{C}{N} ,
\end{equation}
where here and in the following the constant $ C $ changes from line to line, but remains independent of $ J $ and $ N $. 
Hence, the minimum $\mathbf{m}_{J,N} := \arg\min \{ h(\mathbf{m}) \, | \, |\mathbf{m} | = r_{J,N} \} $ on the sphere of radius $  r_{J,N}  $ has distance to $ \mathbf{m}_0 $ at most
\begin{equation}\label{eq:distancems}
	\left| \mathbf{m}_{J,N}- \mathbf{m}_0 \right| \leq  \frac{C}{\sqrt{N}} , 
\end{equation}
since otherwise the first inequality in~\eqref{eq:simplelb} together with the above estimate of $ h\left(  r_{J,N} \mathbf{e}_{\mathbf{m}_0} \right) $ would yield a contradiction to the minimality of $ h(\mathbf{m}_{J,N} ) $. Using Taylor-estimates for $ h \in C^3 $, this also implies:
\begin{align}\label{eq:Taylor3}
	& \left|\nabla h(\mathbf{m}_{J,N} ) \right| = 	\left|\nabla h(\mathbf{m}_{J,N} ) - \nabla h( \mathbf{m}_0)\right| \leq C \left| \mathbf{m}_{J,N}- \mathbf{m}_0 \right| \leq  \frac{C}{\sqrt{N}}  , \notag \\
	& | \kappa(\mathbf{m}_{J,N} ) - \kappa( \mathbf{m}_0)| \leq C \left| \mathbf{m}_{J,N}- \mathbf{m}_0 \right| \leq  \frac{C}{\sqrt{N}}  , \notag \\
	& \left\| D_h(\mathbf{m}_{J,N} ) -  D_h( \mathbf{m}_0 ) \right\| \leq C \left| \mathbf{m}_{J,N}- \mathbf{m}_0 \right| \leq   \frac{C}{\sqrt{N}} .
\end{align}
Having singled out $ \mathbf{m}_{J,N}  $, we now use assumption \textbf{A1} to approximate $ H_{J,\alpha} $ in the $ \mathcal{H}_{J}^{K_N}(\mathbf{m}_{J,N} ) $, where 
we pick $ K_N = o(N^{1/6}) $ diverging to infinity.  This enables us to use  Lemma~\ref{lem:altQ} (with $ \overline{K}_N = 1 $) and the  second-order polynomial
$ \widehat Q(\mathbf{m}_{J,N}) $ in this approximation. The above estimate on the gradient implies that the first-order term is negligible:
$$
\left\| \nabla h(\mathbf{m}_{J,N} ) \cdot \left( 2\mathbf{S} -N\mathbf{m}_{J,N} \right)  P_{J}^{K_N}(\mathbf{m}_{J,N})  \right\| 
\leq  2 \left|  \nabla h\mathbf{m}_{J,N} ) \right|   K_N  \leq C \frac{K_N}{\sqrt{N}} = o(1)  . 
$$
Changing coordinates by a unitary rotation such that $ \mathbf{m}_{J,N} = (0,0,r_{J,N}) $, we thus conclude that on $ \mathcal{H}_{J}^{K_N}( \mathbf{m}_{J,N} )  $ we may approximate $ H_{J,\alpha} $  in terms of 
$$
N h(\mathbf{m}_{J,N}  ) +  \kappa(\mathbf{m}_{J,N} ) + \frac{2 \ r_{J,N}}{N} \left( \omega_x^{(N)} S_x^2 +  \omega_y^{(N)}  S_y^2 \right) . 
 $$
 where  $\omega_x^{(N)} ,  \omega_y^{(N)} $ are the two eigenvalues of $ D_h^\perp(\mathbf{m}_{J,N} ) $, i.e., the Hessian projected onto the directions perpendicular to $ \mathbf{m}_{J,N} $. Since $  D_h( \mathbf{m}_0 )  > 0 $, these eigenvalues are uniformly bounded away from zero for all $ | J - J_N( \mathbf{m}_0) | \leq  C \sqrt{N }  $ by~\eqref{eq:Taylor3}. Moreover, they uniformly converge to the eigenvalues $   0 < \omega_y \leq \omega_x$ of $ D_h^\perp(\mathbf{m}_{0} ) $ as $ N \to \infty $. In fact, from~\eqref{eq:Taylor3} we have the estimates $ | \omega_\xi^{(N)} -  \omega_\xi | \leq C N^{-1/4} $ for both $ \xi \in \{x,y\} $. Since also $\left\| S_\xi^2 P_{J}^{K_N}(\mathbf{m}_{J,N})  \right\| \leq C N K_N $ (cf.~\eqref{eq:dthpowerS}), 
we may thus use
$$
H_J^{(N)} := N h(\mathbf{m}_{J,N}  ) +  \kappa(\mathbf{m}_{0} )  +  | \mathbf{m}_{0} | \,  \omega_y \ \overline I_J^{(N)} \left( \omega^2 L_x^2 + L_y^2\right)  I_J^{(N)}  \quad \mbox{on $  \mathcal{H}^{K_N}_{J}( \mathbf{m}_{J,N})  $}
$$
to show the following substitute for Lemma~\ref{lem:approxH}:
$$
\max_{ | J - J_N( \mathbf{m}_0) | \leq  C \sqrt{N } } \left\| \left( H_{J,\alpha} - H_J^{(N)} \right) P_{J}^{K_N}(\mathbf{m}_{J,N}) \right\| = o(1) . 
$$
We then proceed with the Schur-complement analysis as in the proof of Theorem~\ref{thm:unimin} with only one modification.  To control the last block, we use Lemma~\ref{lem:truncation1a} with $ \mathbf{m}_1 = \mathbf{m}_{J,N} $. 
For this choice, the quadratic difference term in its right side is bounded by a constant thanks to~\eqref{eq:distancems}. %

This proves that $ E_0(H_{J,\alpha}) =  N h(\mathbf{m}_{J,N})  + \kappa( \mathbf{m}_0) +  | \mathbf{m}_0|  \sqrt{\det   D_h^{\perp}(\mathbf{m}_0)} + o(1) $ for all blocks $ (J,\alpha) $ with $  | J - J_N( \mathbf{m}_0) | \leq  C \sqrt{N } $. Clearly for $ J = J_N( \mathbf{m}_0) $, we have $ \mathbf{m}_{J,N} = \mathbf{m}_0 $, which concludes the proof of~\eqref{eq:gsball}. The assertion concerning the regime $  | J - J_N( \mathbf{m}_0) | \leq o(\sqrt{N})  $ follows by a Taylor estimate as in~\eqref{eq:Taylorh}. 
\end{proof}

\subsection{Proof of Theorem~\ref{thm:mmin}}\label{sec:prmmin}
Compared to the case of one minimum, the case of several minima $ \mathcal{M} = \{ \mathbf{m}_1, \dots ,  \mathbf{m}_L \} \subset S^2 $ of $ h $ poses the additional problem to separate the patches around them. In the next subsection, we use semiclassical analysis based on refined projection techniques  to tackle this issue and to focus on the subspaces, from which the low-energy spectrum arises. The part going beyond this semiclassicss largely parallels the proof in the previous section.
\subsubsection{Semiclassics for subspace decompositions} 
In each of the  subspaces $ \mathcal{H}_J^{K}(\mathbf{m}_l) \subset \mathbb{C}^{2J+1} $, which are associated to the minimizing directions
$ \mathbf{m}_l $,  we may choose the canonical orthonormal basis consisting of normalized eigenvectors  of $ \mathbf{m}_l \cdot \mathbf{S} $, i.e.
$$
\mathbf{m}_l \cdot \mathbf{S}  \ | k ; \mathbf{m}_l \rangle =  k \ | k ; \mathbf{m}_l \rangle , \quad k \in \{-J, \dots, J \} , 
$$
and hence $  \mathcal{H}_J^{K}(\mathbf{m}_l) = \spa\left\{ | J- k ; \mathbf{m}_l \rangle \ | \ k \in \{ 0, 1 , \dots  , K \} \right\} $. 
\begin{lemma}\label{lem:spectrumG}
Let $ \mathcal{M} := \{ \mathbf{m}_1, \dots , \mathbf{m}_L \} \subset S^2 $ be a finite set and $ J \in  \mathbb{N} /2 $ and  $ K \in \mathbb{N} $ fixed. Then the spectrum of the Gram matrix
\begin{equation}\label{def:G}
G := \left( \langle J -k^\prime  ; \mathbf{m}_{l^\prime}    | J -k  ; \mathbf{m}_l \rangle  \right)_{\substack {l,l^\prime \in \{1, \dots , L \} \\  k,k^\prime \in \{0,1, \dots , K \} } } 
\end{equation}
is contained in the set $ [1-R_J^K , 1+ R_J^K ] $ with 
$$
R_J^K = (L-1) (K+1) ( 4K J)^{K}  \exp\left(- (J-2K) \gamma \right) \quad \mbox{and}\quad \gamma = \min_{l\neq l\prime}  \ln   \left(\cos\frac{ \sphericalangle( \mathbf{m}_l, \mathbf{m}_{l^\prime}) }{2}\right)^{-1} > 0  , 
$$
with $  \sphericalangle( \mathbf{m}_l, \mathbf{m}_{l^\prime}) $ the spherical angle between the two points. 
\end{lemma}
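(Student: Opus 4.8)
The plan is to split $G$ into its identity part and an exponentially small remainder and then invoke Gershgorin's theorem. First note that $\spec G$ is insensitive to the phase choices implicit in fixing the $|J-k;\mathbf{m}_l\rangle$, since changing them conjugates $G$ by a diagonal unitary; we may thus take $|J-k;\mathbf{m}_l\rangle=U(\Omega_l)|J-k\rangle$ as in~\eqref{def:cs}, with $\Omega_l$ the spherical angle of $\mathbf{m}_l$. For a fixed direction $\mathbf{m}_l$, the vectors $|J-k;\mathbf{m}_l\rangle$, $k=0,\dots,K$, are normalized eigenvectors of the self-adjoint operator $\mathbf{m}_l\cdot\mathbf{S}$ for the pairwise distinct eigenvalues $J-k$, hence $\langle J-k';\mathbf{m}_l\,|\,J-k;\mathbf{m}_l\rangle=\delta_{k,k'}$. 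Consequently $G=\mathbbm{1}+E$, where $E$ vanishes on the $(l,l)$-blocks and carries the cross-direction entries $E_{(l,k),(l',k')}=\langle J-k';\mathbf{m}_{l'}\,|\,J-k;\mathbf{m}_l\rangle$ with $l\neq l'$. Each row of $E$ therefore has at most $(L-1)(K+1)$ nonzero entries, so that by Gershgorin's circle theorem applied to the Hermitian matrix $G$ one gets $\spec G\subseteq[1-r,1+r]$ with $r\leq(L-1)(K+1)\,\mathcal{E}$, where $\mathcal{E}:=\max_{l\neq l'}\max_{k,k'\leq K}\bigl|\langle J-k';\mathbf{m}_{l'}\,|\,J-k;\mathbf{m}_l\rangle\bigr|$. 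It thus remains to prove $\mathcal{E}\leq(4KJ)^K e^{-(J-2K)\gamma}$.

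To estimate a single cross-direction overlap, fix $l\neq l'$ and put $R:=U(\Omega_{l'})^{-1}U(\Omega_l)\in SU(2)$, so that $\langle J-k';\mathbf{m}_{l'}\,|\,J-k;\mathbf{m}_l\rangle=\langle J-k'|R|J-k\rangle$. Writing $R=e^{i\phi_1 S_z}\,e^{i\vartheta S_y}\,e^{i\phi_2 S_z}$ in Euler angles with $\vartheta\in[0,\pi]$, the two $S_z$-rotations contribute only phases, hence $\bigl|\langle J-k'|R|J-k\rangle\bigr|=\bigl|d^{J}_{J-k',J-k}(\vartheta)\bigr|$ is a Wigner small-$d$ matrix element. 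Specializing to $k=k'=0$ and comparing with the standard Bloch coherent-state overlap $\bigl|\langle J;\mathbf{m}_{l'}\,|\,J;\mathbf{m}_l\rangle\bigr|=\bigl(\cos\tfrac{\sphericalangle(\mathbf{m}_l,\mathbf{m}_{l'})}{2}\bigr)^{2J}$ (Appendix~\ref{app:cs}) with $\bigl|d^{J}_{J,J}(\vartheta)\bigr|=(\cos\tfrac{\vartheta}{2})^{2J}$ identifies $\cos\tfrac{\vartheta}{2}=\cos\tfrac{\sphericalangle(\mathbf{m}_l,\mathbf{m}_{l'})}{2}$.

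I would then feed in Wigner's explicit formula for $d^{J}_{J-k',J-k}(\vartheta)$, a sum of at most $K+1$ terms, each a binomial prefactor times $(\cos\tfrac{\vartheta}{2})^{p}(\sin\tfrac{\vartheta}{2})^{q}$ with $q\geq0$ and, from the admissible range of the summation index, $p\geq|2J-k-k'|\geq J-2K$. Bounding $\sin\tfrac{\vartheta}{2}\leq1$ and using $0\leq\cos\tfrac{\vartheta}{2}<1$ to lower every exponent to $J-2K$, and bounding each prefactor crudely (ratios of the type $\tfrac{(2J-k)!}{(2J-k-s)!}\leq(2J)^{K}$ are the source of the polynomial growth, and a factor $2^{K}$ absorbs the remaining binomial coefficients and the number of terms), one obtains $\bigl|d^{J}_{J-k',J-k}(\vartheta)\bigr|\leq(4KJ)^K(\cos\tfrac{\vartheta}{2})^{J-2K}$; the same estimate is also accessible, more in the spirit of the paper's Appendix~\ref{app:cs1}, from a Cauchy-integral representation of the overlap in terms of the generating vector $e^{\zeta S_-}|J\rangle$ together with the explicit two-point coherent-state overlap. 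Maximizing over $l\neq l'$ and using $\cos\tfrac{\sphericalangle(\mathbf{m}_l,\mathbf{m}_{l'})}{2}\leq e^{-\gamma}$ yields $\mathcal{E}\leq(4KJ)^K e^{-(J-2K)\gamma}$, which inserted into the Gershgorin bound above is the claim; moreover $\gamma>0$ because the $\mathbf{m}_l$ are finitely many and pairwise distinct, so each $\cos\tfrac{\sphericalangle(\mathbf{m}_l,\mathbf{m}_{l'})}{2}<1$.

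The \emph{main obstacle} is this last estimate: one has to extract genuine exponential decay in $J$ from the overlap of two excited coherent states pointing in different directions, rather than the merely polynomially large bounds ($\|S_\pm\|^k\sim J^k$) produced by a naive operator-norm estimate, which forces one to exploit the precise cancellation structure of the $SU(2)$ rotation matrix elements; and in tandem one must keep the combinatorial prefactor polynomial of degree $O(K)$ and uniform in $k,k'\leq K$ and in $J$. Everything else is soft: the diagonal blocks are trivially the identity, and the passage from the entrywise bound to the spectral enclosure is just Gershgorin's theorem with the block structure of $E$.
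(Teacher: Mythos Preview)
Your proposal is correct and follows essentially the same route as the paper: identify the diagonal blocks of $G$ as identity, apply Gershgorin's circle theorem, and bound the cross-direction overlaps $\langle J-k';\mathbf{m}_{l'}\,|\,J-k;\mathbf{m}_l\rangle$ by an explicit computation of $SU(2)$ matrix elements that extracts the factor $(\cos\tfrac{\vartheta}{2})^{2J-k-k'}$. The only cosmetic difference is that the paper packages the overlap estimate as a separate lemma (Lemma~\ref{lem:overlap}) proved via the Gauss decomposition $U(\Omega)=e^{\zeta S_-}e^{\lambda S_z}e^{-\bar\zeta S_+}$ rather than the Euler-angle/Wigner $d$-matrix route you sketch, but the resulting bound and its use are the same.
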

\begin{proof}
From Lemma~\ref{lem:overlap} we infer that for all $ k,k^\prime \leq K $:
\begin{equation}\label{eq:bounded}
\left| \langle J- k ; \mathbf{m}_l  \big|   J-k^\prime ; \mathbf{m}_{l^\prime} \rangle  \right|  \leq ( 2J)^{(k+k^\prime)/2} (2\max\{k,k^\prime\})^{\min\{k,k^\prime\}}  \left| \cos\frac{ \sphericalangle( \mathbf{m}_l, \mathbf{m}_{l^\prime}) }{2} \right|^{2J-(k+k^\prime)} .
\end{equation}
Since $ G $ is a hermitian block matrix with $ L \times L $ blocks of size $ K+1 $ and the diagonal blocks all equal to the unit matrix, the assertion follows straightforwardly from Gershgorin's circle theorem. 
\end{proof}
The last lemma ensures that for $ J \geq N/2 - K_N $ and $ K_N = o(N /\ln N )  $: 
\begin{equation}\label{eq:Rexpsmall}
R_J^{K_N}= o(N^{-\infty}) .
\end{equation}
The union of the individual basis vectors of $  \mathcal{H}_J^{K_N}(\mathbf{m}_l) $ for any finite number of directions hence still form a set of linearly independent vectors and thus a basis of joint subspace
$$
 \mathcal{H}_J^{K_N} := \bigvee_{l=1}^L  \mathcal{H}_J^{K_N}(\mathbf{m}_l) . 
$$
In this situation, we may construct an orthonormal basis using the square-root of the inverse, $G^{-1/2} $, of  the Gram matrix defined as in~\eqref{def:G}:
\begin{equation}\label{def:ONB} 
	\big| (k,l) \rangle := \sum_{l^\prime =1}^L \sum_{k^\prime =0}^{K_N} G^{-1/2}_{(kl),(k^\prime l^\prime)} \big|  J-k^\prime ; \mathbf{m}_{l^\prime} \rangle  , \qquad (k,l) \in \{ 0, \dots , K_N\} \times \{ 1, \dots , L \} . 
\end{equation}
Clearly, $\langle (k^\prime,l^\prime) \big| (k,l) \rangle = \delta_{l, l^\prime} \delta_{k, k^\prime} $.  The spectral projections onto $ \mathcal{H}_J^{K_N}(\mathbf{m}_l)  $ and its cousins after orthogonalization, 
\begin{equation}\label{eq:defstraightproj}
P_J^{K_N}( \mathbf{m}_l) := \sum_{k=0}^{K_N}  \big|  J -k  ; \mathbf{m}_l  \rangle \langle  J -k  ; \mathbf{m}_l \rangle  \big| , \qquad P_J^{K_N}(l) :=  \sum_{k=0}^{K_N}  \big|  (k,l)\rangle  \langle (k,l) \rangle  \big| ,
\end{equation}
are then norm-close to each other.
\begin{theorem}\label{cor:approxP}
For $ N/2 \geq J \geq N/2 - K_N $ and $ K_N = o(N/ \ln N )  $:
\begin{enumerate}
\item $ \displaystyle 
	\left\| \big| (k,l) \rangle - \big|  J-k ; \mathbf{m}_{l} \rangle \right\| = o(N^{-\infty}) 
$ for all $ (k,l) \in \{0,\dots, K_N\} \times \{1,\dots, L\} $.
\item 
$\displaystyle 
\max_{l} \left\| P_J^{K_N}( \mathbf{m}_l) -  P_J^{K_N}(l)  \right\| = o(N^{-\infty}) $, and 
the projection $ P_J^{K_N} := \sum_{l=0}^L P_J^{K_N}(l) $ onto $  \mathcal{H}_J^{K_N}  $ is approximated according to
\begin{equation}\label{eq:almostdecompunity}
\left\| P_J^{K_N}- \sum_{l=0}^LP_J^{K_N}( \mathbf{m}_l)  \right\| = o(N^{-\infty}) . 
\end{equation}
\end{enumerate}
\end{theorem}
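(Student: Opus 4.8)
The plan is to prove Theorem~\ref{cor:approxP} by reading both assertions off the Gram-matrix bound of Lemma~\ref{lem:spectrumG}. Abbreviate $R:=R_J^{K_N}$; by~\eqref{eq:Rexpsmall}, in the regime $N/2\ge J\ge N/2-K_N$ with $K_N=o(N/\ln N)$ one has $R=o(N^{-\infty})$ uniformly in $J$, and in particular $R\le\tfrac12$ for all large $N$. The orthonormalisation~\eqref{def:ONB} is the L\"owdin construction with transition matrix $G^{-1/2}$, so the first step is to control $\tilde E:=G^{-1/2}-\mathbbm{1}$. Since $G$ is hermitian with $\spec G\subset[1-R,1+R]$, the spectral calculus applied to $f(x)=x^{-1/2}-1$ shows that $\tilde E$ is hermitian and $\|\tilde E\|\le\max_{x\in[1-R,1+R]}|f(x)|\le 2R$.

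For part~1 I would expand, using~\eqref{def:ONB},
\[
\big| (k,l)\rangle-\big| J-k;\mathbf{m}_l\rangle=\sum_{l'=1}^{L}\sum_{k'=0}^{K_N}\tilde E_{(k,l),(k',l')}\,\big| J-k';\mathbf{m}_{l'}\rangle .
\]
Let $b$ denote the $(k,l)$-th row of $\tilde E$, regarded as a vector indexed by $(k',l')$, so that $\|b\|\le\|\tilde E\|$. Since the Gram matrix of the generating vectors $\{\,| J-k';\mathbf{m}_{l'}\rangle\,\}$ has operator norm $\|G\|\le 1+R$, the squared norm of the right-hand side above is at most $\|G\|\,\|b\|^{2}\le(1+R)(2R)^{2}$, and hence
\[
\max_{(k,l)}\left\|\,\big| (k,l)\rangle-\big| J-k;\mathbf{m}_l\rangle\,\right\|\le 2\sqrt{1+R}\,R=o(N^{-\infty})
\]
uniformly for $N/2\ge J\ge N/2-K_N$, which is the first claim.

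For part~2, fix $l$ and observe that both families $\{\,| J-k;\mathbf{m}_l\rangle\,\}_{k=0}^{K_N}$ and $\{\,| (k,l)\rangle\,\}_{k=0}^{K_N}$ are orthonormal --- the former because the $| J-k;\mathbf{m}_l\rangle$ are eigenvectors of $\mathbf{m}_l\cdot\mathbf{S}$ with distinct eigenvalues, the latter by construction --- so the two operators in~\eqref{eq:defstraightproj} are genuine orthogonal projections. Applying $|u\rangle\langle u|-|w\rangle\langle w|=|u-w\rangle\langle u|+|w\rangle\langle u-w|$ termwise and summing over $k$ gives
\[
\left\|P_J^{K_N}(\mathbf{m}_l)-P_J^{K_N}(l)\right\|\le 2(K_N+1)\max_{k}\left\|\,| J-k;\mathbf{m}_l\rangle-| (k,l)\rangle\,\right\|=o(N^{-\infty}),
\]
since the prefactor $K_N+1$ is only polynomial in $N$. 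Summing over the finitely many minima and using $P_J^{K_N}=\sum_l P_J^{K_N}(l)$ then yields
\[
\left\|P_J^{K_N}-\sum_l P_J^{K_N}(\mathbf{m}_l)\right\|\le\sum_l\left\|P_J^{K_N}(l)-P_J^{K_N}(\mathbf{m}_l)\right\|=o(N^{-\infty}),
\]
which is~\eqref{eq:almostdecompunity}.

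The argument is essentially forced once Lemma~\ref{lem:spectrumG} is in hand, so there is no serious obstacle inside this proof; the only care needed is the bookkeeping, namely checking that the polynomial prefactors $K_N+1$ and $L$ cannot undo the super-polynomial decay of $R_J^{K_N}$, which they cannot because $K_N=o(N/\ln N)$. The genuinely delicate input --- the exponential off-diagonal decay of the coherent-state overlaps that underlies Lemma~\ref{lem:spectrumG} --- sits upstream in Appendix~\ref{app:cs1} and is already established.
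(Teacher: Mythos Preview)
Your proof is correct and follows essentially the same approach as the paper: both arguments reduce part~1 to the spectral bound on $G$ from Lemma~\ref{lem:spectrumG} and then deduce part~2 by the triangle inequality. The only cosmetic difference is that the paper inverts the relation~\eqref{def:ONB} to write $|J-k;\mathbf{m}_l\rangle=\sum_{k',l'}G^{1/2}_{(kl),(k'l')}|(k',l')\rangle$ and computes the norm of the difference directly in the orthonormal basis $\{|(k',l')\rangle\}$, giving $\|G^{1/2}-\mathbbm{1}\|$ without the extra Gram-matrix factor you pick up by expanding in the non-orthonormal family; your detour through $c^*Gc\le\|G\|\|c\|^2$ is harmless but unnecessary.
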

\begin{proof}
1.~By definition we have 
$$
 \big|  J-k ; \mathbf{m}_{l} \rangle = \sum_{l^\prime =1}^L \sum_{k^\prime =0}^{K_N} G^{1/2}_{(kl),(k^\prime l^\prime)} \big| (k^\prime,l^\prime) \rangle  .
 $$
The norm of the difference vector is hence bounded according to
$$
\left\| \big| (k,l) \rangle - \big|  J-k ; \mathbf{m}_{l} \rangle \right\| \leq \| G^{1/2} - \mathbbm{1} \| \leq \max_{| \lambda -1 | \leq R_J^{K_N} } \left|  \sqrt{\lambda} -1 \right| \leq \frac{R_J^{K_N}}{2 \sqrt{1-R_J^{K_N}}} . 
$$ 
The last inequality follows from the bounds on the eigenvalues of $ G $ established in Lemma~\ref{lem:spectrumG}. The assertion is therefore a simple consequence of~\eqref{eq:Rexpsmall}. 

\noindent
2.~The assertions follow immediately from the first item with the help of the  triangle inequality. 
\end{proof}

In order to be able to control the relation of vectors in $  \mathcal{H}_J^{K_N}(\mathbf{m}_l) $ to a slightly fattened version of $   \mathcal{H}_J^{K_N}(\mathbf{m}_{l^\prime}) $ with $ l\neq l^\prime$,  we also need the following lemma. It will play a crucial role in the truncation procedure below.
\begin{lemma}\label{lem:onefar}
Let 
\begin{equation}\label{ass:kappa}
 0 < \kappa < \frac{1}{2} \left( \sin\frac{ \sphericalangle( \mathbf{m}_l, \mathbf{m}_{l^\prime}) }{2}\right)^2 ,
 \end{equation}
 and suppose that 
$ N/2 \geq J \geq N/2 - K_N $ and $ K_N = o(N/ \ln N )  $. Then for any $ k \in \{0, \dots , K_N\} $:
\begin{equation}
\sum_{k=0}^{K_N} \sum_{k^\prime =0}^{\kappa N} \left| \langle J -k^\prime  ; \mathbf{m}_{l^\prime}    | J -k  ; \mathbf{m}_l \rangle\right|^2 = o(N^{-\infty} ) .  
\end{equation}
\end{lemma}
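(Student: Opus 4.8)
The plan is to start from the overlap bound established in Lemma~\ref{lem:overlap} and recalled in~\eqref{eq:bounded} of the proof of Lemma~\ref{lem:spectrumG}, namely
$$
\left| \langle J-k ; \mathbf{m}_l \, \big| \, J-k^\prime ; \mathbf{m}_{l^\prime} \rangle \right| \leq (2J)^{(k+k^\prime)/2} \, \bigl(2\max\{k,k^\prime\}\bigr)^{\min\{k,k^\prime\}} \left| \cos\frac{\sphericalangle(\mathbf{m}_l,\mathbf{m}_{l^\prime})}{2} \right|^{2J-(k+k^\prime)},
$$
and then to square it and sum. The point is that the exponential-in-$J$ decay factor $\left(\cos\frac{\sphericalangle}{2}\right)^{2J}$ dominates every polynomial prefactor: since $k \leq K_N$ and $k^\prime \leq \kappa N$, the prefactors are at most $\exp\bigl(\mathcal{O}(K_N \ln N) + \mathcal{O}(\kappa N)\bigr)$, whereas the decay is $\exp\bigl(-2J\,\gamma_{ll^\prime}\bigr)$ with $\gamma_{ll^\prime} := \ln\bigl(\cos\frac{\sphericalangle(\mathbf{m}_l,\mathbf{m}_{l^\prime})}{2}\bigr)^{-1} > 0$. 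The subtlety is that the prefactor carries a factor $(2J)^{k^\prime}$ which, with $k^\prime$ as large as $\kappa N$, is itself exponentially large in $N$ with rate $\sim \kappa \ln N$ — this is not controlled by a constant times $N$ but by $N \ln N$. So I would need a slightly more careful bound that exploits the condition~\eqref{ass:kappa} quantitatively.

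The key step is therefore to re-examine~\eqref{eq:bounded} in the regime $k^\prime$ of order $N$ and extract a sharper estimate in which the polynomial factor $(2J)^{k^\prime/2}(2k^\prime)^{k^\prime}$ is reabsorbed. Concretely, using $\min\{k,k^\prime\} = k \leq K_N$ for the term $(2\max\{k,k^\prime\})^{\min\{k,k^\prime\}}$, only the genuinely problematic factor is $(2J)^{k^\prime/2} \leq (N)^{k^\prime/2} = \exp\bigl(\tfrac{k^\prime}{2}\ln N\bigr)$. Against this we have the decay $\bigl(\cos\tfrac{\sphericalangle}{2}\bigr)^{-(k+k^\prime)} \bigl(\cos\tfrac{\sphericalangle}{2}\bigr)^{2J}$, of which the first factor is a polynomial-type correction and the second is the main decay $\exp(-2J\gamma_{ll^\prime})$ with $2J \geq N - 2K_N$. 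Hence the squared summand is bounded, up to factors that are $\exp(\mathcal{O}(K_N \ln N)) = \exp(o(N))$, by
$$
\exp\bigl(k^\prime \ln N - 2N\gamma_{ll^\prime} + o(N)\bigr) \leq \exp\bigl(\kappa N \ln N - 2N\gamma_{ll^\prime} + o(N)\bigr).
$$
At first glance the $\kappa N \ln N$ term looks fatal; the resolution is that Lemma~\ref{lem:overlap} must in fact give a better bound when $k^\prime \sim \kappa N$ — specifically, the true overlap decays like $\bigl(\cos\tfrac{\sphericalangle}{2}\bigr)^{2J}$ \emph{times} a factor that, for $k^\prime/J \to 2\kappa < (\sin\tfrac{\sphericalangle}{2})^2$, behaves like $\bigl(\text{something}<1\bigr)^{J}$ rather than $(2J)^{k^\prime/2}$. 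So the honest route is: invoke the sharper form of the coherent-state overlap estimate (the angle-$\mathbf{m}_l$ eigenstates versus angle-$\mathbf{m}_{l^\prime}$ eigenstates are, after the rotation aligning $\mathbf{m}_{l^\prime}$ with the $z$-axis, just $|J-k^\prime\rangle$ tested against a rotated $|J-k\rangle$, whose components are governed by Wigner $d$-matrix entries $d^J_{J-k,J-k^\prime}(\sphericalangle)$), and use that $|d^J_{J-k,J-k^\prime}(\beta)|$ with $k = o(N)$ fixed-ish and $k^\prime$ of order $N$ is exponentially small in $N$ precisely when $\cos^2(\beta/2) < 1 - k^\prime/(2J)$, i.e. when $k^\prime/(2J) < \sin^2(\beta/2)$, which is exactly condition~\eqref{ass:kappa}.

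So the steps in order: (i) rotate so $\mathbf{m}_{l^\prime}$ points along $z$, identify $\langle J-k^\prime;\mathbf{m}_{l^\prime}|J-k;\mathbf{m}_l\rangle$ with a Wigner $d$-matrix element $d^J_{J-k^\prime,J-k}(\beta_{ll^\prime})$ where $\beta_{ll^\prime} = \sphericalangle(\mathbf{m}_l,\mathbf{m}_{l^\prime})$; (ii) for $k\leq K_N = o(N/\ln N)$ and $0\leq k^\prime\leq \kappa N$, combine the explicit bound from Lemma~\ref{lem:overlap} (or a direct Stirling estimate on the $d$-matrix) with the hypothesis $\kappa < \tfrac12\sin^2(\beta_{ll^\prime}/2)$ to get $|d^J_{J-k^\prime,J-k}(\beta_{ll^\prime})|^2 \leq e^{-cN}$ for some $c = c(\kappa,\beta_{ll^\prime}) > 0$, uniformly over the relevant $k,k^\prime$; (iii) sum over $k\in\{0,\dots,K_N\}$ and $k^\prime\in\{0,\dots,\kappa N\}$, losing only a factor $(K_N+1)(\kappa N+1) = \mathcal{O}(N^2)$, which is crushed by $e^{-cN}$, giving $o(N^{-\infty})$. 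I expect the main obstacle to be step (ii): making the Stirling/saddle-point estimate on the Wigner $d$-matrix (equivalently, on the Jacobi-polynomial representation of the overlap) uniform in $k^\prime$ up to $\kappa N$, and pinning down that the exponential rate stays strictly negative precisely under~\eqref{ass:kappa} — everything else is bookkeeping. If Lemma~\ref{lem:overlap} is already stated in a form strong enough to yield $|{\cdot}|^2 \leq \bigl(1 - k^\prime/(2J) - \text{(something)} \bigr)$-type decay, then step (ii) reduces to plugging in $k^\prime \leq \kappa N < J \sin^2(\beta_{ll^\prime}/2)$ and this obstacle evaporates.
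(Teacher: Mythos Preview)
Your outline is correct and lands on essentially the same approach as the paper: both recognize that the naive bound~\eqref{eq:bounded} is too crude once $k^\prime$ is of order $N$, and both fall back on the sharper estimate~\eqref{eq:binomest} from Lemma~\ref{lem:overlap}. The paper's execution of your step~(ii) is worth noting, since it cleanly dissolves the obstacle you flag. Rather than a pointwise Stirling/saddle-point bound on each Wigner $d$-matrix element, the paper observes that after squaring~\eqref{eq:binomest} and separating the $k$-dependent factors (which contribute only $\exp(\mathcal{O}(K_N\ln N))=\exp(o(N))$), the remaining $k^\prime$-sum is exactly a truncated binomial sum,
\[
\left(\cos\tfrac{\theta}{2}\right)^{4J}\sum_{k^\prime=0}^{\kappa N}\binom{2J}{k^\prime}\left(\tan\tfrac{\theta}{2}\right)^{2k^\prime},
\]
which is the lower tail of a $\mathrm{Bin}(2J,\sin^2\tfrac{\theta}{2})$ distribution truncated at $\kappa N<2J\sin^2\tfrac{\theta}{2}$. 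A one-line Chernoff bound then gives $e^{-cN}$. Your Stirling route would give the same rate (the exponent is $-N\,D_{\mathrm{KL}}(\kappa\,\|\,\sin^2\tfrac{\theta}{2})$), so the two are interchangeable; the Chernoff version just avoids the bookkeeping of uniform pointwise estimates. The paper also splits off the range $k^\prime\leq K_N$ and handles it with~\eqref{eq:bounded} directly, but this is cosmetic --- your uniform bound would cover it as well.
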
 
\begin{proof}
We split the $ k^\prime $-sum in two terms. The first sum to $ K_N $ is estimates with the help of~\eqref{eq:bounded} from which the claim follows by the same lines of reasoning as above. For the second sum, we have $ k \leq k^\prime $. 
Abbreviating $ \theta :=  \sphericalangle( \mathbf{m}_l, \mathbf{m}_{l^\prime}) $, we estimate this part with the help of~\eqref{eq:binomest} by
$$
\sum_{k=0}^{K_N}\binom{2J}{k} \left( \kappa N \right)^{2k} \left( \frac{2 + 2 (\sin\frac{\theta}{2})^2}{\sin\theta} \right)^{2k} \left(\cos\frac{\theta}{2}\right)^{4J} \sum_{k^\prime=0}^{\kappa N} \binom{2J}{k^\prime} \left( \tan\frac{\theta}{2}\right)^{2k^\prime} .
$$
The truncated binomial is bounded by a standard Chernoff bound for any $ t > 0 $
\begin{equation}
\left(\cos\frac{\theta}{2}\right)^{4J}  \sum_{k^\prime=0}^{\kappa N} \binom{2J}{k^\prime} \left( \tan\frac{\theta}{2}\right)^{2k^\prime}  \leq  e^{t\kappa N} \left( 1- (1- e^{-t} ) \  \left( \sin\frac{\theta}{2}\right)^{2}\right)^{N} . 
\end{equation} 
Choosing $ t(\theta) = \ln \frac{1 - \kappa}{\kappa} \left(\tan\frac{\theta}{2}\right)^2 > 0 $, the right side is of the form $ \exp\left( -N \alpha(\theta) \right) $ with $ \alpha(\theta) =  - \kappa t(\theta) - \ln\left[ \cos^2\left(\frac{\theta}{2}\right)/(1-\kappa) \right]< 0 $.
Since the remaining summation is estimated trivially by the number of terms $ K_N $ times the maximum of each term, which occurs at $ k = K_N $ with the binomial also trivially bounded, $ \binom{2J}{k}  \leq N^k \leq N^{K_N} $, the result follows. 
\end{proof}
\subsubsection{Truncation}

By assumption, the minima of $ h $  have the property $h(\mathbf{m}) \geq  h( \mathbf{m}_1 ) = \dots = h( \mathbf{m}_L ) $ for all $\mathbf{m} \in B_1 $.  A substitute for~\eqref{eq:lowerh1min} is provided by a bound of the form
\begin{equation}\label{eq:lowerhLmin}
 h(\mathbf{m}) \geq  h( \mathbf{m}_1 ) + f_0 +  \sum_{l=1}^L f\left(\mathbf{m}_l\cdot \mathbf{m}\right) 
\end{equation}
with $ f_0 > 0 $ and a monotone decreasing $ C^2$-function $ f: [0,1] \to [0,\infty) $ of the form
$$
f(x) =\begin{cases} 0 & \mbox{if} \quad 0\leq x \leq \xi,  \\
	c(1-x) - f_0 &  \mbox{if} \; \frac{1+\xi}{2}  \leq x \leq 1 ,
\end{cases} 
$$
with some  $ c > 0 $. The parameter $ \xi \in (0,1) $ is chosen close enough to one such that  the supports corresponding to distinct $ l \neq l^\prime $ do not overlap, i.e., 
$
f(\mathbf{m}_l\cdot \mathbf{m})  f(\mathbf{m}_{l^\prime}\cdot \mathbf{m}) = 0 $. We will choose
\begin{equation}\label{eq:defxi}
0 < 1- \xi < \min_{l \neq l^\prime} \left( \sin\frac{ \sphericalangle( \mathbf{m}_l, \mathbf{m}_{l^\prime}) }{2}\right)^2 .
\end{equation}

The same strategy as in the proof of Lemma~\ref{lem:truncation} immediately yields that for some constant $ C_L \in (0,\infty) $ and all $ (J,\alpha) $:
\begin{equation}\label{eq:oldtrunc}
H_{J,\alpha} \geq  N h(\mathbf{m}_1) - C_L + N f_0 +  \sum_{l=1}^L N  f\left(\frac{2}{N} \mathbf{m}_l\cdot \mathbf{S}\right)   .
\end{equation}
This would enable us to discard all blocks $ J < N/2 - K_N $ as far the low-energy spectrum is concerned. For the other blocks, we however need a slightly more refined lower bound which distinguishes-patches corresponding to the subspace decomposition in the previous subsection.

\begin{lemma}\label{lem:truncation2}
In the situation of Theorem~\ref{thm:mmin}, there are constants $ C , c \in (0,\infty) $ such that for all $ (J,\alpha) $ with $ J \geq N/2- K_N $ and $  K_N = o(N/ln N) $:
\begin{equation}\label{eq:truncation3}
H_{J,\alpha} \geq  N h(\mathbf{m}_1) - C + c K_N  Q_J^{K_N} + \sum_{l=1}^L c \left( N - 2 \mathbf{m}_l\cdot \mathbf{S} \right) P_J^{K_N}(\mathbf{m}_l)  ,
\end{equation}
where $Q_J^{K_N} := \mathbbm{1}_{\mathbb{C}^{2J+1} } -  P_J^{K_N} $ is the orthogonal projection to the complement of $ \mathcal{H}_J^{K_N} $.
\end{lemma}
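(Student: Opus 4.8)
The plan is to refine the operator bound~\eqref{eq:oldtrunc} recorded above, sharpening only the contribution $Nf_0\,\mathbbm{1}+\sum_{l}Nf\!\big(\tfrac2N\mathbf{m}_l\cdot\mathbf{S}\big)$. Put $\tilde f:=f_0+f$, so that $\tilde f\ge 0$, $\tilde f$ decreases monotonically from $f_0$ to $0$, $\tilde f(x)=f_0$ for $x\le\xi$, $\tilde f(x)=c(1-x)$ for $x\ge\tfrac{1+\xi}2$, and $\tilde f(x)\ge\delta_0:=c\tfrac{1-\xi}2>0$ for all $x>\xi$. Let $\Pi_l$ denote the spectral projection of $\mathbf{m}_l\cdot\mathbf{S}$ onto the eigenvalues $\lambda>\tfrac{\xi N}2$, i.e.\ onto $\{\tfrac2N\mathbf{m}_l\cdot\mathbf{S}>\xi\}$. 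Using $f=\tilde f-f_0$ together with $\tilde f\big(\tfrac2N\mathbf{m}_l\cdot\mathbf{S}\big)(\mathbbm{1}-\Pi_l)=f_0(\mathbbm{1}-\Pi_l)$, a short rearrangement gives the identity
$$
Nf_0\,\mathbbm{1}+\sum_{l}Nf\!\big(\tfrac2N\mathbf{m}_l\cdot\mathbf{S}\big)=\sum_{l}N\tilde f\!\big(\tfrac2N\mathbf{m}_l\cdot\mathbf{S}\big)\,\Pi_l+Nf_0\Big(\mathbbm{1}-\sum_{l}\Pi_l\Big).
$$

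For the first sum I would argue shell by shell. Since $J\ge N/2-K_N$ and $K_N=o(N)$, for large $N$ every eigenvalue $J-k$ of $\mathbf{m}_l\cdot\mathbf{S}$ with $0\le k\le K_N$ satisfies $\tfrac2N(J-k)\ge 1-\tfrac{4K_N}N\ge\tfrac{1+\xi}2$; hence $P_J^{K_N}(\mathbf{m}_l)\le\Pi_l$ and $N\tilde f\big(\tfrac2N\mathbf{m}_l\cdot\mathbf{S}\big)$ acts on $\ran P_J^{K_N}(\mathbf{m}_l)$ as $c(N-2\mathbf{m}_l\cdot\mathbf{S})$, while on $\ran\big(\Pi_l-P_J^{K_N}(\mathbf{m}_l)\big)$ one checks directly that $N\tilde f\big(\tfrac2N\mathbf{m}_l\cdot\mathbf{S}\big)\ge cK_N$ (there either $c(N-2\mathbf{m}_l\cdot\mathbf{S})\ge c(N-2(J-K_N-1))\ge 2cK_N$, or $N\tilde f\ge N\delta_0\ge cK_N$ for $N$ large). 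As $\tilde f\big(\tfrac2N\mathbf{m}_l\cdot\mathbf{S}\big)$ commutes with $\Pi_l$ and $P_J^{K_N}(\mathbf{m}_l)$, checking eigenvalues yields the matrix inequality $N\tilde f\big(\tfrac2N\mathbf{m}_l\cdot\mathbf{S}\big)\Pi_l\ge c(N-2\mathbf{m}_l\cdot\mathbf{S})P_J^{K_N}(\mathbf{m}_l)+cK_N\big(\Pi_l-P_J^{K_N}(\mathbf{m}_l)\big)$. Summing over $l$ and feeding this back into the displayed identity, the $\Pi_l$-dependence collects into $\big(Nf_0-cK_N\big)\big(\mathbbm{1}-\sum_l\Pi_l\big)+cK_N\,\mathbbm{1}$, and one arrives at
$$
Nf_0\,\mathbbm{1}+\sum_{l}Nf\!\big(\tfrac2N\mathbf{m}_l\cdot\mathbf{S}\big)\ \ge\ \sum_l c(N-2\mathbf{m}_l\cdot\mathbf{S})P_J^{K_N}(\mathbf{m}_l)+cK_N\Big(\mathbbm{1}-\sum_l P_J^{K_N}(\mathbf{m}_l)\Big)+\big(Nf_0-cK_N\big)\Big(\mathbbm{1}-\sum_l\Pi_l\Big).
$$
By Theorem~\ref{cor:approxP}, $\sum_l P_J^{K_N}(\mathbf{m}_l)=\mathbbm{1}-Q_J^{K_N}$ up to an $o(N^{-\infty})$ error, which stays negligible after multiplication by $K_N\le o(N/\ln N)$, so the middle term equals $cK_N\,Q_J^{K_N}-o(1)$. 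Combined with~\eqref{eq:oldtrunc} this is precisely~\eqref{eq:truncation3}, with $C$ absorbing $C_L$ and the finitely many $O(1)$ remainders — \emph{provided} the last term $\big(Nf_0-cK_N\big)\big(\mathbbm{1}-\sum_l\Pi_l\big)$ is bounded below by a constant.

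The crux, and the main obstacle, is therefore the estimate $\big\|\sum_l\Pi_l\big\|\le 1+o(N^{-\infty})$: the macroscopically large spectral subspaces $\ran\Pi_l$ attached to distinct minima must be essentially orthogonal (classically the supports of the $f(\mathbf{m}_l\cdot\,\cdot)$ are disjoint, but the operators $\mathbf{m}_l\cdot\mathbf{S}$ do not commute). The $\ran\Pi_l$ are the coherent-state lifts of spherical caps of angular radius $\arccos\xi$ around the $\mathbf{m}_l$; once $\xi$ is taken close enough to $1$ — which is the role of~\eqref{eq:defxi} — these caps are pairwise disjoint, and then $\|\Pi_l\Pi_{l'}\|$ is exponentially small in $N$ by the coherent-state overlap decay derived in Appendix~\ref{app:cs1} (the same coarse mechanism used elsewhere in this section in lieu of sharp Agmon tunnelling estimates), whence $\big\|\sum_l\Pi_l\big\|\le 1+\sum_{l\ne l'}\|\Pi_l\Pi_{l'}\|=1+o(N^{-\infty})$ and $\big(Nf_0-cK_N\big)\big(\mathbbm{1}-\sum_l\Pi_l\big)\ge -o(N^{-\infty})\,\mathbbm{1}$. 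With this estimate in hand, everything else — the two scalar inequalities for $N\tilde f$, the $o(N^{-\infty})$-bookkeeping, and the passage from the $P_J^{K_N}(\mathbf{m}_l)$ to $Q_J^{K_N}$ — is routine.
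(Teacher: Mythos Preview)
Your reduction to the inequality $\big\|\sum_l\Pi_l\big\|\le 1+o(N^{-\infty})$, equivalently $\|\Pi_l\Pi_{l'}\|=o(N^{-\infty})$ for $l\ne l'$, is clean, and everything else you wrote is correct. But this last estimate is a genuine gap: it is \emph{not} what Appendix~\ref{app:cs1} provides. The bound~\eqref{eq:binomest} in Lemma~\ref{lem:overlap} carries a factor $(k')^k$ which is super-exponentially large once both $k$ and $k'$ are of order $N$, so it is useless here; Proposition~\ref{lem:chernoff} is a one-sided upper-tail bound and does not control the lower tail you would need. The paper's own refinement, Lemma~\ref{lem:onefar}, is crucially \emph{asymmetric}: one index runs only up to $K_N=o(N/\ln N)$ while the other may go to $\kappa N$. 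Your $\Pi_l$ have rank $\sim(1-\xi)N/2$, so you need \emph{both} indices macroscopic, and none of the available tools cover that. Note also that condition~\eqref{eq:defxi}, namely $1-\xi<\sin^2(\alpha/2)$ with $\alpha=\sphericalangle(\mathbf{m}_l,\mathbf{m}_{l'})$, only gives $\xi>\cos^2(\alpha/2)$; it does \emph{not} imply $\xi>\cos(\alpha/2)$, which is what you would need for the spherical caps $\{\mathbf{m}_l\cdot\mathbf{e}>\xi\}$ to be geometrically disjoint. So even the heuristic ``disjoint caps $\Rightarrow$ exponentially small overlap'' is not immediately available under the paper's standing hypothesis.

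The paper avoids this difficulty altogether by a different decomposition. Instead of your macroscopic projections $\Pi_l$, it splits $\mathbbm{1}=P_J^{K_N}+Q_J^{K_N}$ directly and treats the two blocks of $\hat f_l:=Nf(\tfrac2N\mathbf{m}_l\cdot\mathbf{S})$ separately. On the $P$-block one has $\|\hat f_l P-\hat f_l P(\mathbf{m}_l)\|=o(N^{-\infty})$: the cross terms $\hat f_l P(\mathbf{m}_{l'})$ with $l'\ne l$ are controlled by Lemma~\ref{lem:onefar}, whose asymmetric hypothesis ($k'\le K_N$, $k\le\tfrac{1-\xi}2 N$) is exactly matched here, and \eqref{eq:defxi} is precisely the condition $\tfrac{1-\xi}2<\tfrac12\sin^2(\alpha/2)$ needed for its applicability. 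On the $Q$-block the paper reverts to the coherent-state representation and splits the sphere into tiny caps $C_N(\mathbf{m}_l)$ of angular radius $\sim\sqrt{K_N/N}$ and their complement: on the complement $h$ itself gives the gain $cK_N$, and on the caps one shows $\|Q|\Omega,J\rangle\|=o(K_N^{-\infty})$ via the upper-tail Chernoff bound of Proposition~\ref{lem:chernoff}. The point is that the paper never needs near-orthogonality of two \emph{macroscopic} spectral subspaces --- only of one macroscopic against one of rank $K_N+1$ --- which is exactly what the toolkit delivers.
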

\begin{proof}
We start from~\eqref{eq:oldtrunc}. In order to ease the notation,  in this proof we abbreviate $   \hat f_l := N f\left(\frac{2}{N} \mathbf{m}_l\cdot \mathbf{S}\right) $ and  we will drop the super-/subscripts on the projection, e.g. $ P := P_J^{K_N} $. We will estimate the block of this operator in the decomposition $ P+Q = \mathbbm{1} $ separately. For the blocks involving $ P $, we use
\begin{equation}\label{eq:approx1}
\left\|  \hat f_l P -  \hat f_l P( \mathbf{m}_l)  \right\|  \leq \| \hat f_l  \| \left\| P- \sum_{l=0}^LP( \mathbf{m}_l)  \right\| 
+   \sum_{l^\prime \neq l } \left\|  \hat f_l P( \mathbf{m}_{l^\prime})  \right\|
\end{equation}

The operator $   \hat f_l $ is diagonal in the eigenbasis of $  \mathbf{m}_l\cdot \mathbf{S}\ $, i.e.
\begin{equation}\label{eq:spectralf}
 \hat f_l  = \sum_{k=0}^{\lfloor \frac{N}{2}(1-\xi) \rfloor}  N f\left(\frac{2}{N} (J-k) \right) \ \big| J-k; \mathbf{m}_l\rangle\langle J-k; \mathbf{m}_l \big| ,
\end{equation}
where the truncation of the $ k $-sum results from the bounds on the support of $ f $. Evidently $ \| \hat f_l  \| \leq N f_0 $. Therefore, for any $ l^\prime \neq l $:
$$
 \left\|  \hat f_l P( \mathbf{m}_{l^\prime})  \right\| \leq N f_0 \left( \sum_{k=0}^{\lfloor \frac{N}{2}(1-\xi) \rfloor} \sum_{k^\prime=0}^{K_N} \left| \langle J-k; \mathbf{m}_l \big| J-k^\prime; \mathbf{m}_{l^\prime} \rangle \right|^2 \right)^{1/2}  =o(N^{-\infty}) 
$$
where the last step is Lemma~\ref{lem:onefar}. Its applicability is ensured by the choice~\eqref{eq:defxi} of $ \xi $. From~\eqref{eq:approx1} and Theorem~\ref{cor:approxP}, we this conclude $ \left\|  \hat f_l P -  \hat f_l P( \mathbf{m}_l)  \right\| = o(N^{-\infty}) $. 
By the spectral representation~\eqref{eq:spectralf}  we also have for all sufficiently large $ N $:
$$
 \hat f_l P( \mathbf{m}_l) =  \left[ c N \left( 1 - \frac{2}{N} \mathbf{m}_l\cdot \mathbf{S} \right) - N f_0\right] P( \mathbf{m}_l) .
$$
Upon summation over $ l \in \{ 1, \dots , L \} $ and adding $ Nf_0 P $, this term produces the last term in the right side of~\eqref{eq:truncation3} up to another norm-error of order $ o(N^{-\infty}) $ due to~\eqref{eq:almostdecompunity}. These error terms are absorbed in the constant in~\eqref{eq:truncation3}.

It thus remains to investigate the block $ Q  \hat f_l  Q + N( f_0+ h(\mathbf{m}_1)) Q $.  To do so, it is most convenient to switch back to the representation using coherent states. Since $ f \in C^2 $ this may be done at the expense of another constant thanks to~\eqref{prop:Druff2}.
We then lower bound
$$
Q \left[\frac{2J+1}{4\pi}\int N h\left(\frac{2}{N} \mathbf{e}(\Omega) \right) \ | \Omega ,J \rangle \langle \Omega , J | \ d\Omega \right]  Q \geq \frac{2J+1}{4\pi} \int_{C_N^c }  N h\left(\frac{2}{N} \mathbf{e}(\Omega) \right) Q | \Omega ,J \rangle \langle \Omega , J | Q \ d\Omega ,
$$
where $ C_N^c = S^2 \backslash \bigcup_{l=1}^L C_N(\mathbf{m}_l)   $ is the complement of the union of the spherical caps 
$$ C_N(\mathbf{m}_l) := \left\{ \Omega \ | \   \mathbf{e}(\Omega) \cdot  \mathbf{m}_l \geq 1 - \frac{K_N}{2N} \right\} . $$
These are chosen such that on $ C_N^c $ we have the lower bound $ N  h(\mathbf{m}) \geq N h(\mathbf{m}_1) + c K_N $. Thanks to the decomposition of unity~\eqref{eq:completeness}, to complete the proof, it  remains to establish an upper bound for all $ l $ on:
\begin{align*}
\left\| \frac{2J+1}{4\pi} \int_{ C_N(\mathbf{m}_l)   } \mkern-20mu Q | \Omega ,J \rangle \langle \Omega , J | Q \ d\Omega \right\| & \leq  \frac{N+1}{4\pi } \int_{ C_N(\mathbf{m}_l)   } \mkern-20mu \left\| Q | \Omega ,J \rangle \right\|^2 d \Omega \\
& 
\end{align*}
The spherical volume of $ C_N(\mathbf{m}_l) $ is $ \pi K_N /N $. To estimate the norm in the integrand, we  fix $ \Omega \in C_N(\mathbf{m}_l) $ and employ the approximate decomposition of unity as expressed in~\eqref{eq:almostdecompunity}:
$$
\left\| Q | \Omega ,J \rangle \right\| \leq \left\| (\mathbbm{1} - P( \mathbf{m}_l) ) | \Omega ,J \rangle \right\| + \sum_{l^\prime \neq l} \left\|  P( \mathbf{m}_{l^\prime})  | \Omega ,J \rangle \right\| + o(N^{-\infty}) . 
$$
By a unitary rotation, we may assume without loss of generality that $  \mathbf{m}_l = (0,0,1)^T $. In this case, an estimate on the first term is contained in Proposition~\ref{lem:chernoff} in the appendix. For its application we note that $ 2J \sin^2(\frac{\theta}{2}) = J ( 1-\cos \theta) \leq K_N /4 $. Choosing $ \delta = K_N / [ 8J \sin^2(\frac{\theta}{2})] \geq 1 $, we hence conclude:
$$
\left\| (\mathbbm{1} - P( \mathbf{m}_l) ) | \Omega ,J \rangle \right\|^2  \leq \sum_{k \geq K_N/2 } \left| \langle J-k | \Omega, J\rangle \right|^2 \leq \exp\left( - \frac{K_N}{12} \right) = o(K_N^{-\infty}) . 
$$
As a consequence of this, we also have for all $ l^\prime \neq l $
$$
\left\|  P( \mathbf{m}_{l^\prime})  | \Omega ,J \rangle \right\|  \leq \left\|  (\mathbbm{1} - P( \mathbf{m}_l) ) | \Omega ,J \rangle \right\| + \| P( \mathbf{m}_{l^\prime}) P( \mathbf{m}_l) \| \leq   o(K_N^{-\infty})  +  o(N^{-\infty}) ,
$$
where the last estimate is due to Theorem~\ref{cor:approxP}. This completes the proof. 
\end{proof}

\subsubsection{Finishing the proof} 

Once the above semiclassical reduction to the relevant subspaces in the vicinity of the minima  is accomplished, the proof largely follows the strategy of the case of one minimum. In the following. we will therefore only highlight the differences. \\

We start by setting some notation.  By assumption the projection $  D_h^{\perp}(\mathbf{m}_l) $ of the Hessian of $ h $ at each minimum onto the plane perpendicular to $ \mathbf{m}_l $ has two strictly positive eigenvalues, 
$$
0 < \omega_{y,l}^\perp =  \omega_y + |\nabla h( \mathbf{m}_l)|\leq  \omega_{x,l}^\perp = \omega_x + |\nabla h( \mathbf{m}_l)| , \qquad \mbox{and we set}\quad \omega_l^2 := \frac{\omega_{x,l}^\perp}{\omega_{y,l}^\perp} \geq 1 . 
$$
Throughout the  proof we again use $ K_N = \overline{K}_N = o(N^{1/3}) $ diverging as $ N \to \infty $. 

By assumption~\eqref{ass:quadr} and Lemma~\ref{lem:altQ}, on the increasing subspaces $ \mathcal{H}^{K_N}_{J}( \mathbf{m}_l)  $, we approximate $ H_{J,\alpha} $ in terms of
the quadratic term $ \widehat Q(\mathbf{m}_l) $, which involves $ 2 \left( \omega_{x,l} S_x^2 + \omega_{y,l}  S_y^2 \right)/N $.
Proceeding as in Lemma~\ref{lem:approxH}, we therefore arrive at
\begin{equation}\label{eq:approxH1}
	\left\| \left( H_{J,\alpha} -H_J^{(N)}(l)  \right) P_J^{K_N^-}(\mathbf{m}_l) \right\| = o(1) 
\end{equation}
where
$$
H_{J,l}^{(N)} \coloneqq N h( \mathbf{m}_l)  + \kappa( \mathbf{m}_l) +  | \nabla h( \mathbf{m}_l)|  (N-2J-1) +  \overline I_{J,l}^{(N)}  \omega_{y,l}^\perp  D(l) I_{J,l}^{(N)}     \quad \mbox{on $  \mathcal{H}^{K_N}_{J}( \mathbf{m}_l)  $.}
$$
The operator $ D(l) := \omega_l^2 L_x^2 + L_y^2 $ acts in $ \ell^2(\mathbb{N}_0) $ and 
$$
I_{J,l}^{(N)} : \mathcal{H}_J^{K_N}(\mathbf{m}_l)  \to \ell^2(\mathbb{N}_0) , \qquad \overline I_{J.l}^{(N)} : \ell^2(\mathbb{N}_0) \to  \mathcal{H}_J^{K_N}(\mathbf{m}_l) 
$$
is the natural injection respectively projection with respect to the $z $-basis in the $ l $-direction, i.e.  $  I_{J,l}^{(N)} | J-k;\mathbf{m}_l \rangle = | k \rangle $ for all $ k \in \{0, \dots , K_N\} $. Theorem~\ref{cor:approxP} ensures the proximity of these $z $-basis vectors to the orthonormalized basis $ | (k,l) \rangle $, which compose an orthonormal basis for the joint subspace $ \mathcal{H}_J^{K_N} = \bigvee_{l=1}^L  \mathcal{H}_J^{K_N}(\mathbf{m}_l) $. We therefore replace the above isometric embeddings and projections by
\begin{align*} I_J^{(N)}: \;  \mathcal{H}_J^{K_N}  \to \bigoplus_{l=1}^L \ell^2(\mathbb{N}_0) , \quad  \mbox{and}\quad  \overline I_J^{(N)} : \bigoplus_{l=1}^L \ell^2(\mathbb{N}_0) \to   \mathcal{H}_J^{K_N}  ,
\end{align*}
where $ I_J^{(N)} | (k,l) \rangle = | k \rangle_l $. 
These embeddings are direct sums,  $  I_J^{(N)} = \bigoplus_{l=1}^L I_J^{(N)}(l) $, of embeddings of $\mathcal{H}_{J}^{K_N}(l)  \coloneqq P_J^{K_N}(l) \mathbb{C}^{2J+1} $  into the $l $th copy of $ \ell^2(\mathbb{N}_0) $,  cf.~\eqref{eq:defstraightproj}.
Theorem~\ref{cor:approxP} then allows us to replace $ H_{J,l}^{(N)}  $ on $  \mathcal{H}^{K_N}_{J}( \mathbf{m}_l)  $ by
$$
   H_J^{(N)}(l) \coloneqq N h( \mathbf{m}_l)  + \kappa( \mathbf{m}_l) +  | \nabla h( \mathbf{m}_0)|  (N-2J-1) +  \overline I_{J}^{(N)}(l) \omega_{y,l}^\perp  D(l) I_{J}^{(N)}(l)    \quad \mbox{on $  \mathcal{H}^{K_N}_{J}(l)  $.}
$$ 
These operators can be lifted to the direct sum
$$
H^{(N)}_J \coloneqq \bigoplus_{l=1}^L    H_J^{(N)}(l) \qquad \mbox{on}\quad \mathcal{H}_{J}^{K_N} =  \bigoplus_{l=1}^L \mathcal{H}_{J}^{K_N}(l)   .
$$
The above argument, then yields to following modification of Lemma~\ref{lem:approxH}. 
\begin{lemma}\label{lem:approxH2}
In the situation of Theorem~\ref{thm:mmin} 
if  $  K_N = o(N^{1/3}) $:
\begin{equation}
\max_{ J \geq N/2 - K_N } \max_{\alpha} \left\| \left( H_{J,\alpha} - H_J^{(N)}\right) P_{J}^{K_N^-}  \right\| = o( 1) ,
\end{equation}
where $ \displaystyle P_{J}^{K_N^-} \coloneqq \sum_{l=1}^L \sum_{k=0}^{K_N^-} | (k,l)\rangle\langle (k,l) | $ with $ K_N^- = K_N -2 $.
\end{lemma}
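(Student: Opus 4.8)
The plan is to localize the estimate onto the individual patches $\mathcal{H}^{K_N}_{J}(\mathbf{m}_l)$, apply there the single-minimum reduction already carried out for Lemma~\ref{lem:approxH}, and then reassemble, using the almost-orthogonality of the patches quantified in Theorem~\ref{cor:approxP}. In fact most of the work has been done in the discussion preceding the statement; what remains is to see how it fits together.

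First I would exploit that, by construction~\eqref{def:ONB}, the vectors $|(k,l)\rangle$ with $l\in\{1,\dots,L\}$ and $0\le k\le K_N^-$ are genuinely orthonormal, so that $P_J^{K_N^-}=\sum_{l=1}^L P_J^{K_N^-}(l)$ is an orthogonal projection, the decomposition $\mathcal{H}_J^{K_N}=\bigoplus_{l=1}^L\mathcal{H}_J^{K_N}(l)$ is a true orthogonal direct sum, and $H_J^{(N)}=\bigoplus_{l=1}^L H_J^{(N)}(l)$ is block-diagonal with respect to it. Hence
\begin{equation*}
\left(H_{J,\alpha}-H_J^{(N)}\right)P_J^{K_N^-}=\sum_{l=1}^L\left(H_{J,\alpha}-H_J^{(N)}(l)\right)P_J^{K_N^-}(l),
\end{equation*}
and since $L$ is fixed it suffices, by the triangle inequality, to bound each summand in operator norm by $o(1)$, uniformly in $l$, in $\alpha$, and in $J\ge N/2-K_N$.

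Next, for a fixed $l$ I would pass from the orthonormalized patch back to the coherent-state patch $\mathcal{H}_J^{K_N}(\mathbf{m}_l)$. By Theorem~\ref{cor:approxP} the basis vectors $|(k,l)\rangle$ and $|J-k;\mathbf{m}_l\rangle$, and hence the projections $P_J^{K_N^-}(l)$ and $P_J^{K_N^-}(\mathbf{m}_l)$ as well as $P_J^{K_N}(l)$ and $P_J^{K_N}(\mathbf{m}_l)$, differ by $o(N^{-\infty})$ in operator norm. Since $H_{J,l}^{(N)}$ has norm $\mathcal{O}(N)$ on the range of $P_J^{K_N}(\mathbf{m}_l)$ — the scalar term $Nh(\mathbf{m}_l)\mathbbm{1}$ and the linear term are $\mathcal{O}(N)$, while the fluctuation piece $\omega_{y,l}^\perp D(l)$ restricted to this subspace is only $\mathcal{O}(K_N)$ — replacing $|(k,l)\rangle$ by $|J-k;\mathbf{m}_l\rangle$ throughout costs only $\mathcal{O}(N)\cdot o(N^{-\infty})=o(N^{-\infty})$, so that
\begin{equation*}
\left\|\left(H_{J,\alpha}-H_J^{(N)}(l)\right)P_J^{K_N^-}(l)\right\|=\left\|\left(H_{J,\alpha}-H_{J,l}^{(N)}\right)P_J^{K_N^-}(\mathbf{m}_l)\right\|+o(1).
\end{equation*}

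Finally I would invoke the single-minimum estimate~\eqref{eq:approxH1}, which is obtained by repeating the proof of Lemma~\ref{lem:approxH} at the minimum $\mathbf{m}_l$: assumption~\eqref{ass:quadr} together with Lemma~\ref{lem:altQ} replaces $H_{J,\alpha}$ on $\mathcal{H}_J^{K_N}(\mathbf{m}_l)$ by $\kappa(\mathbf{m}_l)\mathbbm{1}+\widehat Q(\mathbf{m}_l)$ up to $o(1)$; the gradient term is simplified via $N-2\,\mathbf{m}_l\cdot\mathbf{S}=\tfrac{N}{2}-\tfrac{2}{N}(\mathbf{S}^2-S_x^2-S_y^2)+\tfrac{(N-2\,\mathbf{m}_l\cdot\mathbf{S})^2}{2N}$, whose last term is $\mathcal{O}(K_N^2/N)=o(1)$ on the relevant subspace; and Lemma~\ref{lem:src}, applied with the common scaling $J_N=N/2$ — legitimate because $\mathbf{m}_l\in S^2$ for every $l$ — passes from the fluctuation operator $D_N$ to the limit operator $D(l)=\omega_l^2 L_x^2+L_y^2$. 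Combining with the previous paragraph and summing over the finitely many $l$ finishes the argument. The only point that goes beyond the single-minimum case is the orthonormalization step, i.e.\ checking that the overlap errors of Theorem~\ref{cor:approxP} survive multiplication by operators of norm $\mathcal{O}(N)$; this is harmless precisely because those errors are super-polynomially small, which in turn rests on the exponential smallness of the coherent-state inner products (Lemma~\ref{lem:spectrumG}) under the standing choice $K_N=o(N^{1/3})$.
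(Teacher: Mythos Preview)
Your proposal is correct and follows exactly the route the paper indicates: the paper's own proof is the single sentence ``straightforward from~\eqref{eq:approxH1} and Theorem~\ref{cor:approxP}'', and your three steps --- decompose $P_J^{K_N^-}$ into the orthogonal blocks $P_J^{K_N^-}(l)$, use Theorem~\ref{cor:approxP} to trade the orthonormalized patch for the coherent-state patch at cost $\mathcal{O}(N)\cdot o(N^{-\infty})$, and then invoke~\eqref{eq:approxH1} --- are precisely what that sentence unpacks to. Your observation that the super-polynomial decay of the overlap errors is what makes the orthonormalization step harmless against $\mathcal{O}(N)$-norm operators is the only nontrivial point, and you have identified it correctly.
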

The proof is straightforward from~\eqref{eq:approxH1} and Theorem~\ref{cor:approxP}. Equipped with this, we then proceed with the proof of Theorem~\ref{thm:mmin} in the same way as for the case of one minimum. 

\begin{proof}[Proof of Theorem~\ref{thm:mmin}]

In case $ J \leq N/2 - K_N $ we use~\eqref{eq:oldtrunc} to conclude that the ground-state  of $ H_{J.\alpha} $ is found above $ Nh(\mathbf{m}_1) - C + c K_N $ and hence does not contribute at the energies considered.\\

In case $ J > N/2 - K_N $ we consider the projections,
$$
\widetilde H^{(N)}_J \coloneqq P_{J}^{K_N^-} H^{(N)}_J P_{J}^{K_N^-} \qquad \mbox{on $ P_{J}^{K_N^-} \mathcal{H}_{J}^{K_N} $.}
$$
Note that this matrix is still a direct sum of matrices associated with the subspaces corresponding to $ P_{J}^{K_N^-}(l)  = \sum_{k=0}^{K_N^-} | (k,l)\rangle\langle (k,l) | $. 
The matrices forming the direct sum are 
unitarily equivalent to 
$$
\left( N h( \mathbf{m}_l)  + \kappa( \mathbf{m}_l) +  | \nabla h( \mathbf{m}_0)|  (N-2J-1) \right)  P_{K_N^-} +  \omega_{y,l}^\perp \ P_{K_N^-} D(l)   P_{K_N^-} 
$$
on $  P_{K_N^-} \ell^2(\mathbb{N}_0) = \spa\left\{ | k\rangle | k \in \{0, \dots , K_N^-\} \right\} $. In turn, these operators have been described in detail in Section~\ref{sec:limop}. 

We now
fix $ K \in \mathbb{N} $ arbitrary, and let $ E_{K}^{(N)} $ stand for the orthogonal projection onto the subspace of $  P_{J}^{K_N^-} \mathcal{H}_{J}^{K_N}  $  spanned by eigenvectors  of the $ K $  lowest eigenvalues of $ \widetilde H_J^{(N)} $.  Its orthogonal complement in $ \mathbbm{C}^{2J+1} $ will be denoted by $ F_{K}^{(N)}  = \mathbbm{1}_{\mathbb{C}^{2J+1}} - E_{K}^{(N)}  $. 

Lemma~\ref{lem:approxH2} ensures the validity of the estimates~\eqref{eq:mainblock}--\eqref{eq:FEblockweg} (with minor modifications in the notation). It thus remains to again control the block $ F_{K}^{(N)}   H_{J,\alpha} F_{K}^{(N)} $.  To do so, we modify the argument in~\eqref{eq:QHQ}. With the help of Lemma~\ref{lem:truncation2}, we arrive at:
\begin{equation}\label{eq:QHQ2}
F_{K}^{(N)}   H_{J,\alpha} F_{K}^{(N)}  \geq  \left( N h(\mathbf{m}_1) - C \right) F_{K}^{(N)} + c \min\{ 2 M , K_N \}  F_{K}^{(N)} - 2 c M   F_{K}^{(N)}  \sum_{l=1}^L  \left\| F_{K}^{(N)}  P_J^{M}(\mathbf{m}_l)   F_{K}^{(N)}  \right\| ,
\end{equation}
where $ M \in \mathbb{N} $ is arbitrary. Using Theorem~\ref{cor:approxP} we can replace the projection $  P_J^{M}(\mathbf{m}_l)  $ by $  P_J^{M}(l) $ at the expense of a term which is $ o(N^{-\infty}) $. The latter can be added to the order one term proportional to $ C $. Proceeding as in~\eqref{eq:turnaround}, it thus remains to estimate
$$
\sum_{l=1}^L \left\|  P_J^{M}(l)   F_{K}^{(N)}  P_J^{M}(l)   \right\| \leq \sum_{l=1}^L \sum_{m=0}^M \langle (m,l) | F_{K}^{(N)} | (m,l) \rangle . 
$$
Since  $ \widetilde H^{(N)}_J $ is a  direct sum,  for each of the terms in the $ l $-sum we are therefore back to \eqref{eq:estconf} with $ K $ changed depending on how the $ L $ harmonic oscillator levels interlace. Since  \eqref{eq:estconf} was identified to be exponentially small in case $ K $ is chosen much larger than $ M $, this still shows that the last term in the right side of \eqref{eq:QHQ2} is bounded independent of $ M $. Hence choosing $ M $ large enough and subsequently $ K $ larger, the ground state energy of the block $ F_{K}^{(N)}   H_{J,\alpha} F_{K}^{(N)} $ is seen to be much larger than the energies of interest. 

By a Schur-complement analysis (Proposition~\ref{prop:Schur}) the low-energy spectrum of $ H_{J,\alpha} $ agrees with that of $ \widetilde H^{(N)}_J  $.  In the limit of $ N \to \infty $ and using Lemma~\ref{lem:src}, the spectrum of $ \widetilde H^{(N)}_J  $ is a direct sum of $ L $ harmonic oscillator spectra as described in Proposition~\ref{prop:harm}. 
\end{proof}

\appendix
\section{Miscellanea on  spin-coherent states}\label{app:cs}
In this appendix, we collect properties of the spin coherent states as defined in~\eqref{def:cs}. We restrict attention to the semiclassical properties, which were essential for the analysis in this paper. We refer to the textbooks \cite{Per86,Gaz09,CR12} and \cite{ACGT72,Lieb73} for further information and references. 
\subsection{Semiclassical estimates for the states}\label{app:cs1}
The spin coherent states~\eqref{def:cs} on $ \mathbb{C}^{2J+1} $ are parametrised by the angles $ \Omega =(\theta,\varphi) $ on the unit sphere. Their scalar product 
$$
\langle \Omega^\prime, J | \Omega , J\rangle = \left[ \cos  \frac{\theta }{2}  \cos  \frac{\theta^\prime }{2}+ e^{i(\varphi-\varphi')}  \sin \frac{\theta }{2}  \sin  \frac{\theta^\prime }{2}\right|^{2J}
$$
shows that for large values of $ J \in \mathbb{N}/2 $, they are sharply localised. Denoting by $ \sphericalangle(\Omega,\Omega^\prime) $ the spherical angle between two points on the unit sphere, one has the Gaussian-type localisation 
$$
\left| \langle \Omega^\prime, J | \Omega , J\rangle\right|^2 = \left[ \cos  \sphericalangle(\Omega,\Omega^\prime) \right]^{4J}
$$
with width proportial to $ J^{-1/2} $.

With respect to the orthonormal eigenbasis of $ S_z $ on $ \mathbb{C}^{2J+1} $, the spin coherent states are linear combinations with coefficients given by
\begin{equation}\label{eq:repcoh}
\langle J-k | \Omega , J\rangle =  \binom{2J}{k}^{1/2}\left( \cos  \frac{\theta }{2} \right)^{2J-k}\ \left( \sin \frac{\theta }{2} \right)^{k} e^{ik \varphi} ,   
\end{equation}
for any $ k \in \{0,1, \dots, 2J\}  $ and $ \Omega =(\theta,\varphi)$. 
Measurement of $ S_z $ will therefore result in a binomial distribution of $ 2J $ independent Bernoulli variables with parameter $ p = \sin^2\left(\frac{\theta}{2}\right) $. The following lemma records the
standard upper-tail Chernoff estimate for the binomial distribution.
\begin{proposition}\label{lem:chernoff}
For any $ J \in \mathbb{N} /2 $ and any $ k \in \{0,1, \dots, 2J\} $ and any $ \delta > 0 $:
\begin{equation}
\sum_{k \geq 2 (1+\delta) J\sin^2\left(\frac{\theta}{2}\right)} \left|  \langle J-k | \Omega , J\rangle \right|^2 \leq \exp\left(- \frac{\delta^2}{2+\delta} 2J \sin^2\left(\frac{\theta}{2}\right) \right)  . 
\end{equation}
\end{proposition}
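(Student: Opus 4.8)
The plan is to recognise the left-hand side as an upper-tail probability for a binomial random variable and then apply the standard multiplicative Chernoff bound. By the explicit representation~\eqref{eq:repcoh}, $ \left| \langle J-k | \Omega, J\rangle \right|^2 = \binom{2J}{k} p^k (1-p)^{2J-k} $ with $ p := \sin^2(\theta/2) $, which is precisely the probability that $ X := \sum_{j=1}^{2J} X_j $ equals $ k $, where the $ X_j $ are i.i.d.\ Bernoulli($ p $) variables. Since $ \Ee[X] = 2Jp = 2J\sin^2(\theta/2) =: \mu $, the sum in the statement equals $ \pp\left( X \geq (1+\delta)\mu \right) $, with the convention that this is zero (and the claimed bound then trivial) when $ (1+\delta)\mu > 2J $, and trivial also in the degenerate case $ p = 0 $.

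First I would bound the exponential moment: for every $ t > 0 $,
\[
\Ee\!\left[ e^{tX} \right] = \left( 1 - p + p e^t \right)^{2J} \leq \exp\left( 2Jp\,(e^t - 1) \right) = \exp\left( \mu(e^t-1) \right) ,
\]
using $ 1 + x \leq e^x $ with $ x = p(e^t-1) $. Markov's inequality applied to $ e^{tX} $ then gives $ \pp(X \geq (1+\delta)\mu) \leq \exp\left( \mu(e^t - 1) - t(1+\delta)\mu \right) $, and minimising the right-hand side over $ t > 0 $ at $ t = \ln(1+\delta) $ yields the familiar form $ \pp(X \geq (1+\delta)\mu) \leq \big( e^\delta (1+\delta)^{-(1+\delta)} \big)^{\mu} $.

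It then only remains to verify the elementary inequality $ e^\delta (1+\delta)^{-(1+\delta)} \leq \exp\big( -\delta^2/(2+\delta) \big) $ for all $ \delta > 0 $; equivalently, that $ g(\delta) := \delta - (1+\delta)\ln(1+\delta) + \delta^2/(2+\delta) \leq 0 $. Since $ g(0) = 0 $, this follows from a short calculus check that $ g'(\delta) \leq 0 $, or, more directly, from the bound $ \ln(1+\delta) \geq 2\delta/(2+\delta) $ (itself immediate by comparing derivatives at $ \delta = 0 $), which is exactly what is needed after dividing through by $ 1+\delta $. Substituting $ \mu = 2J\sin^2(\theta/2) $ gives the asserted estimate. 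There is no genuine obstacle here; the only point requiring a little care is to carry the optimisation through cleanly so as to land on the sharp exponent $ \delta^2/(2+\delta) $ rather than a cruder constant such as $ \delta^2/3 $.
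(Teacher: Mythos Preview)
Your proof is correct and is exactly what the paper has in mind: the paper does not give a proof at all but simply remarks that $|\langle J-k|\Omega,J\rangle|^2$ is the binomial distribution with parameters $2J$ and $p=\sin^2(\theta/2)$ and that the proposition ``records the standard upper-tail Chernoff estimate for the binomial distribution.'' Your argument spells out precisely this standard estimate, including the clean verification of $\ln(1+\delta)\geq 2\delta/(2+\delta)$ that yields the sharp exponent $\delta^2/(2+\delta)$.
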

We will also need the following generalisation of the identity~\eqref{eq:repcoh}, which involves the unitary $ U(\Omega) $ defined in~\eqref{def:cs}.
\begin{lemma}\label{lem:overlap}
For any $ k , k' \in \{0,1, \dots, 2J\} $ and any $\Omega =(\theta,\varphi) $:
\begin{align}\label{eq:overlap}
\langle J-k^\prime | U(\Omega)  | J-k \rangle = \mkern-5mu \sum_{m=\max\{0, k-k^\prime \} }^k  \mkern-25mu  (-1)^k \ & \binom{2J+m-k}{m}^{1/2}\binom{2J+m-k}{k^\prime -k + m}^{1/2} 
	  \binom{k}{m}^{1/2} \binom{k^\prime}{k^\prime-k+m}^{1/2} \notag \\ & \times  \left( \cos  \frac{\theta }{2} \right)^{2J-k-k^\prime}\ \left( \sin \frac{\theta }{2} \right)^{2m +k^\prime-k}  e^{i (k^\prime-k)\varphi} . 
\end{align}
Moreover, if $ k\leq k^\prime $ (with the convention that $ 0^0 = 1 $):
\begin{equation}\label{eq:binomest}
\left| \langle J-k^\prime | U(\Omega)  | J-k \rangle  \right| \leq   \binom{2J}{k}^{1/2} \binom{2J}{k'}^{1/2}(k^\prime)^k  \left( \cos  \frac{\theta }{2} \right)^{2J-k-k^\prime} \left( \sin \frac{\theta }{2} \right)^{k^\prime-k}  \left( 1+ \left(\sin \frac{\theta }{2} \right)^{2}\right)^k
\end{equation}
\end{lemma}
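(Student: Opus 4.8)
The natural tool is the $SU(2)$ disentangling (Gauss) decomposition of the one-parameter rotation $U(\theta,\varphi)$; once that is in hand the identity \eqref{eq:overlap} is just a matter of multiplying out three triangular exponentials in the $S_z$-eigenbasis, and \eqref{eq:binomest} follows by crude binomial estimates.

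First I would establish the factorization
\[
U(\theta,\varphi) = e^{\zeta S_-}\, e^{\eta S_z}\, e^{\xi S_+}, \qquad \zeta = e^{i\varphi}\tan\tfrac\theta2,\quad \eta = 2\ln\cos\tfrac\theta2,\quad \xi = -e^{-i\varphi}\tan\tfrac\theta2 .
\]
Since this is an algebraic identity in (a completion of) the enveloping algebra of $\mathfrak{su}(2)$, it suffices to verify it in the faithful two-dimensional representation, where it reduces to the statement that the $2\times2$ rotation matrix with diagonal entries $\cos\tfrac\theta2$ and off-diagonal entries $\mp e^{\mp i\varphi}\sin\tfrac\theta2$ equals the product of a lower-triangular, a diagonal and an upper-triangular matrix with the entries read off from $\zeta,\eta,\xi$; this is a one-line computation. (Alternatively one can write $U(\theta,\varphi)=e^{-i\varphi S_z}e^{-i\theta S_y}e^{i\varphi S_z}$ and invoke Wigner's explicit formula for the small-$d$ matrix, but the disentangling lands directly on the factored form appearing in \eqref{eq:overlap}.)

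Next I would insert a resolution of the identity in the $S_z$-eigenbasis between the three factors,
\[
\langle J-k'|U(\theta,\varphi)|J-k\rangle = \sum_{\mu} \langle J-k'|e^{\zeta S_-}|J-\mu\rangle\; e^{\eta(J-\mu)}\; \langle J-\mu|e^{\xi S_+}|J-k\rangle ,
\]
and expand the two remaining exponentials. Telescoping the ladder relation $S_+|J,M\rangle=\sqrt{(J-M)(J+M+1)}\,|J,M+1\rangle$ and its adjoint gives the single-term matrix elements $\langle J-a|S_+^{\,b-a}|J-b\rangle=(b-a)!\binom{b}{b-a}^{1/2}\binom{2J-a}{b-a}^{1/2}$ for $a\le b$ (and the analogue for $S_-$), so that $\langle J-\mu|e^{\xi S_+}|J-k\rangle=\xi^{k-\mu}\binom{k}{\mu}^{1/2}\binom{2J-\mu}{k-\mu}^{1/2}$ and $\langle J-k'|e^{\zeta S_-}|J-\mu\rangle=\zeta^{k'-\mu}\binom{2J-\mu}{k'-\mu}^{1/2}\binom{k'}{k'-\mu}^{1/2}$. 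Because $S_+$ annihilates the top weight and $S_-$ the bottom, only $\mu$ with $0\le\mu\le\min\{k,k'\}$ contribute; substituting $m=k-\mu$ converts the sum into one over $m\in\{\max\{0,k-k'\},\dots,k\}$, the four binomial square roots match exactly those in \eqref{eq:overlap}, and the scalar $\zeta^{k'-\mu}\xi^{k-\mu}e^{\eta(J-\mu)}$ has modulus $(\tan\tfrac\theta2)^{k+k'-2\mu}(\cos\tfrac\theta2)^{2(J-\mu)}=(\cos\tfrac\theta2)^{2J-k-k'}(\sin\tfrac\theta2)^{2m+k'-k}$ and carries the phase $e^{i(k'-k)\varphi}$ together with the alternating sign recorded in \eqref{eq:overlap}. (As a consistency check, $k=0$ leaves only the term $m=0$ and recovers \eqref{eq:repcoh}.) Finally, for \eqref{eq:binomest} I would take absolute values in \eqref{eq:overlap} with $k\le k'$, factor out $(\cos\tfrac\theta2)^{2J-k-k'}(\sin\tfrac\theta2)^{k'-k}$, and bound the binomials coarsely: $\binom{2J+m-k}{m}\le\binom{2J}{k}$ and $\binom{2J+m-k}{k'-k+m}\le\binom{2J}{k'}$ by iterating $\binom{n}{j}\ge\binom{n-1}{j-1}$, $\binom{k'}{k'-k+m}^{1/2}\le (k')^{(k-m)/2}\le (k')^{k}$, and $\binom{k}{m}^{1/2}\le\binom{k}{m}$; what remains is $\sum_{m=0}^{k}\binom{k}{m}(\sin^2\tfrac\theta2)^m=(1+\sin^2\tfrac\theta2)^k$, which is the asserted factor.

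The only genuinely substantive step is the disentangling identity; everything after it is bookkeeping. The points needing care are the exact normalization of the ladder-operator matrix elements, keeping the range of the intermediate-weight summation straight (this is precisely what produces the summation limits $\max\{0,k-k'\}$ and $k$), and checking that the deliberately lossy estimates in the last step still close \eqref{eq:binomest}.
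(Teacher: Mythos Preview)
Your proposal is correct and follows essentially the same route as the paper: both use the Gauss/disentangling factorization $U(\Omega)=e^{\zeta S_-}e^{\eta S_z}e^{\xi S_+}$ (the paper cites Perelomov rather than re-deriving it in the $2\times2$ representation), expand the two triangular exponentials with the explicit ladder matrix elements, and then reach \eqref{eq:binomest} by coarse binomial bounds. The only cosmetic difference is in how the four binomial factors are grouped in the last step---the paper pairs $\binom{2J+m-k}{k'-k+m}$ with $\binom{k'}{k'-k+m}$ and bounds $\binom{2J+m-k}{m}$ separately, whereas you bound each factor individually---but both arrive at the same $\binom{k}{m}$ that is summed to $(1+\sin^2\tfrac{\theta}{2})^k$.
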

\begin{proof}
We use the decomposition of the unitary~\cite[Eq.~(4.3.14)]{Per86} (see also~\cite{ACGT72}):
$$
U(\Omega) =  \exp\left( \tan \frac{\theta}{2} \  e^{i\varphi} S_- \right) \exp\left( 2 \ln\left( \cos\frac{\theta}{2} \right) S_z \right) \exp\left( - \tan \frac{\theta}{2} e^{-i\varphi} S_+ \right)  . 
$$
The formula~\eqref{eq:overlap} follows from a straightforward, but tedious calculation which uses that
$$
\frac{1}{m!} \left( S_+\right)^m | J - k \rangle = \binom{k}{m}^{1/2}  \binom{2J +m-k }{m}^{1/2}   | J - k + m \rangle 
$$
and similarly for $ S_- $ (and the usual convention that the binomial is zero if the upper integer is smaller than the lower one), cf.~\cite[Eq.~(4.2.3)]{Per86}.

The bound~\eqref{eq:binomest} then follows by estimating
\begin{align*}
\binom{2J+m-k}{k^\prime -k + m}  \binom{k^\prime}{k^\prime-k+m} & =  \binom{2J}{k^\prime} \frac{(2J +m-k)!}{(2J)!(k-m)!} \left(\frac{(k^\prime)!}{(k^\prime+m-k)!} \right)^2 \\
& \leq \binom{2J}{k^\prime}  \frac{(k^\prime)^{2k}}{(k-m)!} ,
\end{align*}
and similarly 
$$
 \binom{2J+m-k}{m} \leq  \binom{2J}{k} \frac{k!}{m!} . 
$$
This yields the claim by binomial formula. 
\end{proof}

\subsection{Semiclassical estimates for the symbols}
Associated to any linear operator on $ \mathbb{C}^{2J+1} $ are two semiclassical symbols: the lower and upper one. We recall from~\cite{Lieb73} the lower symbols of the spin operator
\begin{equation}\label{eq:upperS}
\langle \Omega, J | \mathbf{S}  |  \Omega , J\rangle = J \mathbf{e}(\Omega) . 
\end{equation} 
as well as its upper symbol alongside, the symbol of the square of the $ z $-component,
\begin{align}\label{eq:symbolS}
	\mathbf{S} &= \frac{2J+1}{4\pi} \int d\Omega \,  (J+1) \mathbf{e}(\Omega)  \, \big| \Omega, J \rangle \langle \Omega, J \big| \\
	S_z^2 &=  \frac{2J+1}{4\pi} \int d\Omega \, ( (J+1)(J+3/2) \mathbf{e}_z(\Omega)^2 -(J+1)/2 )  \, \big| \Omega, J \rangle \langle \Omega, J \big| .  \label{eq:symbolS32}
\end{align}

For operators with a smooth upper symbol, the  lower symbol is know to agree with the upper symbol in the semiclassical limit~\cite{Lieb73,Duffield:1990aa}. Since we need quantitative error estimates, we include the following result, which is tailored to the applications considered here (see also \cite[Prop.~4.2]{Landsman:2020aa}). 

\begin{proposition}\label{prop:Druff2}
For any 
$$
H = \frac{2J+1}{4\pi}  \int  d\Omega  \, N f \Big(\frac{2J}{N}  \mathbf{e}(\Omega) \Big) \,  \big| \Omega, J \rangle \langle \Omega, J \big| \, 
$$
on $ \mathbb{C}^{2J+1} $ with  some $ f \in C^2 $, there is some $ C < \infty $ such that for all $ J \leq N/2 $: 
\begin{equation}
\sup_{\Omega} \left|  \langle \Omega, J \big|  H \big| \Omega, J \rangle - N f \big(\frac{2J}{N}  \mathbf{e}(\Omega) \big) \right| \leq C . 
\end{equation}
\end{proposition}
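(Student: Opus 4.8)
The plan is to view the lower symbol of $H$ as an average of the classical energy $Nf$ over the coherent-state overlap distribution and then to Taylor-expand to second order. By the definition of the upper symbol,
\[
\langle \Omega, J | H | \Omega , J \rangle = \int_{S^2} N f\Big( \tfrac{2J}{N} \mathbf{e}(\Omega') \Big) \, d\mu_\Omega(\Omega'), \qquad d\mu_\Omega(\Omega') := \frac{2J+1}{4\pi} \, |\langle \Omega, J | \Omega' , J \rangle|^2 \, d\Omega',
\]
and $\mu_\Omega$ is a probability measure on $S^2$ by the resolution of unity~\eqref{eq:completeness}. Since $\tfrac{2J}{N}\mathbf{e}(\Omega')\in \overline{B_1}$ and $f\in C^2(\overline{B_1})$, Taylor's theorem with second-order remainder gives, for every $\Omega'$,
\[
N f\big( \tfrac{2J}{N} \mathbf{e}(\Omega') \big) - N f\big( \tfrac{2J}{N} \mathbf{e}(\Omega) \big) = 2J\, \nabla f\big( \tfrac{2J}{N} \mathbf{e}(\Omega) \big) \cdot \big( \mathbf{e}(\Omega') - \mathbf{e}(\Omega) \big) + r(\Omega,\Omega'),
\]
with $|r(\Omega,\Omega')| \le \tfrac{2J^2}{N}\,\sup_{\overline{B_1}}\|D^2 f\|\,|\mathbf{e}(\Omega') - \mathbf{e}(\Omega)|^2 \le J\,\sup_{\overline{B_1}}\|D^2 f\|\,|\mathbf{e}(\Omega') - \mathbf{e}(\Omega)|^2$, using $J\le N/2$. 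Integrating against $\mu_\Omega$, it therefore suffices to bound the first-moment deviation $|\int \mathbf{e}(\Omega')\, d\mu_\Omega - \mathbf{e}(\Omega)|$ and the second central moment $\int |\mathbf{e}(\Omega')-\mathbf{e}(\Omega)|^2\, d\mu_\Omega$ each by $O(1/J)$.

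Both moments are available in closed form. For the first, I would take the lower symbol at $\Omega$ of the operator identity for $\mathbf{S}$ in~\eqref{eq:symbolS} and combine it with~\eqref{eq:upperS}, which yields $\int_{S^2} \mathbf{e}(\Omega')\, d\mu_\Omega(\Omega') = \tfrac{J}{J+1}\, \mathbf{e}(\Omega)$; the first-order term in the Taylor expansion above then contributes at most $\tfrac{2J}{J+1}\,\sup_{\overline{B_1}}|\nabla f| \le 2\,\sup_{\overline{B_1}}|\nabla f|$ to the difference in the proposition. For the second, I would rotate coordinates so that $\Omega$ is the north pole, integrate out the azimuthal angle, and use the explicit overlap $|\langle \Omega, J|\Omega',J\rangle|^2 = \big(\tfrac{1+\mathbf{e}(\Omega)\cdot\mathbf{e}(\Omega')}{2}\big)^{2J}$ from Appendix~\ref{app:cs1}. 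Writing $t=\mathbf{e}(\Omega)\cdot\mathbf{e}(\Omega')\in[-1,1]$ and $|\mathbf{e}(\Omega')-\mathbf{e}(\Omega)|^2 = 2(1-t)$ reduces the second central moment to the elementary Beta integral $\tfrac{2J+1}{2}\int_{-1}^{1}\big(\tfrac{1+t}{2}\big)^{2J}\, 2(1-t)\, dt = \tfrac{2}{J+1}$, so the remainder contributes at most $J\cdot\sup_{\overline{B_1}}\|D^2 f\|\cdot\tfrac{2}{J+1}\le 2\,\sup_{\overline{B_1}}\|D^2 f\|$. Adding the two bounds gives the claimed uniform constant $C$.

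I do not expect a genuine obstacle here; the only delicate point is keeping track of the powers of $N$ and $J$. The key mechanism is that the prefactor $N$ in $Nf(\tfrac{2J}{N}\,\cdot\,)$ is exactly cancelled: each $k$-th Taylor term carries $(\tfrac{2J}{N})^k$, the $k$-th central moment of $\mu_\Omega$ decays like $J^{-\lceil k/2\rceil}$, and $J\le N/2$ then makes $N\,(\tfrac{2J}{N})^k\, J^{-\lceil k/2\rceil}=O(1)$ for $k=1,2$. (The same scaling shows that under $f\in C^3$ one gains an extra factor $N^{-1/2}$, so $C^2$ is the natural hypothesis.) The remaining items to verify in full are routine: the explicit value of the multivariate Taylor remainder constant, boundedness of $\nabla f$ and $D^2 f$ on the compact ball $\overline{B_1}$, and --- in deriving $\int \mathbf{e}(\Omega')\, d\mu_\Omega = \tfrac{J}{J+1}\mathbf{e}(\Omega)$ --- that the lower symbol of the rank-one projection $|\Omega',J\rangle\langle\Omega',J|$ at the point $\Omega$ equals $|\langle \Omega,J|\Omega',J\rangle|^2$.
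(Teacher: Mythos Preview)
Your proof is correct and follows essentially the same route as the paper: Taylor-expand $f$ to second order and control the first and second moments of the coherent-state overlap measure $\mu_\Omega$, using~\eqref{eq:symbolS} together with~\eqref{eq:upperS} for the first moment. The only cosmetic difference is that for the second moment you evaluate the Beta integral directly, whereas the paper rewrites the remainder in terms of $1-\mathbf{e}(\Omega')\cdot\mathbf{e}(\Omega)$ and reuses the same operator identity~\eqref{eq:symbolS}/\eqref{eq:upperS} to obtain $\int(1-\mathbf{e}(\Omega')\cdot\mathbf{e}(\Omega))\,d\mu_\Omega = 1-\tfrac{J}{J+1}=\tfrac{1}{J+1}$; both computations yield the same bound.
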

\begin{proof}
The proof is based on the standard Taylor estimate
\begin{align*}
 f \Big(\frac{2J}{N}  \mathbf{e}(\Omega^\prime) \Big)  =  f \Big(\frac{2J}{N}  \mathbf{e}(\Omega) \Big)  + \frac{2J}{N}   \nabla  f \Big(\frac{2J}{N}  \mathbf{e}(\Omega) \Big)  \cdot \left(  \mathbf{e}(\Omega^\prime) -  \mathbf{e}(\Omega) \right) + r(\Omega,\Omega';J/N) & \\ 
\mbox{with} \quad  \left| r(\Omega,\Omega';J/N)\right| \leq \| f'' \|_\infty \left( \frac{2J}{N} \right)^2  \left( 1-  \mathbf{e}(\Omega^\prime) \cdot \mathbf{e}(\Omega) \right) . & 
 \end{align*} 
Using the representation~\eqref{eq:symbolS} and~\eqref{eq:upperS} and the completeness~\eqref{eq:completeness}, we arrive at
\begin{align*} 
& \frac{2J+1}{4\pi}  \int d\Omega^\prime  \,   \nabla  f \Big(\frac{2J}{N}  \mathbf{e}(\Omega) \Big)  \cdot \left(  \mathbf{e}(\Omega^\prime) -  \mathbf{e}(\Omega) \right) \left| \langle \Omega, J | \Omega^\prime , J \rangle \right|^2 \\ & =  \nabla  f \Big(\frac{2J}{N}  \mathbf{e}(\Omega) \Big)  \cdot \left( \frac{1}{J+1} \langle \Omega, J | \mathbf{S}  |  \Omega , J\rangle  - \mathbf{e}(\Omega)  \right)  =  \nabla  f \Big(\frac{2J}{N}  \mathbf{e}(\Omega) \Big)  \cdot \mathbf{e}(\Omega)  \left( \frac{J}{J+1} -1\right) .
\end{align*} 
The remainder term is estimated similarly
\begin{equation*}
	\frac{2J+1}{4\pi}  \int d\Omega^\prime   \left| r(\Omega,\Omega';J/N)\right|  \left| \langle \Omega, J | \Omega^\prime , J \rangle \right|^2 \leq  \| f'' \|_\infty \left( \frac{2J}{N} \right)^2  \left( 1 - \frac{J}{J+1}\right) . 
\end{equation*} 
Inserting the Taylor estimate into the integral expression for $ H $ and using the normalization $ \langle \Omega, J | \Omega , J\rangle = 1 $, we thus arrive at
\begin{align*}
& \left|  \langle \Omega, J \big|  H \big| \Omega, J \rangle  - N f \big(\frac{2J}{N}  \mathbf{e}(\Omega)\big)  - 2J \  \nabla  f \Big(\frac{2J}{N}  \mathbf{e}(\Omega) \Big) \cdot   \mathbf{e}(\Omega)  \left( \frac{J}{J+1} -1\right) \right| \\
&\quad \leq \| f'' \|_\infty  \frac{4J^2}{N} \left( 1 - \frac{J}{J+1}\right)
 \end{align*} 
from which the claim follows.  
\end{proof}

The last proposition shows the consistency of the two symbols in the semiclassical limit. We also need the following quantitative version of Duffield's theorem \cite{Duffield:1990aa} on the consistency of the quantisation of non-commuting self-adjoint polynomials with the help of the lower symbol. 

\begin{proposition}\label{prop:Druff1}
	If $ H = N \ \textrm{P}\Big(\tfrac{2}{N}   \mathbf{S} \Big)$ on $ \mathbb{C}^{2J+1} $ with a non-commuting self-adjoint polynomial $P$, then there is some $ C < \infty $, which is independent of $ J \leq N/2 $, such that 
	\begin{equation}
		\left\| H  -     \frac{2J+1}{4\pi} \int d\Omega \, N P \Big(\frac{2J}{N}  \mathbf{e}(\Omega) \Big) \,  \big| \Omega, J \rangle \langle \Omega, J \big|  \right\| \leq C . 
	\end{equation}
\end{proposition}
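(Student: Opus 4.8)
The plan is to use linearity to peel the proposition down to a single Weyl‑ordered monomial, then to use polarization to reduce that to a power of one spin component, and finally to settle the one‑component case by an explicit coherent‑state computation. For $g\in C^\infty(S^2)$ abbreviate $A[g]:=\tfrac{2J+1}{4\pi}\int g(\Omega)\,|\Omega,J\rangle\langle\Omega,J|\,d\Omega$, the operator on $\mathbb{C}^{2J+1}$ with upper symbol $g$; it is linear in $g$, and $\|A[g]\|\le\|g\|_\infty$ for real $g$ by \eqref{eq:completeness} together with positivity of $|\Omega,J\rangle\langle\Omega,J|$. Dropping the constant term of $P$ (which contributes $Nc_0\mathbbm{1}$ to both operators and cancels), write $\textrm{P}=\sum_{d\ge1}\textrm{P}_d$ with $\textrm{P}_d$ homogeneous of degree $d$. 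Since $N\,\textrm{P}_d(\tfrac2N\mathbf{S})=2^dN^{1-d}\,\textrm{P}_d(\mathbf{S})$ and $N P_d(\tfrac{2J}N\mathbf{e}(\Omega))=2^dN^{1-d}J^d P_d(\mathbf{e}(\Omega))$, the operator whose norm is to be bounded equals $\sum_d 2^dN^{1-d}\big(\textrm{P}_d(\mathbf{S})-J^dA[P_d(\mathbf{e}(\cdot))]\big)$. As $J\le N/2$ and only finitely many $d$ occur, it suffices to prove $\|\textrm{P}_d(\mathbf{S})-J^dA[P_d(\mathbf{e}(\cdot))]\|\le C_dJ^{d-1}$ for all $J\in\mathbb{N}/2$, with $C_d$ depending only on the coefficients (then $2^dN^{1-d}C_dJ^{d-1}\le 2C_d$). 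Expanding $\textrm{P}_d$ into monomials and using the triangle inequality, this reduces to showing, for every index string $\mathbf{i}=(i_1,\dots,i_d)\in\{1,2,3\}^d$,
\begin{equation*}
\big\|\mathrm{Sym}(S_{i_1}\cdots S_{i_d})-J^d A\big[\textstyle\prod_{j=1}^d\mathbf{e}_{i_j}(\cdot)\big]\big\|\le C_dJ^{d-1},
\end{equation*}
where $\mathrm{Sym}$ is the average over the $d!$ orderings, i.e.\ the Weyl quantization of $m_{i_1}\cdots m_{i_d}$.

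The next step is polarization, applied simultaneously on the operator and classical sides. The map $(A_1,\dots,A_d)\mapsto\mathrm{Sym}(A_1\cdots A_d)$ is symmetric and $d$-linear with diagonal $\mathrm{Sym}(A\cdots A)=A^d$, and $(t_1,\dots,t_d)\mapsto\prod_j t_j$ is symmetric $d$-linear with diagonal $t\mapsto t^d$; both obey $L(a_1,\dots,a_d)=\tfrac1{d!}\sum_{\emptyset\ne S\subseteq\{1,\dots,d\}}(-1)^{d-|S|}\big(\sum_{i\in S}a_i\big)^{[d]}$, where $x^{[d]}$ is the common diagonal value. Hence $\mathrm{Sym}(S_{i_1}\cdots S_{i_d})=\tfrac1{d!}\sum_S(-1)^{d-|S|}(\mathbf{v}_S\cdot\mathbf{S})^d$ with multiplicity vectors $\mathbf{v}_S:=\sum_{j\in S}\hat e_{i_j}\in\mathbb{Z}_{\ge0}^3$ satisfying $1\le|\mathbf{v}_S|\le d$ (here $\hat e_1,\hat e_2,\hat e_3$ are the standard basis vectors), and pointwise in $\Omega$, $\prod_j\mathbf{e}_{i_j}(\Omega)=\tfrac1{d!}\sum_S(-1)^{d-|S|}(\mathbf{v}_S\cdot\mathbf{e}(\Omega))^d$. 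Applying $J^dA[\cdot]$ to the latter and subtracting, and writing $\mathbf{v}_S=|\mathbf{v}_S|\mathbf{n}_S$ with $|\mathbf{n}_S|=1$, the displayed monomial bound follows (the errors picking up a factor at most $\tfrac1{d!}2^dd^d$) from the uniform estimate
\begin{equation*}
\big\|(\mathbf{n}\cdot\mathbf{S})^d-J^dA\big[(\mathbf{n}\cdot\mathbf{e}(\cdot))^d\big]\big\|\le C_dJ^{d-1}\quad\text{for every unit vector }\mathbf{n}\text{ and every }J\in\mathbb{N}/2 .
\end{equation*}

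For this one‑component estimate I would invoke $SU(2)$-covariance of the coherent states \eqref{def:cs}: conjugating by a rotation carrying $\mathbf{e}_z$ to $\mathbf{n}$ sends $S_z\mapsto\mathbf{n}\cdot\mathbf{S}$, $A[(\cos\theta)^d]\mapsto A[(\mathbf{n}\cdot\mathbf{e}(\cdot))^d]$, and preserves operator norms, so it suffices to treat $\mathbf{n}=\mathbf{e}_z$, i.e.\ $\|S_z^d-J^dA[(\cos\theta)^d]\|\le C_dJ^{d-1}$. Both operators are diagonal in the $S_z$-eigenbasis $\{|J-k\rangle\}_{k=0}^{2J}$, so the left side equals $\max_k|(J-k)^d-J^d\langle J-k|A[(\cos\theta)^d]|J-k\rangle|$. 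By \eqref{eq:repcoh}, $|\langle J-k|\Omega,J\rangle|^2=\binom{2J}{k}(\cos^2\tfrac\theta2)^{2J-k}(\sin^2\tfrac\theta2)^k$, and the substitution $u=\sin^2\tfrac\theta2$ turns the diagonal matrix element into $J^d\,\mathbb{E}_k[(1-2u)^d]$, where $u$ is $\mathrm{Beta}(k+1,2J-k+1)$-distributed. Using $\mathbb{E}_k[u]=\tfrac{k+1}{2J+2}$ and the standard central‑moment bounds $\mathbb{E}_k[(u-\mathbb{E}_ku)^j]\le C_jJ^{-\lceil j/2\rceil}$ (uniform in $k$), the binomial expansion of $(1-2u)^d=\big(\tfrac{J-k}{J+1}-2(u-\mathbb{E}_ku)\big)^d$ gives $\mathbb{E}_k[(1-2u)^d]=\big(\tfrac{J-k}{J+1}\big)^d+\mathcal{O}(J^{-1})$ uniformly in $k$ and $J$; since $|J-k|\le J$ and $\big(\tfrac{J}{J+1}\big)^d=1+\mathcal{O}(J^{-1})$, this yields $J^d\,\mathbb{E}_k[(1-2u)^d]=(J-k)^d+\mathcal{O}(J^{d-1})$ uniformly, which is exactly the claimed bound. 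Chaining the three reductions proves the proposition.

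The only genuinely computational point is this last step — recognising the diagonal matrix elements of $A[(\cos\theta)^d]$ as Beta moments and verifying that the remainder is $\mathcal{O}(J^{d-1})$ with a constant uniform in $J$ (including small $J$) and $N$; everything else is bookkeeping. The structural idea that keeps the argument short is that polarization collapses the whole family of Weyl‑ordered monomials onto powers of a single spin component, whose coherent‑state matrix elements are completely explicit through \eqref{eq:repcoh}. (An alternative would decompose operators and symbols into $SO(3)$ irreducible tensor parts, on which the symbol map acts by a scalar $1+\mathcal{O}(\ell^2/J)$ in the rank‑$\ell$ sector; the polarization route avoids this representation‑theoretic input.)
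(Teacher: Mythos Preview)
Your proof is correct and follows essentially the same three-step architecture as the paper's: reduce by linearity to $d$-th powers of a single spin component, rotate to $S_z$ by coherent-state covariance, and handle the diagonal case by an explicit computation. The differences are only in execution. Where the paper invokes a span argument (Lemma~\ref{lem:polalg}: homogeneous degree-$d$ polynomials are spanned by $d$-th powers of linear forms) and then appeals to Lemma~\ref{lem:assz} with $f(t)=t^d$, you use the explicit polarization identity, which is a constructive version of the same fact and has the mild advantage of making the Weyl symmetrization transparent. For the diagonal step, the paper Taylor-expands $f$ around $2k/N$ and plugs in the explicit upper symbols~\eqref{eq:symbolS}--\eqref{eq:symbolS32} of $S_z$ and $S_z^2$; your Beta-moment expansion around $\mathbb{E}_k u=\tfrac{k+1}{2J+2}$ is the same calculation in probabilistic dress (and in fact the weaker bound $\mathbb{E}_k|u-\mathbb{E}_ku|^j\le\mathrm{Var}_k(u)=\mathcal{O}(J^{-1})$ for $j\ge 2$, immediate from $|u-\mathbb{E}_ku|\le 1$, already suffices---you do not need the sharper $\mathcal{O}(J^{-\lceil j/2\rceil})$ you quote).
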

The proof, which is spelled out at the end of this subsection, will rest on two preparatory lemmas. The first lemma deals with operators of the type $f(2S_z/N)$.
\begin{lemma}\label{lem:assz}
	If $ H = N f(2 \mathbf{v} \cdot \mathbf{S} /N)$ on $ \mathbb{C}^{2J+1} $ for some $f \in C^2$, then there is some $ C < \infty $, which is independent of $ J \leq N/2 $, such that  for all unit vectors $ \mathbf{v} \in S^2 $ and $ N $, $ J \leq N/2 $:
	\begin{equation}
		\left\| H  -    \frac{2J+1}{4\pi} \int N f \Big(\frac{2J}{N} \mathbf{v} \cdot  \mathbf{e}(\Omega) \Big) \,  \big| \Omega, J \rangle \langle \Omega, J \big|  \,  d\Omega \right\| \leq C .
	\end{equation}
\end{lemma}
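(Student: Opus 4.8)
The plan is to use rotational covariance of the coherent states to reduce to the case $\mathbf v=(0,0,1)^T$, then diagonalise both operators in the $S_z$-eigenbasis, and finally bound the diagonal differences by a second-order Taylor estimate whose required moments are supplied by the explicit upper symbols of $\mathbf S$ and $S_z^2$ recorded in~\eqref{eq:symbolS}--\eqref{eq:symbolS32}.

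First I would fix a rotation $R\in SO(3)$ with $R\mathbf v=(0,0,1)^T$ and let $\mathcal U$ be an implementing unitary on the irreducible representation $\mathbb C^{2J+1}$. Conjugation by $\mathcal U$ carries $\mathbf v\cdot\mathbf S$ to $S_z$ and, by $SO(3)$-invariance of $d\Omega$ together with the covariance of the projectors $|\Omega,J\rangle\langle\Omega,J|$, simultaneously carries the integral operator into the same integral operator with $\mathbf v$ replaced by $(0,0,1)^T$; since the operator norm is unitarily invariant, we may assume $\mathbf v=(0,0,1)^T$. Then $H=Nf(2S_z/N)$ is diagonal in the basis $\{|J-k\rangle\}_{k=0}^{2J}$, and so is the integral operator $\widetilde H$: by~\eqref{eq:repcoh} the matrix element $\langle J-k|\Omega,J\rangle\langle\Omega,J|J-k'\rangle$ carries the phase $e^{i(k-k')\varphi}$, which integrates to zero against the $\varphi$-independent weight $Nf(\tfrac{2J}{N}\cos\theta)$ unless $k=k'$. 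Hence $\|H-\widetilde H\|=\max_{0\le k\le 2J}\bigl|Nf(\tfrac{2(J-k)}{N})-a_k\bigr|$, where $a_k:=\tfrac{2J+1}{4\pi}\int Nf(\tfrac{2J}{N}\cos\theta)\,|\langle J-k|\Omega,J\rangle|^2\,d\Omega = N\,\mathbb{E}_{\mu_k}\!\bigl[f(\tfrac{2J}{N}\cos\theta)\bigr]$ and $\mu_k$ is the probability measure $\tfrac{2J+1}{4\pi}|\langle J-k|\Omega,J\rangle|^2\,d\Omega$ on $S^2$ (it has unit mass by taking the $(J-k)$ diagonal entry of the resolution of identity~\eqref{eq:completeness}).

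The heart of the argument is that the first two moments of $\cos\theta$ under $\mu_k$ are explicit: taking $(J-k)$ diagonal entries of~\eqref{eq:symbolS} and~\eqref{eq:symbolS32} gives $\mathbb{E}_{\mu_k}[\cos\theta]=\tfrac{J-k}{J+1}$ and $\mathbb{E}_{\mu_k}[\cos^2\theta]=\tfrac{(J-k)^2+(J+1)/2}{(J+1)(J+3/2)}$, so that $\mathrm{Var}_{\mu_k}[\cos\theta]=\tfrac{(J+1)^2-(J-k)^2}{2(J+1)^2(J+3/2)}\le\tfrac{1}{2(J+3/2)}$. I would then Taylor-expand $f$ to second order around $x_0:=\tfrac{2(J-k)}{N}$; both $x_0$ and $\tfrac{2J}{N}\cos\theta$ lie in $[-1,1]$ for $J\le N/2$, so $\|f'\|_\infty$ and $\|f''\|_\infty$ on $[-1,1]$ are finite. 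The first-order term contributes $N|f'(x_0)|\,\bigl|\mathbb{E}_{\mu_k}[\tfrac{2J}{N}\cos\theta]-x_0\bigr|=|f'(x_0)|\,\tfrac{2|J-k|}{J+1}\le 2\|f'\|_\infty$, using $\mathbb{E}_{\mu_k}[\tfrac{2J}{N}\cos\theta]=x_0\tfrac{J}{J+1}$ and $|J-k|\le J$. The quadratic remainder is at most $\tfrac N2\|f''\|_\infty\,\mathbb{E}_{\mu_k}\bigl[(\tfrac{2J}{N}\cos\theta-x_0)^2\bigr]$, and splitting this expectation into variance plus squared bias, $\mathbb{E}_{\mu_k}\bigl[(\tfrac{2J}{N}\cos\theta-x_0)^2\bigr]=(\tfrac{2J}{N})^2\mathrm{Var}_{\mu_k}[\cos\theta]+\tfrac{x_0^2}{(J+1)^2}\le\tfrac{2J}{N^2}+\tfrac{4}{N^2}$, so the remainder is $\le\tfrac52\|f''\|_\infty$. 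Adding these bounds yields the claim with $C=2\|f'\|_\infty+\tfrac52\|f''\|_\infty$, independent of $k$, $J\le N/2$, $N$ and $\mathbf v$.

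The only delicate point -- and the reason the bound holds uniformly in $J$ rather than only for $J$ large -- is that for small $J$ the variance $\mathrm{Var}_{\mu_k}[\cos\theta]$ is only $O(1)$, not $O(1/J)$, so a priori the second-order Taylor term could be of size $N$; this is rescued by the prefactor $(2J/N)^2$, since $N\,(2J/N)^2\,\mathrm{Var}_{\mu_k}[\cos\theta]\le 2J/N\le 1$. Everything else reduces to elementary Taylor and moment bookkeeping.
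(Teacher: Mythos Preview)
Your proof is correct and follows essentially the same route as the paper: reduce to $\mathbf v=\mathbf e_z$ by rotational covariance, observe that the $\varphi$-integration makes the integral operator diagonal in the $S_z$-basis, then control each diagonal entry by a second-order Taylor expansion whose first and second moments are read off from the upper symbols~\eqref{eq:symbolS}--\eqref{eq:symbolS32}. The only differences are cosmetic---you phrase the moment computation probabilistically via $\mu_k$ and the variance--bias decomposition, and you keep explicit constants---whereas the paper writes the same integrals directly.
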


\begin{proof}
	We first reduce the assertion to the case $  \mathbf{v} = \mathbf{e}_z $.
	If $\Omega$ stands for the spherical angles of $\mathbf{v} =  \mathbf{e}(\Omega) $, 
we have
$ U(\Omega)^*   (\mathbf{v}\cdot    \mathbf{S} )U(\Omega) = S_z $ with the unitary from~\eqref{def:cs}. 
Similarly, by the definition of the coherent states, one easily arrives at 
$$ U(\Omega_0)^*  \left(   \int N f \Big(\frac{2J}{N}    \mathbf{v}\cdot   \mathbf{e}(\Omega)  \Big) \,  \big| \Omega, J \rangle \langle \Omega, J \big| \, d\Omega\right) U(\Omega_0)  = \int  \, N f \Big(\frac{2J}{N}  \mathbf{e}_z(\Omega) \Big) \,  \big| \Omega, J \rangle \langle \Omega, J \big| 
d\Omega $$
for any continuous function $f$. 

	In order to establish the claim in case $  \mathbf{v} = \mathbf{e}_z  $, we first show that 
	\[ H^\prime \coloneqq   \frac{2J+1}{4\pi} \int d\Omega \, N f \Big(\frac{2J}{N}  \mathbf{e}_z(\Omega) \Big) \,  \big| \Omega, J \rangle \langle \Omega, J \big|  \]
	s diagonal in the orthonormal basis $|k \rangle$ with $k \in \{ -J,\ldots, J\}$ for which the operator $S_z$ is diagonal. Inserting the explicit expression~\eqref{eq:repcoh} in
	$$
	\langle k | H^\prime | k^\prime  \rangle = 	\frac{2J+1}{4\pi} \int d\Omega \, N f \Big(\frac{2J}{N}  \mathbf{e}_z(\Omega) \Big) \, \langle k,J  \big| \Omega, J \rangle \langle \Omega, J \big| k^\prime, J \rangle .
	$$
	the $ \varphi $-integration of the spherical angle $ \Omega = (\theta,\varphi)$ immediately yield $ \langle k | H^\prime | k^\prime \rangle =0 $ if $k \neq k^\prime$. 
		
	It remains to control the diagonal elements of $H^\prime$ for which we fix $ k =k^\prime$ in the above integral. By a standard Taylor approximation we have 
	\[ \left|  f\Big(\frac{2J}{N}  \mathbf{e}_z(\Omega) \Big) - f\Big(\frac{2k}{N}   \Big) - \frac{2J}{N} f^\prime(2k/N)(\mathbf{e}_z(\Omega)-k/J) \right| \leq \frac{\|f^{\prime\prime}\|_{\infty} }{2} \left( \frac{2J}{N} \right)^2 (\mathbf{e}_z(\Omega)-k/J)^2 \]
	In particular we have 
	\[ \begin{split} | \langle k | H^\prime - H| k \rangle &\leq 2J \|f^{\prime}\|_{\infty} \left|   \frac{2J+1}{4\pi} \int d\Omega \,  \, (\mathbf{e}_z(\Omega)-k/J) |\langle \Omega, J \big|  k \rangle|^2  \right| \\ &+
	\frac{\|f^{\prime\prime}\|_{\infty} N }{2} \left( \frac{2J}{N} \right)^2 \frac{2J+1}{4\pi} \int d\Omega \,  \, (\mathbf{e}_z(\Omega)-k/J)^2  |\langle \Omega, J \big|  k \rangle|^2. 
 \end{split} \]
To estimate the right side, we make use of the explicit operator representations~\eqref{eq:symbolS} for $ S_z $ and~\eqref{eq:symbolS32}
which immediately yield
\[ | \langle k | H^\prime - H| k \rangle \leq C(\|f^{\prime}\|_{\infty}+ \|f^{\prime\prime}\|_{\infty}), \]
with some numerical constant $C$.

\end{proof}
Our second ingredient is an algebraic result, which allows to write any homogeneous polynomial of degree $d$ as sum of linear forms to the power $d$:

\begin{lemma}\label{lem:polalg}
Let $Q_{\text{hom}}^d(x_1,\ldots,x_k)$ be the real vector space of homogeneous polynomials of degree $d$ in the $k$ variables $x_1,\ldots x_k$.
Then, 
\[ Q_{\text{hom}}^d(x_1,\ldots,x_k) = \text{span} \{ (\alpha_1 x_1+ \cdots \alpha_k x_k)^d \, | \, \alpha_1,\ldots,\alpha_d \in \rr \} \]
\end{lemma}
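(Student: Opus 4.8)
The inclusion $\spa\{(\alpha_1 x_1 + \cdots + \alpha_k x_k)^d \mid \alpha\in\rr^k\} \subseteq Q^d_{\text{hom}}$ is trivial, so the real content is the reverse inclusion, which I would establish by a dimension count. It is enough to show that the $d$-th powers of linear forms span a subspace of $Q^d_{\text{hom}}$ of the full dimension $\binom{d+k-1}{d}$, or equivalently that every monomial $x^a := x_1^{a_1}\cdots x_k^{a_k}$ with $|a| := a_1 + \cdots + a_k = d$ lies in their span.

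The plan is to expand, via the multinomial theorem,
\[
 (\alpha_1 x_1 + \cdots + \alpha_k x_k)^d \;=\; \sum_{|a|=d} \frac{d!}{a_1!\cdots a_k!}\, \alpha^a \, x^a , \qquad \alpha^a := \alpha_1^{a_1}\cdots\alpha_k^{a_k} ,
\]
so that, in the monomial basis of $Q^d_{\text{hom}}$, the vector $(\alpha\cdot x)^d$ has coordinates $\big(\tfrac{d!}{a_1!\cdots a_k!}\,\alpha^a\big)_{|a|=d}$. Since the multinomial coefficients are nonzero, it suffices to show that the coordinate vectors $\big(\alpha^a\big)_{|a|=d}$, indexed by $\alpha\in\rr^k$, span $\rr^{\binom{d+k-1}{d}}$. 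If they did not, there would be scalars $(c_a)_{|a|=d}$, not all zero, with $\sum_{|a|=d} c_a\,\alpha^a = 0$ for every $\alpha\in\rr^k$; but this is a polynomial identity in $\alpha$ over the infinite field $\rr$, forcing every $c_a$ to vanish --- a contradiction. Hence the span is all of $Q^d_{\text{hom}}$.

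I do not expect a genuine obstacle: the only way the argument could fail would be a hidden linear dependence among the powers, and that is exactly what the vanishing-polynomial step excludes. If one instead wanted an explicit expression of a prescribed monomial as a combination of $d$-th powers (which this paper does not require), polarization does the job: the operator $\frac{1}{d!}\,\partial_{\alpha_1}^{a_1}\cdots\partial_{\alpha_k}^{a_k}$ applied to $(\alpha\cdot x)^d$ and evaluated at $\alpha = 0$ returns exactly $x^a$, and replacing these derivatives by iterated finite differences exhibits $x^a$ as a finite linear combination of terms $(\alpha^{(j)}\cdot x)^d$ with explicit coefficients.
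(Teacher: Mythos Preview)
Your proof is correct and takes a genuinely different route from the paper's. The paper argues as follows: the span $W^d$ of the $d$-th powers is a subspace of the finite-dimensional space $Q^d_{\text{hom}}$, hence closed; since difference quotients $(f_\alpha - f_{\alpha'})/(\alpha_i - \alpha_i')$ lie in $W^d$, so do the partial derivatives $\partial_{\alpha_i} f_\alpha$, and iterating gives $\partial_{\alpha_1}^{a_1}\cdots\partial_{\alpha_k}^{a_k} f_\alpha \in W^d$ for $|a|=d$, which is a nonzero multiple of the monomial $x^a$.

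Your argument is a clean duality/dimension count: if the span were proper, some nonzero linear functional would annihilate it, yielding a nonzero polynomial in $\alpha$ vanishing on all of $\rr^k$. This is slightly more elementary in that it avoids any topological language (closure), relying only on the fact that a nonzero polynomial cannot vanish identically over an infinite field. The paper's approach, on the other hand, is more constructive --- it actually exhibits each monomial as a limit of elements of the span --- and is essentially the polarization idea you sketch in your final paragraph. Either argument is perfectly adequate here.
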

We remark that this a commutative result in the sense that we distinguish between the order of variables. For example, the polynomials $x_1 x_2$ and $x_2 x_1$ are considered to be the same.

\begin{proof}
	It is clear that 
	\[ W^d(x_1,\ldots,x_k)  \coloneqq \text{span} \{ (\alpha_1 x_1+ \cdots \alpha_k x_k)^d \, | \, \alpha_1,\ldots,\alpha_d \in \rr \}  \]
is a closed subspace of $Q_{\text{hom}}^d(x_1,\ldots,x_k)$.
Let $f_{\alpha} \coloneqq (\alpha_1 x_1+ \cdots \alpha_k x_k)^d$.
Note that $f_{\alpha} - f_{\alpha^\prime} \in  W^d(x_1,\ldots,x_k)$ and since $W^d(x_1,\ldots,x_k)$ is closed, we see that 
\[ \partial_{\alpha_i}  f_{\alpha} \in W^d(x_1,\ldots,x_k).  \]
Similarly, partial derivatives of higher order are element of $W^d(x_1,\ldots,x_k)$; in particular the $d$-th order derivatives which agree with the monomials of degree $d$. Thus, $W^d(x_1,\ldots,x_k)$ contains all monomials of degree $d$ and, hence, coincides with $Q_{\text{hom}}^d(x_1,\ldots,x_k)$.
\end{proof}

We are finally ready to spell out the 
\begin{proof}[Proof of Proposition~\ref{prop:Druff1}]
Due to linearity and Lemma~\ref{lem:polalg} it enough to prove the assertion for an operator of the form 
$ H = (   \mathbf{v}\cdot    \mathbf{S})^d $, 
where $\mathbf{v}$ is some unit vector in $\rr^3$, in which case 
the claim follows from Lemma~\ref{lem:assz}.
\end{proof}

\subsection{Semiclassics for the free energy}

\begin{proposition}\label{prop:free2}
For a Hamiltonian $ H $ of the form~\eqref{eq:Hblock} on $ \mathcal{H}_N $ with a regular symbol $ h \in C^2$ satisfying~\textbf{A0}, the pressure at inverse temperature $ \beta > 0 $ is:
\begin{equation}\label{eq:freee2}
p(\beta) := \lim_{N\to \infty} N^{-1} \ln \tr \exp\left( -\beta H  \right) = \max_{r \in [0,1]} \left\{ I(r) - \beta \min_{\Omega \in S^2}  h\left(r\mathbf{e}(\Omega) \right)  \right\} .
\end{equation}
\end{proposition}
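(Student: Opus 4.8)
The plan is to combine the orthogonal block decomposition supplied by assumption \textbf{A0} with the Berezin--Lieb inequalities \eqref{eq:BerezinLieb} applied on each block, followed by a uniform Laplace-type evaluation of the resulting integrals; the entropy $I$ then emerges from the Stirling asymptotics of the multiplicities $M_{N,J}$ in \eqref{def:blockH}. For the first reduction, abbreviate by $\mathcal{A}_J := \frac{2J+1}{4\pi}\int N h\big(\tfrac{2J}{N}\mathbf{e}(\Omega)\big)\,|\Omega,J\rangle\langle\Omega,J|\,d\Omega$ the operator appearing in \eqref{ass:symbol}. Since that assumption furnishes a constant $C$ with $\|H_{J,\alpha}-\mathcal{A}_J\|\le C$ for all $J,\alpha$, the operator inequalities $\mathcal{A}_J-C\le H_{J,\alpha}\le\mathcal{A}_J+C$ yield $e^{-\beta C}\,\tr e^{-\beta\mathcal{A}_J}\le\tr e^{-\beta H_{J,\alpha}}\le e^{\beta C}\,\tr e^{-\beta\mathcal{A}_J}$, and hence, using \eqref{eq:Hblock},
$$
\tr e^{-\beta H}=\sum_{J}\,\sum_{\alpha=1}^{M_{N,J}}\tr_{\mathbb{C}^{2J+1}}e^{-\beta H_{J,\alpha}}= e^{\pm\beta C}\sum_{J}M_{N,J}\,\tr_{\mathbb{C}^{2J+1}}e^{-\beta\mathcal{A}_J},
$$
so the $\mathcal{O}(1)$ error from \textbf{A0} disappears after applying $N^{-1}\ln(\cdot)$ (here $\mathcal{A}_J$ does not depend on $\alpha$).

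By construction $\mathcal{A}_J$ has the smooth upper symbol $Nh\big(\tfrac{2J}{N}\mathbf{e}(\Omega)\big)$, while Proposition~\ref{prop:Druff2} bounds its lower symbol below by $Nh\big(\tfrac{2J}{N}\mathbf{e}(\Omega)\big)-C'$ with $C'$ independent of $N,J$. Feeding these into \eqref{eq:BerezinLieb} sandwiches $\tr_{\mathbb{C}^{2J+1}}e^{-\beta\mathcal{A}_J}$ between $e^{-\beta C'}\Psi_N(J)$ and $\Psi_N(J)$, where $\Psi_N(J):=\frac{2J+1}{4\pi}\int e^{-\beta N h(\frac{2J}{N}\mathbf{e}(\Omega))}\,d\Omega$. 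The key step is then the \emph{uniform} Laplace estimate
$$
\sup_{J}\ \Big|\,N^{-1}\ln\Psi_N(J)\ +\ \beta\min_{\Omega\in S^2}h\big(\tfrac{2J}{N}\mathbf{e}(\Omega)\big)\,\Big|\ \longrightarrow\ 0\qquad(N\to\infty).
$$
The upper bound $N^{-1}\ln\Psi_N(J)\le -\beta\min_\Omega h(\tfrac{2J}{N}\mathbf{e}(\Omega))+N^{-1}\ln(2J+1)$ is immediate since $\beta>0$; for the matching lower bound one fixes $\epsilon>0$, invokes uniform continuity of $h$ on the compact ball $B_1$ to get $\delta>0$ with $|h(\mathbf{m})-h(\mathbf{m}')|<\epsilon$ whenever $|\mathbf{m}-\mathbf{m}'|\le\delta$, and, using the contraction bound $|\tfrac{2J}{N}\mathbf{e}(\Omega)-\tfrac{2J}{N}\mathbf{e}(\Omega')|\le|\mathbf{e}(\Omega)-\mathbf{e}(\Omega')|$ (valid because $2J/N\le 1$), restricts the integral to the spherical cap of Euclidean radius $\delta$ about a minimizing direction, whose area is some $c_\delta>0$ independent of that direction; this gives $\Psi_N(J)\ge\frac{(2J+1)c_\delta}{4\pi}e^{-\beta N(\min_\Omega h(\frac{2J}{N}\mathbf{e}(\Omega))+\epsilon)}$, and letting $N\to\infty$ and then $\epsilon\to 0$ closes the estimate. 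Note that $r\mapsto\min_{\Omega}h(r\mathbf{e}(\Omega))$ is continuous on $[0,1]$, being a minimum over the compact variable $\Omega$ of the continuous function $(r,\Omega)\mapsto h(r\mathbf{e}(\Omega))$.

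Finally, Stirling's formula applied to $M_{N,J}=\frac{2J+1}{N+1}\binom{N+1}{N/2+J+1}$ gives $N^{-1}\ln M_{N,J}=I(2J/N)+o(1)$ uniformly over $J\in\{\tfrac{N}{2}-\lfloor\tfrac{N}{2}\rfloor,\dots,\tfrac{N}{2}\}$. Combining the last three displays,
$$
N^{-1}\ln\tr e^{-\beta H}=N^{-1}\ln\sum_{J}\exp\!\Big(N\big[\,I(\tfrac{2J}{N})-\beta\min_{\Omega}h(\tfrac{2J}{N}\mathbf{e}(\Omega))+o(1)\,\big]\Big)+o(1),
$$
with the exponential $o(1)$ uniform in $J$. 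Since the sum has at most $\tfrac{N}{2}+1$ terms, $N^{-1}\ln\sum_J e^{N a_J}=\max_J a_J+o(1)$, so the right-hand side equals $\max_J\big[I(\tfrac{2J}{N})-\beta\min_\Omega h(\tfrac{2J}{N}\mathbf{e}(\Omega))\big]+o(1)$, which converges to $\max_{r\in[0,1]}\big[I(r)-\beta\min_\Omega h(r\mathbf{e}(\Omega))\big]$: the inequality ``$\le$'' is clear term by term, and ``$\ge$'' follows by choosing $J_N$ with $2J_N/N$ converging to a maximizer $r^\star$ and using continuity of $I$ on $[0,1]$ together with continuity of $r\mapsto\min_\Omega h(r\mathbf{e}(\Omega))$. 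This is \eqref{eq:freee2}; specialising to $h=P$ and using Proposition~\ref{prop:Druff1} to check \textbf{A0} re-derives the polynomial formula \eqref{eq:freee}.

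I expect the only genuinely non-routine point to be the uniformity in $J$ (equivalently in $r=2J/N\in[0,1]$) of the Laplace asymptotics of $\Psi_N(J)$, in particular in the regime $r\to 0$ where the sphere $r S^2$ degenerates to a point; the elementary cap estimate above handles this precisely because the contraction bound $|r\mathbf{e}(\Omega)-r\mathbf{e}(\Omega')|\le|\mathbf{e}(\Omega)-\mathbf{e}(\Omega')|$ is uniform in $r$. All remaining ingredients are standard bookkeeping: the $\mathcal{O}(1)$ operator-norm error from \textbf{A0} and the $\mathcal{O}(1)$ lower-symbol error from Proposition~\ref{prop:Druff2} are washed out by $N^{-1}\ln(\cdot)$, and the Stirling asymptotics of $M_{N,J}$ are classical.
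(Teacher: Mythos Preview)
Your proof is correct and follows essentially the same route as the paper's own argument: block decomposition via \textbf{A0}, Berezin--Lieb together with Proposition~\ref{prop:Druff2} to reduce each block to the spherical integral $\Psi_N(J)$, a Laplace estimate on $\Psi_N(J)$, and Stirling for $M_{N,J}$. If anything, you are more careful than the paper, which simply invokes a ``standard Laplace evaluation'' at fixed $r\in(0,1]$ without spelling out the uniformity in $J$; your cap argument with the contraction bound $|r\mathbf{e}(\Omega)-r\mathbf{e}(\Omega')|\le|\mathbf{e}(\Omega)-\mathbf{e}(\Omega')|$ handles this cleanly. One minor slip in wording: to obtain the \emph{lower} sandwich $\tr e^{-\beta\mathcal{A}_J}\ge e^{-\beta C'}\Psi_N(J)$ via the left Berezin--Lieb inequality you need the \emph{upper} bound $G(\Omega,J)\le Nh+C'$ on the lower symbol, not the lower bound you cite; but Proposition~\ref{prop:Druff2} supplies both directions, so the conclusion stands.
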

\begin{proof} 
We first use the block decomposition~\eqref{eq:Hblock} in the trace to write
$$
 \tr \exp\left( -\beta H  \right) = \sum_{J=\frac{N}{2} - \lfloor \frac{N}{2}\rfloor}^{N/2} \sum_{\alpha=1}^{M_{N,J}}  \tr_{\mathbb{C}^{2J+1}} \exp\left( -\beta H_{J,\alpha}   \right)
$$
By~\eqref{ass:symbol} the logarithm of the trace in the right side bounded according to 
$$
\sup_{J,\alpha} \left| \ln \tr_{\mathbb{C}^{2J+1}} \exp\left( -\beta H_{J,\alpha}   \right) -  \ln \tr_{\mathbb{C}^{2J+1}} \exp\left( -\beta \frac{2J+1}{4\pi} \int  N h\Big(\frac{2J}{N}  \mathbf{e}(\Omega) \Big) \,  \big| \Omega, J \rangle \langle \Omega, J \big| \, d\Omega \right) \right|  = O(1) .
$$
We then use the Berezin-Lieb inequalities~\eqref{eq:BerezinLieb} and Proposition~\ref{prop:Druff2}, which ensures that the upper and lower symbol agree up to order one, to replace the last trace by an integral, 
$$
\sup_{J,\alpha} \left|  \ln  \tr_{\mathbb{C}^{2J+1}} \exp\left( -\beta H_{J,\alpha}   \right) -  \ln \frac{2J+1}{4\pi} \int \exp\left(-\beta N  h\Big(\frac{2J}{N}  \mathbf{e}(\Omega) \Big)\right) d\Omega \right| = O(1) .
$$
In order to use a standard Laplace evaluation for the above integral in the limit $ N \to \infty $, we set $ J = r N/2 $ with fixed $ r \in (0,1] $. Since $ h \in C^2 $, we arrive at
$$
\frac{1}{N} \ln  \frac{2J+1}{4\pi} \int \exp\left(-\beta N  h\Big(\frac{2J}{N}  \mathbf{e}(\Omega) \Big)\right) d\Omega = - \beta  \min_{\Omega} h\left(r  \mathbf{e}(\Omega) \right) + o(1) . 
$$
The assertion then follows from the known asymptotics of the binomial coefficient in~\eqref{def:blockH} for $ M_{N,J} $, i.e.
$
\frac{1}{N} \ln M_{N,J} = I(r) + o(1) $. 
\end{proof}

\minisec{Acknowledgments}
We thank the referees for their suggestions to improve the manuscript. This work was supported by the Deutsche Forschungsgemeinschaft (DFG) TRR 352 Project-ID 470903074.

\bibliographystyle{plain}

\begin{thebibliography}{10}

\bibitem{Affleck:1989aa}
I.~Affleck.
\newblock Quantum spin chains and the {H}aldane gap.
\newblock {\em Journal of Physics: Condensed Matter}, 1(19):3047, 1989.

\bibitem{Affleck:1986aa}
I.~Affleck and E.~H. Lieb.
\newblock A proof of part of {H}aldane's conjecture on spin chains.
\newblock {\em Letters in Mathematical Physics}, 12(1):57--69, 1986.

\bibitem{ACGT72}
F.~T. Arecchi, E.~Courtens, R.~Gilmore, and H.~Thomas.
\newblock Atomic coherent states in quantum optics.
\newblock {\em Phys. Rev. A}, 6:2211--2237, 1972.

\bibitem{Bapst_2012}
V.~Bapst and G.~Semerjian.
\newblock On quantum mean-field models and their quantum annealing.
\newblock {\em Journal of Statistical Mechanics: Theory and Experiment},
  2012(06):P06007, 2012.

\bibitem{MMBook14}
J.~Bartolom{\'e}, F.~Luis, and J.~F. Fern{\'a}ndez, editors.
\newblock {\em Molecular Magnets}.
\newblock NanoScience and Technology (NANO). Springer, 2014.

\bibitem{Ber72}
F.~Berezin.
\newblock Convex functions of operators.
\newblock {\em Mat. Sb.}, 88:268--276, 1972.

\bibitem{Biskup:2007aa}
M.~Biskup, L.~Chayes, and S.~Starr.
\newblock Quantum spin systems at positive temperature.
\newblock {\em Communications in Mathematical Physics}, 269(3):611--657, 2007.

\bibitem{Bjo16}
J.~E. Bj{\"o}rnberg.
\newblock The free energy in a class of quantum spin systems and interchange
  processes.
\newblock {\em Journal of Mathematical Physics}, 57(7):073303, 2016.

\bibitem{Bjornberg:2020aa}
J.~E. Bj{\"o}rnberg, J.~Fr{\"o}hlich, and D.~Ueltschi.
\newblock Quantum spins and random loops on the complete graph.
\newblock {\em Communications in Mathematical Physics}, 375(3):1629--1663,
  2020.

\bibitem{BjRR22}
J.~E. Bj{\"o}rnberg, H.~Rosengren, and K.~Ryan.
\newblock {H}eisenberg models and {S}chur--{W}eyl duality.
\newblock ArXive:2201.10209.

\bibitem{BMT66}
R.~Brout, K.~A. M{\"u}ller, and H.~Thomas.
\newblock Tunnelling and collective excitations in a microscopic model of
  ferroelectricity.
\newblock {\em Solid State Communications}, 4:507--510, 1966.

\bibitem{Bulekov:2021aa}
A.~Bulekov.
\newblock Semiclassical approximation for the {C}urie--{W}eiss model.
\newblock {\em Journal of Physics: Conference Series}, 1740(1):012069, 2021.

\bibitem{Cega:1988aa}
W.~Ceg{\l}a, J.~T. Lewis, and G.~A. Raggio.
\newblock The free energy of quantum spin systems and large deviations.
\newblock {\em Communications in Mathematical Physics}, 118(2):337--354, 1
  1988.

\bibitem{Chayes:2008aa}
L.~Chayes, N.~Crawford, D.~Ioffe, and A.~Levit.
\newblock The phase diagram of the quantum {C}urie-{W}eiss model.
\newblock {\em Journal of Statistical Physics}, 133(1):131--149, 2008.

\bibitem{Cirac:1998aa}
J.~I. Cirac, M.~Lewenstein, K.~M{\o}lmer, and P.~Zoller.
\newblock Quantum superposition states of {B}ose-{E}instein condensates.
\newblock {\em Physical Review A}, 57(2):1208--1218, 02 1998.

\bibitem{CR12}
M.~Combescure and D.~Robert.
\newblock {\em Coherent States and Applications in Mathematical Physics}.
\newblock Springer, 2012.

\bibitem{Duffield:1990aa}
N.~G. Duffield.
\newblock Classical and thermodynamic limits for generalised quantum spin
  systems.
\newblock {\em Communications in Mathematical Physics}, 127(1):27--39, 1 1990.

\bibitem{FSV80}
M.~Fannes, H.~Spohn, and A.~Verbeure.
\newblock Equilibrium states for mean field models.
\newblock {\em Journal of Mathematical Physics}, 21(2):355--358, 1980.

\bibitem{Frohlich:2007aa}
J.~Fr{\"o}hlich, A.~Knowles, and E.~Lenzmann.
\newblock Semi-classical dynamics in quantum spin systems.
\newblock {\em Letters in Mathematical Physics}, 82(2):275--296, 2007.

\bibitem{Gaz09}
J.-P. Gazeau.
\newblock {\em Coherent states in quantum physics}.
\newblock Wiley, 2009.

\bibitem{LMG65-3}
A.J. Glick, H.J. Lipkin, and N.~Meshkov.
\newblock Validity of many-body approximation methods for a solvable model:
  (iii). diagram summations.
\newblock {\em Nuclear Physics}, 62(2):211--224, 1965.

\bibitem{Grech:2013bh}
P.~Grech and R.~Seiringer.
\newblock The excitation spectrum for weakly interacting bosons in a trap.
\newblock {\em Communications in Mathematical Physics}, 322(2):559--591, 2013.


\bibitem{GustSig11}
S. J. Gustafson and I. M. Sigal. 
\newblock {\em Mathematical Concepts of Quantum Mechanics}. Second edition.
\newblock Springer, Heidelberg 2011. 


\bibitem{Haldane:1983aa}
F.~D.~M. Haldane.
\newblock Continuum dynamics of the 1-{D} {H}eisenberg antiferromagnet:
  Identification with the {O}(3) nonlinear sigma model.
\newblock {\em Physics Letters A}, 93(9):464--468, 1983.

\bibitem{Hall13}
B.~C. Hall.
\newblock {\em Quantum Theory for Mathematicians}.
\newblock Springer New York, NY, 2013.

\bibitem{Helffer:1984aa}
B.~Helffer and J.~Sjostrand.
\newblock Multiple wells in the semi-classical limit i.
\newblock {\em Communications in Partial Differential Equations},
  9(4):337--408, 01 1984.

\bibitem{HeppLieb73}
K.~Hepp and E.~H. Lieb.
\newblock Equilibrium statistical mechanics of matter interacting with the
  quantized radiation field.
\newblock {\em Physical Review A}, 8:2517--2525, 1973.

\bibitem{Jona-Lasinio:1981aa}
G.~Jona-Lasinio, F.~Martinelli, and E.~Scoppola.
\newblock New approach to the semiclassical limit of quantum mechanics - i.
  multiple tunnelings in one dimension.
\newblock {\em Communications in Mathematical Physics}, 80(2):223--254, 1981.

\bibitem{KNS01}
T.~Koma, B.~Nachtergaele, and S.~Starr.
\newblock The spectral gap for the ferromagnetic spin-{J} {XXZ} chain.
\newblock {\em Adv. Theor. Math. Phys.}, 5:1047--1090, 2001.

\bibitem{Kutzner:1973rz}
J.~Kutzner.
\newblock Eine {P}hasenraumdarstellung f{\"u}r {S}pinsysteme.
\newblock {\em Zeitschrift f{\"u}r Physik A}, 259(2):177--188, 1973.

\bibitem{Landsman:2020aa}
K.~Landsman, V.~Moretti, and C.~J.~F. van~de Ven.
\newblock Strict deformation quantization of the state space of
  ${M}_k(\mathbb{C})$ with applications to the {C}urie--{W}eiss model.
\newblock {\em Reviews in Mathematical Physics}, 32(10):2050031, 2020.

\bibitem{Lewin:2015ay}
M.~Lewin, P.~T. Nam, S.~Serfaty, and J.~P. Solovej.
\newblock Bogoliubov spectrum of interacting Bose gases.
\newblock {\em Communications on Pure and Applied Mathematics}, 68(3):413--471,
  2015.

\bibitem{Lieb73}
E.~H. Lieb.
\newblock The classical limit of quantum spin systems.
\newblock {\em Commun. Math. Phys.}, 31:327--340, 1973.

\bibitem{LMG65-1}
H.J. Lipkin, N.~Meshkov, and A.J. Glick.
\newblock Validity of many-body approximation methods for a solvable model:
  (i). exact solutions and perturbation theory.
\newblock {\em Nuclear Physics}, 62(2):188--198, 1965.

\bibitem{MW23}
C.~Manai and S.~Warzel.
\newblock Spectral analysis of the quantum random energy model.
\newblock {\em Commun. Math. Phys.}, 402, 1259--1306 (2023). 

\bibitem{LMG65-2}
N.~Meshkov, A.J. Glick, and H.J. Lipkin.
\newblock Validity of many-body approximation methods for a solvable model:
  (ii). linearization procedures.
\newblock {\em Nuclear Physics}, 62(2):199--210, 1965.

\bibitem{MN04}
T.~Michoel and B.~Nachtergaele.
\newblock Central limit theorems for the large-spin asymptotics of quantum
  spins.
\newblock {\em Probab. Theory Relat. Fields}, 130:493--517, 2004.

\bibitem{MN05}
T.~Michoel and B.~Nachtergaele.
\newblock The large-spin asymptotics of the ferromagnetic {XXZ} chain.
\newblock {\em Markov Processes and Related Fields}, 11:237--266, 2005.

\bibitem{Mihailov:1977aa}
V~V Mihailov.
\newblock Addition or arbitrary number of identical angular momenta.
\newblock {\em Journal of Physics A: Mathematical and General}, 10(2):147,
  1977.

\bibitem{Ortiz:2005aa}
G.~Ortiz, R.~Somma, J.~Dukelsky, and S.~Rombouts.
\newblock Exactly-solvable models derived from a generalized {G}audin algebra.
\newblock {\em Nuclear Physics B}, 707(3):421--457, 2005.

\bibitem{Pan:1999aa}
F.~Pan and J.~P. Draayer.
\newblock Analytical solutions for the {L}{M}{G} model.
\newblock {\em Physics Letters B}, 451(1):1--10, 1999.

\bibitem{Per86}
A.~Perelomov.
\newblock {\em Generalized coherent states and their applications}.
\newblock Springer, 1986.

\bibitem{Pres09}
E.~Presutti, editor.
\newblock {\em Mean field, Kac potentials and the Lebowitz--Penrose limit},
  pages 99--118.
\newblock Springer Berlin Heidelberg, Berlin, Heidelberg, 2009.

\bibitem{RagWer89}
G.~A. Raggio and R.~Werner.
\newblock The statistical mechanics of general mean-field systems.
\newblock {\em Helvetica Physica Acta}, 980--1003, 1989.

\bibitem{RS80}
M.~Reed and B.~Simon.
\newblock {\em Methods in Modern Mathematical Physics I: Functional Analysis}.
\newblock Academic Press, 1981.

\bibitem{Ribeiro:2008aa}
P.~Ribeiro, J.~Vidal, and R.~Mosseri.
\newblock Exact spectrum of the {L}ipkin-{M}eshkov-{G}lick model in the
  thermodynamic limit and finite-size corrections.
\newblock {\em Physical Review E}, 78(2):021106, 2008.


\bibitem{Simon:1980}
 B.~Simon.
\newblock 
The classical limit of quantum partition functions, 
\newblock 
{\em Commun. Math. Phys.}, 71: 247--276 (1980).
\bibitem{Simon:1985aa}
B.~Simon.
\newblock Semiclassical analysis of low lying eigenvalues. iv. the flea on the
  elephant.
\newblock {\em Journal of Functional Analysis}, 63(1):123--136, 1985.

\bibitem{SZ03}
J.~Sj{\"o}strand and M.~Zworski.
\newblock Elementary linear algebra for advanced spectral problems.
\newblock https://doi.org/10.48550/arXiv.math/0312166.

\bibitem{Turbiner:1988aa}
A.~V. Turbiner.
\newblock Quasi-exactly-solvable problems and $sl(2)$ algebra.
\newblock {\em Communications in Mathematical Physics}, 118(3):467--474, 1988.

\bibitem{Ven:2020aa}
C.~J.~F. van~de Ven.
\newblock The classical limit of mean-field quantum spin systems.
\newblock {\em Journal of Mathematical Physics}, 61(12):121901, 2020.

\bibitem{VGR20}
C.~J.~F. van~de Ven, G.~C. Groenenboom, R.~Reuvers, and N.~P. Landsman.
\newblock Quantum spin systems versus {S}chr{\"o}dinger operators: a case study
  in spontaneous symmetry breaking.
\newblock {\em SciPost Phys.}, 8:022, 2020.

\bibitem{Zh05}
F.~Zhang.
\newblock In {\em The Schur Complement and Its Applications}. Springer New
  York, NY, 2005.

\bibitem{Zwo12}
M.~Zworski.
\newblock {\em Semiclassical Analysis}.
\newblock AMS, 2012.

\end{thebibliography}

\end{document}